\newcommand{\SubmissionOmit}[1]{#1} 
\newcommand{\FullversionOmit}[1]{} 
\newcommand{\E}{\mathbb{E}}
\newcommand{\rev}{\mathcal{R}}
\newcommand\Tstrut{\rule{0pt}{5ex}}
\newtheorem{theorem}{Theorem}
\newtheorem{lemma}[theorem]{Lemma}
\newtheorem{defn}[theorem]{Definition}
\def\squareforqed{\hbox{\rule{2.5mm}{2.5mm}}}
\def\QED{\ifmmode\squareforqed 
  \else{\nobreak\hfil   
    \penalty50                 
    \hskip1em                  
    \null                      
    \nobreak                   
    \hfil                      
    \squareforqed              
    \parfillskip=0pt           
    \finalhyphendemerits=0     
    \endgraf}                  
  \fi}
\def\blksquare{\rule{2mm}{2mm}}
\def\qedsymbol{\blksquare}
\newcommand{\bg}[1]{\medskip\noindent{\bf #1}}
\newcommand{\ed}{{\hfill\qedsymbol}\medskip}
\newenvironment{proofof}[1]{{\it{Proof of #1 : }}}{\ed}
\newenvironment{example}{\bg{Example. }}{\ed}
\newcommand{\R}{\ensuremath{\mathbb R}}
\newcommand{\N}{\ensuremath{\mathbb N}}
\newcommand{\F}{\ensuremath{\mathcal F}}
\newcommand{\floor}[1]{\ensuremath{\left\lfloor#1\right\rfloor}}
\newcommand{\comment}[1]{}
 {}
\newcommand{\junk}[1]{}
\newlength{\tmp} \newlength{\lpsx} \newlength{\lpsy} \newlength{\upsx} \newlength{\upsy}
\newcommand{\poa}{\text{\textsc{PoA}} }
\newcommand{\opt}{\text{\textsc{Opt}} }
\newcommand{\specialcell}[2][c]{%
  \begin{tabular}[#1]{@{}c@{}}#2\end{tabular}}
\newcommand{\lm}{\ensuremath{\lambda,\mu}}
\newcommand{\Gn}{\ensuremath{\{G^n\}}}
\newcommand{\Mn}{\ensuremath{\{\M^n\}}}
\newcommand{\Seq}[1]{\ensuremath{\{#1^n\}}}
\newcommand{\SC}{\ensuremath{SC}}
\newcommand{\M}{\ensuremath{\mathcal M}}
\newcommand{\V}{\ensuremath{\mathcal V}}
\newcommand{\BCCE}{\text{\textsc{BNE}}}
\newcommand{\B}{\ensuremath{\mathcal{B}}}
\begin{document}

\title{The Price of Anarchy in Large Games}

\author{Michal Feldman\thanks{Tel-Aviv University, {\tt mfeldman@post.tau.ac.il}}
\and
Nicole Immorlica\thanks{Microsoft Research, {\tt nicimm@microsoft.com}}
\and
Brendan Lucier\thanks{Microsoft Research, {\tt brlucier@microsoft.com}}
\and
Tim Roughgarden\thanks{Stanford University, {\tt tim@cs.stanford.edu}}
\and
Vasilis Syrgkanis\thanks{Microsoft Research, {\tt vasy@microsoft.com}}
}
\date{}

\maketitle

\begin{abstract}
Game-theoretic models relevant for computer science applications usually feature a large number of players. The goal of this paper is to develop an analytical framework for bounding the price of anarchy in such models. We demonstrate the wide applicability of our framework through instantiations for several well-studied models, including simultaneous single-item auctions, routing games, and greedy combinatorial auctions. We identify conditions under which the POA of large games is better than that of worst-case instances. Our results also give new senses in which simple auctions can perform almost as well as optimal ones in realistic settings.
\end{abstract}

\thispagestyle{empty}
\newpage 
\setcounter{page}{1}

\section{Introduction}

Game theory is an indispensable tool for reasoning about
a wide range of computer science applications, such as routing,
network formation, bandwidth pricing, and automated auctions.
However, the game-theoretic models relevant for these applications differ
from the classical examples that grace the opening pages of every
textbook on the subject.
Most of the latter are games with only few, or even two, players ---
the Prisoner's Dilemma, Battle of the Sexes, Matching Pennies, and so
on.
This focus on games with a small number players dates back to the origin
of game theory~\cite{vNM}.
In typical computer science applications, the number of players is
{\em large}.

Are games with many players harder or easier to understand than those
with few?
One answer is ``harder:''
the normal-form representation of a game grows exponentially with the
number of players~$k$, and $k$-player games can be viewed as
special cases of $(k+1)$-player games.
But for many natural classes of games, we might hope that the answer
is ``easier:'' perhaps the influence of each player on the outcome is
small, which in turn makes their optimal behavior easy to
characterize.  For example, in many market settings with a large
number of small players, the action of a single player has only
negligible effect on prices, and thus players can be
accurately modeled as ``price-takers.''


The goal of this paper is to develop an analytical framework for bounding
the inefficiency of equilibria --- the {\em price of anarchy (POA)}
--- in well-motivated classes of games with a large number of players.
In all of the specific models studied in this paper,
worst-case POA is well understood.
These bounds, like any worst-case bounds, tend to be overly
pessimistic and are determined by pathological examples.
Our aim here is to show qualitatively better POA bounds for these
models, assuming only that the games are ``large.''

Our framework and results (detailed below) are chosen with
four
desiderata in mind.
%
First, the framework should be general enough to encompass many
different models in which the POA has been studied.
%
Second, the framework should be relatively easy to use, to maximize
its future applicability.
%
Third, the framework should provide new insights about which mechanisms
are likely to perform well in realistic settings.
%
Finally, the framework should differentiate notions of "largeness"
that lead to smaller inefficiency from those that do not.


\subsection{Summary of Results}

We define {\em $(\lm)$-smooth in the large} game sequences, and show
that the POA of games in such a sequence approaches
$\lambda/(1-\mu)$.  This notion inherits the generality and robustness
of previous smoothness definitions~\cite{Roughgarden2009,Roughgarden2012,Syrgkanis2012,Syrgkanis2013}: many different
applications are amenable to a smoothness-type analysis, and the
resulting POA bounds apply to a wide range of equilibria, such as
correlated and Bayes-Nash equilibria.

We also propose an intuitive sufficient condition for a game sequence
to be smooth in the large.  This condition is relatively easy to
apply, and  most of our specific results are derived from it.
The idea is to formalize the intuition that large
games are more efficient because no individual can significantly
affect the game's outcome.
Precisely, one defines for each player an
{\em approximate utility}, intended to represent the utility a player
expects to receive after incorrectly assuming that
his strategy has no effect on his utility.\footnote{This is purely for
the sake of analysis; the resulting POA bound
applies to the equilibria of the actual
game.}
The sufficient condition requires that the
approximate utility is  $(\lambda,\mu)$-smooth with respect to
the actual game.
We prove that if this condition is satisfied,
then the game sequence is $(\lambda,\mu)$-smooth in the large.

We apply our definitions to obtain POA bounds for ``large versions" of
a number of well-studied models.
\niedit{Our flagship example is simultaneous single-item auctions where we obtain full efficiency in the large even with general combinatorial valuations.  We then demonstrate the versatility of our framework by applying it to routing games.  In the appendix, we also consider greedy combinatorial auctions.}

\textbf{Simultaneous Uniform Price Auctions.}
We highlight the application of
our framework to combinatorial auctions.
First,
we show that in a ``large'' combinatorial auction
setting, running separate simultaneous uniform price auctions for each
type of good in the market leads to a fully efficient allocation in
the limit. Specifically, we consider a setting where a fixed set of
$m$ different goods, each of some supply, are auctioned off to a set
of $n$ bidders. Bidders have combinatorial valuations over sets of
allocated units of different items and we assume that they only want
some fixed number $r$ of units from each individual good. We consider
the following way of growing the market: the number of players
increases, the number of units of supply of each good increases with
the number of players, and each player fails to arrive in the market
with some probability $\delta>0$. Under these conditions we show that
the worst-case expected welfare of Bayes-Nash equilibria of these games
converges \niedit{at a rate of $O(1-1/\sqrt{n})$} to the expected optimal welfare. We make no assumptions about the
distributions of the buyers, other than that they are independent.

Our result should be contrasted with worst-case price of anarchy
bounds for simultaneous second price auctions (a special case of
simultaneous uniform price auctions), where for general combinatorial
valuations \niedit{with complementarities} (even when a player wants at most one unit of each item),  such
an auction cannot achieve better than an $O(\sqrt{m})$-approximation in
the worst-case.  A striking feature of this result is that full
efficiency is obtained even though bidders are forced to report far
fewer parameters (polynomial in $m$) than are present in their
(combinatorial) valuations.  Thus the near-optimal equilibria are not
``truthful outcomes'' in any sense.  This highlights the power of
adopting a smoothness-based framework --- we can prove convergence
to full efficiency without ever characterizing what these near-optimal
equilibria look like.


\vsedit{\textbf{Congestion Games.}
For a second application, we apply our framework to congestion games and show a smooth convergence of the price of anarchy of atomic
congestion games to their non-atomic counterparts, as the number of
players grows large.
Specifically, we
analyze the price of anarchy of sequence of congestion games where
each player controls a smaller and smaller fraction of a fixed traffic
rate. We show through the smoothness approach that the price of
anarchy converges at a multiplicative error of $1+o(1/n)$ to the
non-atomic price of anarchy for the same class of latency functions,
where $n$ is the number of players and assuming that each player
controls a flow of $1/n$.   The POA in nonatomic congestion games is
generally much smaller than in atomic congestion games (see
e.g.~\cite{rg}).}

\vsedit{\textbf{Greedy Combinatorial Auctions.}
A second type of auction we analyze with our framework is a greedy
combinatorial auction setting where each player is interested in a
specific bundle of items of size at most $d$ and each bidder might
want many copies of his bundle, but at most some fixed constant
$r$. Similar to the previous auction setting, there is some fixed set of $m$
different goods and each good is in some supply. We consider the
auction where each player submits a set of $r$ marginal bids for his
interest set. The bids are ordered in decreasing order and bids are
allocated in decreasing order as long as they are satisfiable, i.e., no
item in the interest set has run out of supply. We show that as the
number of players grows and the supply of each good grows in
expectation, but is sufficiently uncertain, then the Bayes-Nash
price of anarchy of the greedy mechanism converges to $d$, which
matches the algorithmic approximation of the algorithm without
incentives.
}

\vspace{0.1in}

\noindent\textbf{Discussion of Results.}
We believe that our results are interesting for several reasons.
First, we prove that in many of the game-theoretic models in which the
POA has been studied, the POA in large games is much smaller than the
worst-case bound.
In some (but not all) cases, the POA approaches 1 as the game grows
large and full efficiency is recovered in the limit.
We suspect that our better POA bounds are more
relevant for many computer science settings, which often feature a
large number of ``small'' players in uncertain environments.

Second, our results have interesting implications for mechanism
design.  Theoretically optimal mechanisms can be relatively complex
--- for example, when bidders demand multiple units of a good, the
welfare-maximizing (i.e., VCG) mechanism charges different prices for
different units of the good.
Some of our results give a new sense in which ``simple'' mechanisms
can be near-optimal.  For example, we prove that the (theoretically
suboptimal) uniform-price mechanism has welfare approaching that of
the VCG mechanism as the market grows large.

Third, our results demonstrate that not all notions of a ``large
game'' are equivalent.
Our analysis framework
identifies features of a setting, including plausible types of environmental uncertainty, that lead to more
efficient equilibria.  We elaborate on this point when we discuss
our techniques, below.









\subsection{Our Techniques}


At first blush, it may sound obvious that letting the number of
players tend to infinity results in a ``limit game'' with more
efficient equilibria.  But this point turns out to be delicate: 
this intuition is true for some but not all natural ways of defining
``large'' games.  We use our framework to guide us to assumptions
that plausibly hold in real-world settings and also lead to improved
POA bounds.

For example, in simultaneous uniform price auctions, it is not enough
to merely assume that the number of players is sufficiently large ---
we are assuming no prior distribution, so one can always add ``dummy
players'' to an arbitrary game without affecting the equilibria (or the
POA).  A similar comment applies \niedit{even in an IID Bayesian setting} if we also assume merely that the number of players and available items tend to infinity.
This is illustrated in the following example, taken from \cite{Swinkels2001}.

\vsedit{
\begin{example}(Inefficiency without randomness) Consider a setting where $k$ units of a good are auctioned off to $n=k$ bidders. Each bidder wants at most two units. The value of each bidder for the first unit is the maximum of two  random samples drawn uniformly from the interval $[2,3]$ and independently for each bidder. The marginal value for the second unit is the minimum of the two samples. The units are sold via a uniform price auction: each bidder submits two marginal bids, the $k$ highest marginal bids win and each bidder pays the highest losing marginal bid for each unit he won. We next show that the following bidding is an equilibrium of the game, for any $k$: each bidder submits his higher marginal truthfully and $0$ as his second marginal. The uniform price is $0$ and each bidder derives utility $v_i^1$. For any bidder to get a second unit he needs to bid at least $2$, which would increase the uniform price to at least $2$ on both units. The increase in payment is $4$, while the increase in value is at most $3$, hence not profitable. Under this equilibrium the expected welfare is $k\cdot 2.67$. The expected optimal welfare is the expected sum of the highest $k$ of $2k$ samples from $U[2,3]$, which is approximately $k\cdot 2.75$ as $k$ grows large.
\end{example}}

To make
progress, we build on an idea introduced previously in the economics
literature~\cite{Swinkels2001}, namely {\em probabilistic demand}.  The
idea here is to introduce demand uncertainty by randomly
removing bidders with a small constant probability.  This corroborates
well with many real-world auction settings (especially in automated
auctions, where the participants might well be chosen by a search
engine's heuristic matching algorithm), where bidders cannot be sure
who else will show up and be chosen to participate in an auction.

For another example, in combinatorial greedy auctions, we prove that
the demand uncertainty that we use above is not enough to improve the POA
as much as one would expect.
Intuitively, the reason is that the presence or absence of a single
bidder can have cascading effects on the demands of other bidders
whose demand sets intersect his.
On the other hand, we prove positive results with supply uncertainty
by randomly varying the number of units of each good.  Again, for many
automated auctions (e.g., where the number of goods might correspond
to a number of queries for a given term or the number of views of
a page), this is a quite palatable assumption.

At a technical level, our framework indicates which assumptions lead
to better POA bounds and which do not.
The primary challenge in applying our framework is to define the
approximate utility of players such
that it both approximates the actual utility and is smooth with
respect to the actual game.
Designing these approximate utilities
requires understanding the approximations players may
reasonably sustain regarding the game they are playing in large
markets (again, this is for the analysis only, not a behavorial
assumption).\footnote{Though as a side benefit,
our results also imply that
playing equilibria
with respect to the approximate utilities will form an approximate
equilibria of the large market that retains all the efficiency
properties we prove.  Conceivably, this could be an accurate model of how
players behave in large and complex games.}
For example, in auctions, in our analysis, we think of players as
ignoring
the impact of their bids on the prices.
Our framework works only when
the relevant parameters of the approximate utility
functions approach the real utility functions in the actual game, and
at this level it is analytically tractable to identify which types of
uncertainties are sufficient and which are not.



Finally, our framework demands that
the approximate utilities are smooth with
respect to the actual game.  This step requires some technical
finesse: the smoothness arguments work by defining deviations for
players that are functions of the types.  Often, the deviations in
auctions ask players to bid high values on their optimal allocation.
However, in the presence of noise, the optimal allocation is a random
variable.  One technical contribution is to circumvent
this difficulty by re-interpreting the noise as a type uncertainty in a
Bayesian game.
This step is also possible (with good smoothness parameters) only for
some types of noise.
In particular, Section~\ref{sec:inefficiency-supply} shows that supply uncertainty is {\it
not} sufficient to obtain fully efficient equilibria in
simlutaneous uniform price auctions.  Intuitively, the reason is that
supply uncertainty makes it hard for bidders to figure out what items to bid
on in the deviation, thus introducing a search friction that
persists as the market grows large.\footnote{More generically, supply
uncertainty only works for settings where the deviation in the
smoothness proof depends only on the valuation of the individual
bidder and is independent of the valuation profile of other bidders.}

\subsection{Related Work}

Most previous work in computer science that concentrates specifically on
games with many players is motivated by complexity concerns.  For example,
the literature on ``compact representations" of games proposes succinct
descriptions, with size polynomial in the number of players,
that are well structured but still rich enough to capture many
interesting applications.
See~\cite{kearns07,leytonbrown03,shoham_book} and the
references therein for many examples.
These references also discuss the well-studied problem of computing
equilibria efficiently in compactly represented multi-player games.

A few recent works in computer science and operations research study
somewhat related notions of ``large games," with different goals than
ours.  Kearns et al.~\cite{KPRU14} study games where no player's
action can affect the payoff of any other player by more than a small
amount, and give an algorithm that computes a correlated equilibrium
while satisfying strong incentive and privacy guarantees.
This result was strengthen for large routing games in~\cite{RR14}.
Pai et al.~\cite{PRU14} discuss the extent to which ``folk theorems" of
repeated games continue to hold in such large games.
Recent work that analyzes ``mean field equilibria"
(e.g.~\cite{IJS11}) effectively assumes that
players are small enough that each player can model the rest as
a population, rather than at an individual level.  Classical work in
game theory on ``nonatomic games" (e.g.~\cite{schmeidler73}
and~\cite{aumann-shapley}) also has this flavor.

There is an old if modest tradition in economics of considering large
markets; see~\cite{RP76} for an early example
and Kalai~\cite{Kalai2004} for work on the robustness of
equilibria in large games. Closest to our work is that of Swinkels~\cite{Swinkels2001} who studies a single-good uniform price auction with decreasing marginal valuations and with demand and supply uncertainties under the same large market assumptions that we make. Our work uses a similar intuition regarding the insensitivity of prices in conditions of noisy demand or supply and combines it with techniques taken from the price of anarchy framework to generalize the setting for which full optimality is achieved. Our simultaneous uniform price auction result generalizes the supply uncertainty result of \cite{Swinkels2001} to allow for heterogeneous goods and our greedy auction result generalizes the demand uncertainty result of \cite{Swinkels2001} by allowing bids on bundles of items, rather than single items (the uniform price auction is a special case of the greedy auction for $d=1$). Our framework also allows us to relax some technical assumptions made in \cite{Swinkels2001} regarding the valuation distributions.

The perspective and goals of these works in the economics literature
differ from ours in several predictable ways.
Their emphasis has been on understanding what equilibria ``look like",
and ideally solving for them explicitly (if not in large finite games,
then at least ``in the limit"); a technically difficult subproblem that
often arises in this approach is to prove that the equilibria of large
finite games approach the equilibria of a ``limit game".
Because we care about equilibria only through their objective function
value, we can bypass the problem of characterizing equilibria and
their limits, and instead argue directly about the approximation
guarantees obtained.
In addition, all previous work in economics on efficiency in large
games considered only the special case of full efficiency in the
limit, as in our result on uniform-price auctions.  No previous work
considered models where inefficiency persists in the limit, as with
several of our other results.  Our smoothness-based framework is
general enough to cover both situations with a common analysis.

Other perspectives of large markets were studied recently by Alvzedo and Budish~\cite{Azevedo2011}, who formalized the notion of ``strategyproof in the large" mechanisms, where truthtelling constitutes an approximate equilibrium as many players arrive in the market and submit bids drawn from the same distribution. Our work differs, in that we don't need to make such symmetric strategy assumptions and allows for mechanisms where efficiency can be achieved in the limit even though truthtelling is not necessarily a limit behavior.

The worst-case POA (without the largeness assumption) is well
understood in all of the models that we study.  For the POA of uniform-price auctions and simultaneous item auctions,
see~\cite{Christodoulou2008, Bhawalkar2011, Hassidim2011, Feldman2013, Markakis2012,Markakis2013} .  For the POA of greedy combinatorial auctions,
see~\cite{Lucier2010}.  For the POA of atomic routing games,
see~\cite{Aland2006}.  For the POA of nonatomic routing games,
see~\cite{Roughgarden2002}.

\section{Preliminaries: Mechanisms, Bayes NE and Price of Anarchy}
\label{sec:prelim}
In this work, we present a framework to study the efficiency of games in the large.  We apply this framework to a variety of games, including cost minimization games, value maximization games, and welfare maximization in mechanisms.
For the sake of brevity, we present our framework for the case of mechanisms for combinatorial auction settings.
Analogous frameworks exist for the other settings, and we defer their presentation to Appendix \ref{app:general-games}.

Consider a market with $n$ bidders and $m$ items. Each player $i\in[n]$ has a valuation function $v_i:2^{[m]}\rightarrow \R_+$, that assigns a value for each possible allocation of items. We will denote the set of possible valuations for player $i$ with $\V_i$ and the set of valuation profiles with $\V=\V_1\times\ldots\times \V_n$.

A mechanism $\M$ constists of a triple $\left(\{S_i\}_{i=1}^{n}, \{x_i\}_{i=1}^n, \{P_i\}_{i=1}^{n}\right)$. $S_i$ is a strategy space for each player (and $S=S_1\times \ldots\times S_n$). $x_i:S\rightarrow 2^{[m]}$ is an allocation function that maps a strategy profile to an allocation of items to player $i$, such that $x(s)=(x_1(s),\ldots,x_n(s))$ is feasible (no two items are allocated to different players). $P_i:S\rightarrow \R_+$ is a payment function.  A bidder's utility for an allocation is his value minus his payment, i.e.,
$u_i(s;v_i)=v_i(x_i(s))-P_i(s).$
We will be interested in analyzing the social welfare of an equilibrium strategy profile $s \in S$, which is the total value of the resulting allocation:
\begin{equation}
\textstyle{SW(s;v) = \sum_{i=1}^{n} v_i(x_i(s))}
\end{equation}
The optimal feasible allocation for valuation profile $v$ will be denoted by $\opt(v)$, i.e. $\opt(v)=\max_{x \text{ is feasible}}\sum_{i=1}^n v_i(x_i)$.
The revenue of a mechanism is the sum of the payments, i.e., $\rev(s)=\sum_{i=1}^{n}P_i(s)$.

We will consider a Bayesian setting in which each player's valuation $v_i$ is drawn independently from some distribution $\F_i$.  A strategy function for agent $i$ is a (possibly randomized) mapping $\mu_i$ from $V_i$ to $S_i$, which we think of as a specification of the strategy to use given a valuation.  A Bayesian Nash equilibrium ($\BCCE$) is a profile of strategy functions such that no single agent can increase her expected utility (over randomization in types and strategies) by unilaterally modifying her strategy.  Formally, the profile of strategy functions $\mu=(\mu_1, \ldots, \mu_n)$ is a $\BCCE$ if for all $i$, all valuations $v_i\in \V_i$, and all alternative strategies $s_i' \in S_i$, we have
\[ \E_{v_{-i} \sim \F_{-i}}[u_i(\mu_i(v_i), \mu_{-i}(v_{-i});v_i)] \geq \E_{v_{-i} \sim \F_{-i}}[u_i(s_i', \mu_{-i}(v_{-i});v_i)]. \]
Note that the non-Bayesian notion of Nash Equilibrium is a special case of the above, in which every distribution $\F_i$ is a point mass.

The Bayes-Nash Price of Anarchy ($\BCCE\mbox{-}\poa$) of a mechanism $\M$ is the worst-case ratio between the expected optimal welfare and the expected welfare at equilibrium, over all type distributions and all $\BCCE$. That is,
\begin{equation}
\textstyle{ \BCCE\mbox{-}\poa = \max_{\F} \max_{\mu}\frac{\E_{v \sim \F}[\opt(v)]}{\E_{v \sim \F}[SW(\mu(v);v)]}, }
\end{equation}
where the maximum over strategy functions $\mu$ is taken over all $\BCCE$ for distribution profile $\F$. 

\section{Smoothness in the Large}
\label{sec:framework}
\paragraph{Sequence of mechanisms.}
We will typically work with a sequence of mechanisms $\Mn_{n=1}^{\infty}$, indexed by the number of participating players $n$. For shorter notation we will write $\Seq{x}$ to denote the sequence $\Seq{x}_{n=1}^{\infty}$. 

In a sequence of mechanisms $\Mn$, everything will be changing parametrically with the number of players, such as the set of items $m^n$, the strategy spaces $S^n=(S_1^n,\ldots,S_n^n)$, the allocation functions $x^n=(x_1^n,\ldots,x_n^n)$, the payment functions $P^n=(P_1^n,\ldots,P_n^n)$ and the valuation profile space $\V^n=(\V_1^n,\ldots,\V_n^n)$. We will also denote with $\opt^n(\cdot)$ the optimal welfare, with  $\rev^n(\dot)$ the revenue and with $u_i^n$ the utility of player $i$ in mechanism $\M^n$. For the moment one can imagine arbitrary ways for the mechanism to grow; in subsequent sections we give specific conditions for how the market should grow for our framework to be applicable.

\paragraph{Smoothness in the large.}
For finite games, Roughgarden \cite{Roughgarden2009} introduced the notion of {\it smoothness} as a method for bounding inefficiency of equilibria.  The smoothness approach proceeds by exploring specific deviations, instead of characterizing the (potentially complex) structure of equilibria. This approach was specialized to the mechanism design setting via the notion of smooth mechanisms by Syrgkanis and Tardos \cite{Syrgkanis2013}. We extend the notion of smoothness to large games. In what follows we present the specific extension in the context of mechanism design (i.e., large mechanisms), but the framework is widely applicable and the reader is directed to Appendix \ref{app:general-games} for the formulation of the framework in general games. Intuitively, a sequence of mechanisms is said to be $(\lm)$-\emph{smooth in the large} if for any $\epsilon$, and a sufficiently large number of players, each player $i$ has a special strategy that allows him to acquire a $\lambda\cdot (1-\epsilon)$ fraction of his valuation for his optimal set of items, by paying no more than $\mu$ times the current price paid for these items.

\begin{defn}[Smooth in the large] A sequence of mechanisms $\Mn$ is $(\lambda,\mu)$-smooth in the large if for any $\epsilon>0$, there exists $n(\epsilon)<\infty$, such that for any $n>n(\epsilon)$, for any $v^n\in \V^n$, for each $i\in [n]$, there exists a strategy $s_i^{*,n}\in S_i^n$, such that for any $s^n\in S^n$:
\begin{equation}
\textstyle{\sum_{i=1}^{n} u_i^n(s_{i}^{*,n},s_{-i}^n;v_i^n) \geq \lambda (1-\epsilon) \opt^n(v^n)-\mu\cdot \rev^n(s^n)}
\end{equation}
\end{defn}

The following theorem shows that if a sequence of mechanisms is $(\lambda,\mu)$-smooth in the large, for some $\lambda,\mu\geq 0$, then its price of anarchy as $n\rightarrow \infty$ is at most $\frac{\max\{1,\mu\}}{\lambda}$. Moreover, it implies that for any sufficiently large but finite market of size $n$ the price of anarchy of all Bayes-Nash equilibria is at most a $1+\epsilon(n)$ multiplicative factor away from the limit price of anarchy, where the rate of convergence of $\epsilon(n)$ to $0$ will depend on the application and can be derived from the proof of smoothness in the large.

\begin{theorem} If a sequence of games is $(\lambda,\mu)$-smooth in the large then
\begin{equation*}
\textstyle{\lim\sup_{n\rightarrow \infty} \BCCE\text{-}\poa^n\leq \frac{\max\{1,\mu\}}{\lambda}.}
\end{equation*}
I.e., for any $\epsilon$ there exists a market size $n(\epsilon)$ such that for any $n\geq n(\epsilon)$, every Bayes-Nash equilibrium of the mechanism $\M^n$ with value distributions $\F_1\times \ldots\times \F_n$ has expected social welfare at least $(1-\epsilon)\frac{\lambda}{\max\{1,\mu\}}$ of the expected optimal welfare.
\end{theorem}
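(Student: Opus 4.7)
The plan is to adapt the Bayes--Nash smoothness argument (à la Syrgkanis--Tardos) to the ``in the large'' setting; the only difference from the standard template is that the welfare lower bound carries a multiplicative $(1-\epsilon)$ slack inherited from the definition. Fix $\epsilon > 0$, $n \geq n(\epsilon)$, a product distribution $\F = \F_1\times\cdots\times\F_n$, and a $\BCCE$ strategy profile $\mu = (\mu_1,\ldots,\mu_n)$ of $\M^n$; the goal is to lower bound $\E_v[SW(\mu(v);v)]$ by a $\lambda(1-\epsilon)/\max\{1,\mu\}$ fraction of $\E_v[\opt^n(v)]$.

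The main conceptual step is to construct, for each player $i$, a Bayesian deviation that exploits the smoothness strategies despite the fact that $s_i^{*,n}(v^n)$ a priori depends on the \emph{entire} valuation profile. I would let $w \sim \F$ be an independent phantom profile and have player $i$ deviate to $s_i^{*,n}((v_i, w_{-i}))$, which is admissible since it only depends on $v_i$ and private randomness. The $\BCCE$ inequality, combined with the key renaming trick (swap the $i$-th coordinates of $v$ and $w$, a measure-preserving operation precisely because $\F$ is a product distribution), rewrites the deviation payoff as $\E[u_i^n(s_i^{*,n}(w), \mu_{-i}(v_{-i}); w_i)]$. After summing over $i$, every per-player deviation is now the smoothness strategy for the \emph{same} common phantom profile $w$, played against the actual equilibrium strategies of the other players.

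Next, I would apply the smoothness-in-the-large inequality pointwise, taking $\hat v^n = w$ and $\hat s^n = \mu(v)$, and take expectations, using that $v$ and $w$ are identically distributed:
\[
\sum_i \E[u_i^n(\mu(v);v_i)] \;\geq\; \lambda(1-\epsilon)\,\E[\opt^n(v)] \;-\; \mu\,\E[\rev^n(\mu(v))].
\]
Combining with the identity $SW(\mu(v);v) = \sum_i u_i^n(\mu(v);v_i) + \rev^n(\mu(v))$ yields
\[
\E[SW(\mu(v);v)] \;\geq\; \lambda(1-\epsilon)\,\E[\opt^n(v)] + (1-\mu)\,\E[\rev^n(\mu(v))].
\]
The two cases $\mu \leq 1$ and $\mu > 1$ now close the argument: in the former the revenue term is nonnegative and can be dropped; in the latter, individual rationality at equilibrium ($u_i^n \geq 0$) gives $\rev^n(\mu(v)) \leq SW(\mu(v);v)$, which upon rearrangement loses a multiplicative factor of $\mu$. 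In both cases $\BCCE\mbox{-}\poa^n \leq \max\{1,\mu\}/(\lambda(1-\epsilon))$, and sending $\epsilon \to 0$ (with $n \geq n(\epsilon)$) recovers the stated limit.

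The step requiring the most care is the symmetry swap in the Bayesian reduction: this is exactly where the product structure of $\F$ is used, and without it one would be stuck with per-player smoothness profiles that cannot be fed into a single pointwise application of the smoothness inequality. Everything else is bookkeeping along the standard smoothness template, with the rate of convergence of $\epsilon(n)$ read off from the rate built into $n(\epsilon)$ in the specific application.
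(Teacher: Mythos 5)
Your proof is correct and follows essentially the same route as the paper: the paper simply observes that smoothness in the large makes $\M^n$ a $(\lambda(1-\epsilon),\mu)$-smooth mechanism in the sense of Syrgkanis--Tardos and cites their extension theorem, which is exactly the phantom-profile/swap argument, the welfare decomposition $SW=\sum_i u_i+\rev$, and the $\mu\le 1$ versus $\mu>1$ case analysis that you reproduce in full. The only difference is that you unfold the cited argument explicitly (thereby surfacing the standard standing assumptions it relies on, quasilinearity and nonnegative equilibrium utility), which is fine.
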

\begin{proof}
By $(\lambda,\mu)$-smoothness in the large, for any $\epsilon$ there exists a market size $n(\epsilon)$ such that for any $n\geq n(\epsilon)$
the mechanism $\M^n$ is a $\left(\lambda(1-\epsilon),\mu\right)$-smooth mechanism, as defined in \cite{Syrgkanis2013}. Therefore, by the results in \cite{Syrgkanis2013}, the $\BCCE$-$\poa^n$ is at most $\frac{\max\{1,\mu\}}{\lambda (1-\epsilon)}$. The theorem then follows.
\end{proof}

\subsection{Main Technique: Smooth Approximate Utility Functions}
We present the notion of a $(\lambda,\mu)$-smooth approximate utility function sequence with respect to a sequence of mechanisms $\Mn$.

\begin{defn}[Smooth approximate utility] Let $U_i^n: S^n\times V_i\rightarrow \R_+$ be a utility function for player $i \in [n]$, and let $U^n=(U_1^n,\ldots,U_n^n)$ be a vector of utility functions. A sequence $\Seq{U}$ is a \emph{sequence of $(\lambda,\mu)$-smooth approximate utility functions for the sequence of mechanisms $\Seq{\M}$}  if the following two properties are satisfied:
\begin{enumerate}
\item {\bf(Approximation)}\label{prop:approximation} The approximate utility $U_i^n$ converges to the true utility $u_i^n$ uniformly over $s^n\in S^n$ and $v_i\in \V_i$. I.e., for any $\epsilon$, there exists $n(\epsilon)<\infty$, such that for any $n>n(\epsilon)$, for any $i\in [n]$ and $v_i\in \V_i$, and for any $s^n\in S^n$:
\begin{equation}
\textstyle{\left\|u_i^n(s^n; v_i) -U_i^n(s^n; v_i)\right\|<\epsilon.}
\end{equation}
\item {\bf(Smoothness)} For each mechanism $\M^n$ in the sequence, the approximate utility satisfies the following $(\lambda,\mu)$-smoothness property with respect to $\M^n$:
For any $n$, for any $v\in \V^n$, for any $i\in [n]$, there exists a strategy $s_i^{*,n}\in S_i^{n}$, such that for any strategy profile $s^n\in S^n$:
\begin{equation}
\textstyle{\sum_{i=1}^{n} U_i^n(s_i^{*,n},s_{-i}^n;v_i) \geq \lambda \opt^n(v) -\mu\cdot \rev^n(s^n)}
\end{equation}
\end{enumerate}
\end{defn}

We show that if a sequence of mechanisms admits a $(\lambda,\mu)$-smooth approximate utility sequence, and if its optimal social welfare increases at least at the same asymptotic rate as the number of players, then this sequence of mechanisms is $(\lambda,\mu)$-smooth in the large. The proof is rather straightforward given the definitions, and is deferred to Appendix \ref{app:omitted}\FullversionOmit{ of the full version}.
\begin{theorem}\label{thm:approx-to-large}
If a sequence of mechanisms $\Mn$ admits $(\lambda,\mu)$-smooth approximate utility functions, and $\opt^n(t)=\Omega(n)$, then $\Mn$ is $(\lambda,\mu)$-smooth in the large.
\end{theorem}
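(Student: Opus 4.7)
The plan is to essentially apply the approximation property to pass from the smoothness bound on $U_i^n$ to a corresponding bound on $u_i^n$, and to absorb the accumulated error into an $\epsilon\cdot \opt^n$ slack using the growth hypothesis $\opt^n(v) = \Omega(n)$. The candidate deviation strategies $s_i^{*,n}$ in the definition of ``smooth in the large'' will be taken to be exactly the deviations guaranteed by the smoothness property of the approximate utilities $U_i^n$.

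More concretely, fix $\epsilon > 0$ and any valuation profile $v^n \in \V^n$. By the hypothesis $\opt^n(v) = \Omega(n)$, there exist a constant $c > 0$ and an index $n_0$ such that $\opt^n(v^n) \geq c n$ for every $n \geq n_0$. Now invoke Property~1 (approximation) with the parameter $\epsilon' := c\lambda\epsilon$: there exists $n_1(\epsilon')$ so that for every $n > n_1(\epsilon')$, every player $i$, every $v_i \in \V_i$, and every $s^n \in S^n$,
\begin{equation*}
u_i^n(s^n; v_i) \geq U_i^n(s^n; v_i) - \epsilon'.
\end{equation*}
Applying this inequality at the profile $(s_i^{*,n}, s_{-i}^n)$ and summing over $i \in [n]$ gives
\begin{equation*}
\sum_{i=1}^n u_i^n(s_i^{*,n}, s_{-i}^n; v_i^n) \;\geq\; \sum_{i=1}^n U_i^n(s_i^{*,n}, s_{-i}^n; v_i^n) - n\epsilon'.
\end{equation*}
Property~2 (smoothness of $U^n$) bounds the first sum on the right by $\lambda\,\opt^n(v^n) - \mu\,\rev^n(s^n)$. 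Combining with $n \leq \opt^n(v^n)/c$ (valid for $n \geq n_0$) we get $n\epsilon' \leq \lambda\epsilon\,\opt^n(v^n)$, and therefore
\begin{equation*}
\sum_{i=1}^n u_i^n(s_i^{*,n}, s_{-i}^n; v_i^n) \;\geq\; \lambda(1-\epsilon)\,\opt^n(v^n) - \mu\,\rev^n(s^n),
\end{equation*}
which is precisely the definition of $(\lambda,\mu)$-smooth in the large with threshold $n(\epsilon) := \max\{n_0, n_1(c\lambda\epsilon)\}$.

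There is essentially no hard obstacle here; the proof is a direct composition of the two defining properties. The one subtle point worth being careful about is the quantifier structure in the approximation property: the bound $\epsilon'$ must hold uniformly over $s^n$ and $v_i$ for all players simultaneously at index $n$, since we need to sum $n$ approximation errors at the same value of $n$. This is exactly what the uniform convergence clause of Property~1 provides, which is why it is stated that way. The role of the $\opt^n = \Omega(n)$ hypothesis is solely to ensure that the aggregate additive slack $n\epsilon'$ can be rescaled into a multiplicative $(1-\epsilon)$ loss on $\opt^n$; without such a lower bound on $\opt^n$, the additive error from summing approximations could dominate and the argument would fail.
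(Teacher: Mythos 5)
Your proof is correct and follows essentially the same route as the paper's: apply the smoothness of the approximate utilities with their deviations, use the uniform approximation to transfer the bound to the true utilities at an additive cost of $n\epsilon'$, and use $\opt^n = \Omega(n)$ to absorb that into a multiplicative $(1-\epsilon)$ factor. The only difference is cosmetic — you fix the scaling $\epsilon' = c\lambda\epsilon$ up front, whereas the paper chooses the small parameter at the end.
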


\section{Simultaneous Uniform Price Auctions}\label{sec:simultaneous-uniform}
%

We consider a setting with a growing number of $n$ bidders and a fixed number of $m$ different (types of) goods.
There are $k_j^n$ units of each good $j\in[m]$ which grows as $\Omega(n)$ with the number of players.\footnote{In the full version, we show an alternative proof of the theorem that establishes the same result even without the assumption that $k_j^n=\Omega(n)$ (rather assumes that $k_j^n \rightarrow \infty$). However, that proof shows directly that the mechanism is $(1,1)$-smooth in the large, rather than going through the existence of $(1,1)$-smooth approximate utilities.}  Each player $i\in[n]$ has a valuation function $v_i:\N^{m}\rightarrow [0,H]$, that assigns a value for each possible bundle, depending on the number of units of each good.
%
These functions are bounded in their demand for the number of units of each good. 
Specifically, let $x_{i}^{j}$ denote the number of units of good $j$ allocated to player $i$, and let $x_i=(x_{i}^{1},\ldots,x_{i}^{m})$ be an allocation vector for player $i$.  Then there is a publically known constant $r$ such that: $v_i(x_i) = v_i(\min\{x_i^1,r\},\ldots,\min\{x_i^m,r\})$.
We will also assume that these valuations are bounded away from zero for any non-empty allocation, i.e.\ $v_i(x_i)\geq \rho>0$ for every non-zero $x_i$.
%



The units of each good $j\in [m]$ are simultaneously and independently sold via the means of a uniform price auction. The auctioneer solicits $r$ bids $b_{i}^{j,1}\geq \ldots\geq b_{i}^{j,r}$ from each bidder $i$ for each good $j$, referred as marginal bids. All bids of good $j$ (from all players) are ordered in a decreasing order, and each of the first $k_j^n$ bids wins a unit. In the case of ties, bidders are processed in a random order, and all tying bids of a bidder are allocated sequentially in order until the supply of the good runs out. Every player is charged the highest losing marginal bid for good $j$ for every unit of good $j$ allocated to him. We will assume that no bid exceeds some fixed number $B$; i.e., $b_{i}^{j,x}<B$ for every $i,j,x$. Since we assumed that $v_i(x_i)\leq H$, it is a weakly dominated strategy for a player to bid more than $H$ on an individual marginal bid, though our formulation allows even for $B>H$, as long as $B$ doesn't grow with the market \mfcomment{This last sentence is unclear}. We will denote by $\M^n$ an instance of the simultaneous uniform price auction among $n$ players.

Notably, the above auction is not truthful for many reasons. 
First, the auction format is not even rich enough to allow players to express their true valuations, as they are forced to place additively separable bids on the different goods. 
second, even for a single type of good, a uniform-price auction is not truthful for players with multi-unit demands.
Nonetheless, we will show that in large markets, under a particular type of demand uncertainty --- where each bidder ``fails to arrive'' with constant probability --- all equilibria achieve full efficiency.

\begin{theorem}[Full Efficiency in the Limit]\label{thm:simultaneous-uniform}
In the setting described above, if each player fails to arrive in the market with probability $\delta$, then the implied sequence of mechanisms is $(1,1)$-smooth in the large; hence full efficiency is achieved in the limit. \vsedit{Moreover, the fraction of the optimal welfare achieved at equilibrium converges to $1$ at a rate of $1-O\left(\frac{1}{\sqrt{n}}\right)$.}
\end{theorem}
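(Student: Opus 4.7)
The plan is to apply Theorem~\ref{thm:approx-to-large} by exhibiting a sequence of $(1,1)$-smooth approximate utility functions for $\{\M^n\}$. I model the random arrivals as part of the mechanism, so $u_i^n(s;v_i)$ denotes the expected utility over arrivals. The hypothesis $\opt^n(v)=\Omega(n)$ is immediate: since $k_j^n=\Omega(n)$ and each non-empty bundle is worth at least $\rho$, the feasible allocation giving one unit of good $j$ to each of $k_j^n$ distinct bidders has welfare at least $k_j^n\rho$.

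I define the approximate utility so that each bidder is a price-taker:
\[
U_i^n(s; v_i) \;=\; \E_{A^{-i}}\!\left[v_i\bigl(x_i(s; A^{-i}\cup\{i\})\bigr) \;-\; \sum_{j=1}^{m} P_j^{-i}(s_{-i}; A^{-i})\cdot x_i^j(s; A^{-i}\cup\{i\})\right],
\]
where $A^{-i}\subseteq [n]\setminus\{i\}$ is the random set of other arrivals (each independently present with probability $1-\delta$), $x_i(s;A^{-i}\cup\{i\})$ is the actual allocation to $i$ when the mechanism is run on the arrived set $A^{-i}\cup\{i\}$, and $P_j^{-i}(s_{-i};A^{-i})$ is the uniform price for good $j$ computed from the bids of the players in $A^{-i}$ alone. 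Intuitively, $U_i^n$ retains the real allocation but replaces the price $i$ pays by the price the rest of the market would set.

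Smoothness of $U^n$ is the easier half. For any type profile $v$ with optimal allocation $x^*$, let $s_i^{*,n}$ bid $B$ (breaking ties in $i$'s favor; equivalently, bidding infinitesimally above $B$) on $(x_i^*)^j$ marginal slots for each good $j$, so that $i$ wins exactly $(x_i^*)^j$ units of good $j$ in every arrival realization; absent $i$ keep their absent action with $U_i^n=0$. Then
\[
\sum_i U_i^n(s_i^{*,n}, s_{-i};v_i) \;=\; \opt^n(v) - \sum_i\sum_j \E_{A^{-i}}\!\bigl[P_j^{-i}(s_{-i};A^{-i})\bigr]\,(x_i^*)^j.
\]
The pointwise inequality $P_j^{-i}(s_{-i};A^{-i})\le P_j(s;A^{-i}\cup\{i\})$ (adding $i$'s bids cannot decrease the $(k_j^n{+}1)$-th order statistic), combined with $\sum_i (x_i^*)^j \le k_j^n$ (feasibility of $\opt$) and the observation that if some good has demand below supply then both prices vanish, yields $\sum_i\sum_j \E_{A^{-i}}[P_j^{-i}](x_i^*)^j \le \E_A\!\bigl[\sum_j P_j(s;A)\cdot k_j^n\bigr]=\rev^n(s)$, establishing $(1,1)$-smoothness of $U^n$ with respect to $\M^n$.

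The approximation property is the main technical obstacle. Its gap equals $U_i^n-u_i^n=\E_{A^{-i}}\!\bigl[\sum_j (P_j(s;A^{-i}\cup\{i\})-P_j^{-i}(s_{-i};A^{-i}))\cdot x_i^j\bigr]$. Since $i$ contributes at most $r$ bids per good, removing $i$ shifts the $(k_j^n{+}1)$-th order statistic of the combined arrived bid list by at most $r$ positions, so $0\le P_j-P_j^{-i}\le b^*_{(k_j^n+1)}-b^*_{(k_j^n+r+1)}$. Using the layer-cake identity $y=\int_0^B\mathbf{1}[y>t]\,dt$ turns the expected difference into $\int_0^B\Pr\bigl(N_{-i,j}(t)\in[k_j^n+1-r,\,k_j^n]\bigr)dt$, where $N_{-i,j}(t)$ counts the arrived non-$i$ bids for good $j$ above $t$ and is a sum driven by the $n-1$ independent arrival decisions. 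A local-CLT / Chernoff estimate bounds the integrand by $O(r/\sqrt{n})$ when $\E[N_{-i,j}(t)]$ lies within $O(\sqrt{n})$ of $k_j^n$ and by an exponentially small quantity otherwise; integrating yields $\E[P_j-P_j^{-i}]=O(rB/\sqrt{n})$. Since $x_i^j\le r$ and $m$, $r$, $B$ are constants, $\sup_{s,v_i}|U_i^n-u_i^n|=O(r^2 mB/\sqrt{n})=O(1/\sqrt{n})$. Theorem~\ref{thm:approx-to-large} then gives $(1,1)$-smoothness in the large, and tracking the $O(1/\sqrt{n})$ approximation error through its proof yields the $1-O(1/\sqrt{n})$ welfare convergence rate.
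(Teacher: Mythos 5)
There is a genuine gap, and it sits exactly at the point the paper identifies as the main technical difficulty. Your approximate utility $U_i^n$ conditions on player $i$ arriving (you evaluate the outcome on $A^{-i}\cup\{i\}$), whereas the true expected utility $u_i^n$ in the noisy mechanism includes the event, of probability $\delta$, that $i$ does not arrive and gets zero. Hence the gap is not just the price-shift term you compute: it also contains $\delta$ times $i$'s conditional utility, which does not vanish as $n\to\infty$, so the uniform approximation property fails. This error is tied to your smoothness step: because your $U_i^n$ pretends $i$ always arrives and you target the optimal allocation $x^*$ of the \emph{full} profile $v$, your smoothness bound is against $\opt^n(v)$ with all $n$ players present. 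If both of your claims held, Theorem~\ref{thm:approx-to-large} would force equilibrium welfare $\geq(1-\epsilon)\opt^n(v)$, which is false (one good, $k^n=n$ units, unit-demand bidders of value $1$: realized welfare is at most $(1-\delta)n$ in expectation). The paper avoids this by putting the arrival indicator into the approximate utility, $U_i^{n,\delta}(b;v_i)=\E[z_i\cdot U_i^n(b\cdot z;v_i)]$, and by proving smoothness against the correct benchmark $\E_z[\opt^n(v\cdot z)]$. That step is nontrivial precisely because the welfare-optimal bundle is a random variable depending on the unobserved arrival vector; the paper's Lemma~\ref{lem:simultaneous-smoothness-noise} resolves it by having each deviator sample a fictitious arrival vector $\tilde z_{-i}$ and bid for the optimum under $(z_i,\tilde z_{-i})$, using independence to make the expectations align. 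Your proposal has no substitute for this resampling argument.

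Two further, smaller problems. First, your deviation ``bid $B$ (or infinitesimally above $B$)'' is not available: the model caps every marginal bid strictly below $B$, and since opponents may bid arbitrarily close to $B$ there is no admissible bid that wins ``in every arrival realization.'' The paper instead bids $v_i(x_i^*)$ and uses a two-case charging argument (either the delusional allocation is won, or some threshold price exceeds $v_i(x_i^*)$ and the loss is charged to revenue); relatedly, its approximate utility uses the delusional allocation $X_i$, not the real one, which is what makes that charging argument work. Second, your anti-concentration estimate for general $r$ glosses over the fact that a single player's arrival moves up to $r$ bids at once, so $N_{-i,j}(t)$ is a sum of block-valued (not Bernoulli) variables; the paper's rate lemma handles this by isolating a multiplicity $h^*$ contributed by $\Omega(q/r^2)$ players before applying Berry--Esseen. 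Your price-difference analysis (order statistics shift by at most $r$, pointwise probabilities $O(1/\sqrt{n})$, integrate over $[0,B]$) is otherwise in the same spirit as the paper's Lemmas~\ref{lem:allocation-convergence} and~\ref{lem:payment-convergence} and its convergence-rate section, but it cannot rescue the missing arrival indicator and the missing resampling step in the smoothness proof.
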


Crucially, the fact that we recover full efficiency in the large is not trivial in our setting.  The result is sensitive, for example, to the type of noise in the system, and different noise may be 
required in different settings.  The noisy arrival of players can be seen as a type of demand uncertainty.  In prior work of Swinkels~\cite{Swinkels2001}, and in Section~\ref{sec:greedy} 
which generalizes the prior work of Swinkels to combinatorial auctions with fixed demand sets, the uncertainty instead regards the supply: namely the probability that the number of units of 
the good equals any fixed number goes to $0$ as the market grows large.  For these settings, supply uncertanty is sufficient to recover full efficiency in the limit.  In contrast, for
simultaneous uniform price auctions, supply uncertainty can lead to a constant factor inefficiency even in the limit. In particular, it sustains a ``search friction'' in the limit: players do not know 
which items will have higher supply and thereby cannot decide which items to target. At equilibrium, their supply prediction ends up leading to constant factor inefficiencies that do not vanish.
A concrete counter example showing that supply uncertainty may lead to constant inefficiency in simultaneous uniform price auctions appears as Example \ref{ex:inefficiency-supply} in Appendix \ref{sec:uniform-app}\FullversionOmit{ of the full version}.

\textbf{Sketch of proof of Theorem \ref{thm:simultaneous-uniform}}
At the high level, Theorem \ref{thm:simultaneous-uniform} is established by showing that the simultaneous uniform price auction where each player fails to arrive with probability $\delta$ admits $(1,1)$-smooth approximate utility functions.
The full proof is deferred to Appendix~\ref{sec:uniform-app}\FullversionOmit{ of the full version}.

The approximate utility functions we define will have the following intuitive interpretation: each player $i$ looks at the $(k+1)$-th highest bid at each auction excluding his own bids. Denote this with $P_{j}^{-i}$. This is the price that the other players would have paid for each unit of good $j$ had player $i$ not been in the market. In player $i$'s approximate utility, he has the delusion that this is also the price he faces; i.e., any marginal bid that he submits that surpasses the price $P_{j}^{-i}$ will win a unit at price $P_j^{-i}$.\footnote{A bid that is equal to $P_j^{-i}$ will pass through the tie-breaking rule.} In the actual market this is obviously not true: to win $x\in\{1,\ldots,r\}$ units, player $i$ actually needs to exceed the $x$-th lowest winning bid in his absence, and his price will be equal to this bid which may be greater than his imagined price of $P_j^{-i}$. However, as we shall soon show, with the proposed noise in the system, the price $P_j^{-i}$ is ''sufficiently random" that it is distributed almost identically to this $x$-th lowest winning bid for any constant $x$.

In what follows we present some of the technical challenges and techniques in our proof.  Following the framework of smooth approximate utilities, we first sketch the proof of the approximation and then the smoothness of the approximate utility functions described above. 

\textbf{Approximation.}
We first show (in Lemma \ref{lem:simultaneous-approximation}) that the player's utility from any bid vector $b$ converges to his approximate utility, as the market grows large. Technically, the two utilities differ either when the allocation is different or when the price paid is different. The allocation differs only when some of the player's marginal bids are among the $k+1-r$ and $k+1$ highest bids,\footnote{In the actual proof we also take care of tie-breaking.} since this is the only case where the player may believe that his marginal bid is a winning bid (under his delusional utility) while it is actually a losing bid (under his true utility). However, due to the random arrival, for any bid $b$, the probability that the number of bids above $b$ is equal to some number $x$ goes to $0$ as $x\rightarrow \infty$. Thus, the probability of any of these events goes to $0$. Now, since there is only a constant number of these events (by the assumption that $r$ is constant), the probability of any bad event occurring goes to $0$ (by the union bound). Finally, we show that the difference in price paid also goes to zero. Technically, the distributions of the $(k+1)$-th and the $(k+1+x)$-th highest bids are identical for any $x\in [-r,r]$. Thus, their expectation converges to zero as well (as they are bounded random variables).

\textbf{Smoothness.}
The other part of the proof (Lemmas \ref{lem:simultaneous-smoothness}, \ref{lem:simultaneous-smoothness-noise}) shows that these delusional utilities satisfy the $(1,1)$-smoothness property. Observe, that under this delusion a player believes that he can always grab his optimal set of items at the current price in which they are sold. This is essentially the $(1,1)$-smoothness property. 
However, there are two crucial subtleties that need to be handled carefully. 
First, the prices of the goods are random, thus unknown to the player. 
Second, the optimal set of items for a player is also random, as it depends on who arrives in the market (which is not observed by the player when he decides his bid vector). 
The first problem is bypassed by observing that since these are threshold price mechanisms, the player can simply bid sufficiently high (even overbid). Specifically, if a player's optimal allocation is $x_i=(x_i^1,\ldots,x_i^m)$, where $x_i^j$ denotes the number of units of good $j$, then by bidding sufficiently high on the $x_i^j$ highest marginal bids on each good $j$, he will almost surely win the items, or otherwise some price must be so high that we can charge the welfare loss to some other allocated player. We will show that bidding $v_i(x_i)$ as the first $x_i^j$ marginal bids on each good $j$ is sufficiently high to establish our $(1,1)$-smoothness argument.
To bypass the second problem, we observe that the utility of any player under this game is lower bounded by the utility if he can bid even when he doesn't arrive, but has value of $0$. The latter game is a simultaneous uniform price auction with no noisy demand, but with a Bayesian uncertainty on the values \mfcomment{this last sentence is unclear}. Thus we will use a technique similar to the one used to show that smoothness for complete information games implies smoothness for games with Bayesian uncertainty in the values \cite{Roughgarden2012,Syrgkanis2012,Syrgkanis2013}. In particular, the smoothness deviation samples an arrival vector from the distribution and uses this random sample as a proxy for the true arrival vector, targeting the optimal bundle under this random sample.

\section{Congestion Games}



%
%
%

\vsedit{We will apply the notion of smoothness in the large for general games (which we formally present in Appendix \ref{app:general-games}) to network routing games. An (unsplittable flow) routing game is specified by a fixed network $H$ and a set of $n$ players.  Each player $i$ has a type corresponding to a source node $u_i$, a destination node $v_i$, and an amount of flow $f_i$.  
A strategy $s_i$ of player $i$ is a path from $u_i$ to $v_i$ in $H$.  Each edge $e$ of $H$ has a corresponding cost function $c_e : [0,1] \to \mathbb R$, where $c_e(x)$ is interpreted as the cost (or delay) of traveling on edge $e$ given that a total flow of $x$ is being routed on that edge.  Typically, each cost $c_e$ is a continuous non-decreasing function.  For a fixed network $H$, let $G^n$ denote an unsplittable flow routing game with $n$ players, in which $f_i = 1/n$ for each player $i$. We can then think of $\Gn$ as a sequence of games, growing large in the sense that individual players control a vanishingly small fraction of the total flow.}


For game $G^n$ (i.e., a particular choice of $n$), the cost function for each player is:
$
c_i^n(s; t_i) = \sum_{e\in s_i} c_e\left(\frac{n_e(s)}{n}\right),
$
where $n_e(s)=|\{j: e\in s_j\}|$ is the number of players that select, as their strategy, a path that includes edge $e$.
%
For this game, we consider the following \emph{approximate} cost function:
\begin{equation}\label{eqn:approx-congestion}
C_i^n(s; t_i) = \sum_{e\in s_i} c_e\left(\frac{n_e(s_{-i})}{n}\right),
\end{equation}
where $n_e(s_{-i})=|\{j\neq i: e\in s_j\}|$. Essentially, this approximation corresponds to a player ignoring his own effect on the delay of a link when calculating his total delay for the links that he uses. Though this assumption might lead to large errors in a small game, it is a good approximation in a large game context where each individual player controls only a vanishing fraction of the traffic.

\begin{lemma}\label{lem:approx-cost}
If cost functions are continuous and satisfy:
$
x\cdot c_e(y)\leq \lambda\cdot x\cdot c_e(x)+\mu\cdot y\cdot c_e(y)
$
then the sequence of functions $\{C_i^n\}$ defined in Equation \eqref{eqn:approx-congestion} is a valid sequence of $(\lambda,\mu)$-smooth approximate cost functions for the sequence of congestion games $\{G^n\}$. \SubmissionOmit{(proof in Appendix \ref{app:congestion})}
\end{lemma}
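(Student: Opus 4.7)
The plan is to verify the two defining properties of a $(\lambda,\mu)$-smooth approximate cost sequence — approximation and smoothness — separately. For the smoothness deviation, I fix a socially optimal unsplittable flow profile $s^{*,n}$ for game $G^n$ and assign $s_i^{*,n}$ to be player $i$'s path in that profile; the smoothness inequality (in its cost-minimization form, analogous to the mechanism version in the main text) then reads
\[
\sum_{i=1}^{n} C_i^n(s_i^{*,n}, s_{-i}^n; t_i) \;\leq\; \lambda\cdot \opt^n(t) + \mu\cdot C^n(s^n; t),
\]
where $C^n(s;t) = \sum_i c_i^n(s;t_i)$ is the true total cost of profile $s$ and plays the role that $\rev^n$ plays in the mechanism formulation.

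For the approximation property, fix $\epsilon > 0$. Since $H$ has only finitely many edges and each $c_e$ is continuous on the compact interval $[0,1]$, each $c_e$ is uniformly continuous there, so one can pick a single $\delta>0$ such that $|x-y|<\delta$ implies $|c_e(x)-c_e(y)|<\epsilon/|E(H)|$ for every edge $e$. Now $n_e(s^n)/n$ and $n_e(s_{-i}^n)/n$ differ by either $0$ or $1/n$, so for $n>1/\delta$ every edge summand differs by less than $\epsilon/|E(H)|$; summing over the at most $|E(H)|$ edges on player $i$'s path gives $|c_i^n(s^n;t_i)-C_i^n(s^n;t_i)|<\epsilon$ uniformly in $i$ and $s^n$, which is exactly the approximation condition.

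For the smoothness step, expand
\[
\sum_{i=1}^n C_i^n(s_i^{*,n},s_{-i}^n;t_i) \;=\; \sum_{i=1}^n\sum_{e\in s_i^{*,n}} c_e\!\left(\tfrac{n_e(s_{-i}^n)}{n}\right).
\]
Using monotonicity of $c_e$ (noted in the paper as the typical regime) together with $n_e(s_{-i}^n)\leq n_e(s^n)$, I upper-bound each summand by $c_e(n_e(s^n)/n)$ and swap the order of summation to obtain $\sum_e n_e(s^{*,n})\cdot c_e(n_e(s^n)/n)$. Setting $x_e=n_e(s^{*,n})/n$ and $y_e=n_e(s^n)/n$, the assumed per-edge smoothness $x\cdot c_e(y)\leq\lambda\cdot x\cdot c_e(x)+\mu\cdot y\cdot c_e(y)$ applied at $(x_e,y_e)$, multiplied by $n$ and summed over edges, yields $\lambda\cdot C^n(s^{*,n};t)+\mu\cdot C^n(s^n;t)=\lambda\cdot \opt^n(t)+\mu\cdot C^n(s^n;t)$, as required.

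The main subtlety I anticipate is the step linking the $s_{-i}$-load that appears in the approximate cost to the full $s$-load at which the per-edge hypothesis is most conveniently applied; monotonicity resolves this in one clean line, which is why it arises naturally here. Observe also that no rate-of-convergence argument enters the smoothness step — the inequality holds exactly for every $n$ — so all of the $n$-dependence in the final POA bound will come from the approximation step, consistent with the $1+o(1/n)$ convergence claimed earlier for atomic routing games.
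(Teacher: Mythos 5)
Your proposal is correct and follows essentially the same route as the paper's proof: the same deviation to the socially optimal profile, the same use of monotonicity to pass from $n_e(s_{-i})$ to $n_e(s)$, and the same per-edge application of the $(\lambda,\mu)$ inequality after swapping the order of summation. Your approximation step is, if anything, slightly more explicit than the paper's (which simply invokes continuity), since you spell out the uniform continuity of each $c_e$ on $[0,1]$ needed for the convergence to be uniform over strategy profiles and players.
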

For example, when the edge cost functions are affine, then the property in Lemma \ref{lem:approx-cost} 
is known to hold for $\lambda=1$ and $\mu=\frac{1}{4}$, which allows us to conclude that congestion games with linear cost functions are $\left(1,\frac{1}{4}\right)$-smooth in the large.
Thus, as long as $\opt^n = \Omega(n)$ (which occurs as long as the linear coefficients of the cost functions are bounded away from $0$), we conclude that the price of anarchy 
converges to the well-known $4/3$ bound for non-atomic routing games \cite{Roughgarden2002}. More generally, polynomial latency functions satisfy the property for $\lambda=1$ and $\mu=d(d+1)^{-(d+1)/d}$ which leads to a price of anarchy in the limit of  $\Theta(d/\log(d))$. 
This is in stark comparison with the price of anarchy for atomic congestion games with degree $d$ polynomial delay functions, which is $\Theta\left(\left(d/\log(d)\right)^d\right)$.

Finally, if we also assume that the cost functions are $L$-lipschitz for some constant $L$, then the proof of Lemma \ref{lem:approx-cost} implies that the convergence rate to the non-atomic price of anarchy is of the order of $1/n$, i.e. if the cost functions satisfy the property in Lemma \ref{lem:approx-cost} then for the sequence of games $BNE-PoA^n = \frac{\lambda}{1-\mu}+O(1/n)$. 


\bibliographystyle{plain}
\bibliography{large-bib}

\newpage

\begin{appendix}

\section{Omitted Proofs}\label{app:omitted}

\begin{proofof}{Theorem \ref{thm:approx-to-large}}
Since $\Mn$ admits $(\lambda,\mu)$-smooth approximate utility functions $\Seq{U}$, we have that for any $n$ and $v\in \V^n$ there exists strategies $s_i^{*,n}$ for each $i\in [n]$ such that, for any $s\in S^n$,
\begin{equation*}
\sum_{i=1}^{n} U_i^n(s_i^{*,n},s_{-i};v_i) \geq \lambda \opt^n(v) -\mu\cdot \rev^n(s).
\end{equation*}
By the approximation property of $U_i^n$ we have that for any $\epsilon$, there exists $n(\epsilon)<\infty$ such that for any $n>n(\epsilon)$: $u_i^n(s;v_i)\geq U_i^n(s;v_i)-\epsilon$ for any $v_i\in \V_i^n$ and $s^n\in S^n$. Thus:
\begin{equation*}
\sum_{i=1}^{n} u_i^n(s_i^{*,n},s_{-i};v_i) \geq \lambda \opt^n(v) -\mu\cdot \rev^n(s) - n\cdot \epsilon.
\end{equation*}
Since $\opt^n(v)=\Omega(n)$, for any $v\in \V^n$, we can write $\opt^n(v)\geq \rho \cdot n$ for some $\rho>0$ and for sufficiently large $n$. Thus we get:
\begin{equation*}
\sum_{i=1}^{n} u_i^n(s_i^{*,n},s_{-i};v_i) \geq \left(\lambda-\frac{\epsilon}{\rho}\right) \opt^n(t^n) -\mu\cdot \rev^n(s).
\end{equation*}
Therefore, for any $\delta>0$, we can pick $\epsilon$ appropriately small, such that $\lambda-\frac{\epsilon}{\rho}\geq \lambda(1-\delta)$, which would then yield the theorem.
\end{proofof}

\vsdelete{\begin{figure}
\begin{center}
\begin{tabular}{|c | c| c| c|}
        \hline\noalign{\smallskip}
Model  & Worst-Case PoA & Large Market PoA & Section\\
\hline\hline
\specialcell{Simultaneous Uniform\\ Price Auctions} & \specialcell{$\Omega(\sqrt{m})$ (General Valuations) \cite{Roughgarden2014}\Tstrut \\
$\leq m$}  & \specialcell{=1 (General Valuations)\\ (full efficiency in the limit)} & Section \ref{sec:simultaneous-uniform} \\ 
\hline
\specialcell{Greedy Combinatorial\\ Auctions} & \specialcell{$\leq d+1$\\ (Single-Minded bidders,\\ size-$d$ bundles)\\ $>d$} \cite{Lucier2010} & \specialcell{$\leq d$ \\ (matching algorithmic\\ approximation ratio)}  & Section \ref{sec:greedy}\\ 
\hline
Congestion Games & \specialcell{$\Theta\left(\left(\frac{d}{\log(d)}\right)^d\right)$\cite{Aland2006} \\ Degree $d$ polynomial latencies} & \specialcell{$\Theta\left(\frac{d}{\log(d)}\right)$\\ (matching non-atomic\\ routing bound \cite{Roughgarden2002})}  & Section \ref{sec:congestion-games}\\
\hline
\end{tabular}
\caption{Comparison of large market price of anarchy results with worst-case price of anarchy.}\label{table:summary} 
\end{center}
\end{figure}
}

\section{Simultaneous Uniform Price Auctions}
\label{sec:uniform-app}

\paragraph{Proof of Theorem \ref{thm:simultaneous-uniform}.} We will view the simultaneous uniform price auction with random arrivals as an ex-ante mechanism $\M^{n,\delta}$, where the noise is endogenized in the rules of the mechanism and then we will show that mechanism $\M^{n,\delta}$ is $(1,1)$-smooth in the limit. We will refer to this mechanism as \emph{simultaneous uniform price auction with endogenous $\delta$-noisy demand}.

\paragraph{Basic Notation.} We first introduce some useful notation. We will denote with $u_i^n(b;v_i)$ the expected utility from a simultaneous uniform price auction where $b=(b_1,\ldots,b_n)$ and $b_i$ is a vector of marginal bids $b_i^{j,x}$, with $j\in[m]$ and $x\in [r]$, satisfying the decreasing marginal bid property, i.e. $b_i^{j,x}$ is decreasing in $x$.
We will denote with $x_i(b)$ the allocation of player $i$ under bid profile $b$ in the simultaneous uniform price auction, which is a random variable (due to tie-breaking). For any vector $x$, we will denote with $\theta_{t}(x)$, the $t$-th highest element in $x$. Thus $\theta_{t}(b^j)$ is the $t$-th highest marginal bid at the uniform price auction for good $j$. Thus we can write:
\begin{equation}
u_i^n(b;v_i) = \E\left[v_i(x_i(b))-\sum_{j\in [m]} x_i^j(b)\cdot \theta_{k_j^n+1}(b^j)\right]
\end{equation}
where expectation is taken over $x_i(b)$.

We will denote with $u_i^{n,\delta}(b;v_i)$ the expected utility of player $i$ in the simultaneous uniform price auction with noisy arrivals. Concretely, let $z_i$ be a $\{0,1\}$ random variable that equals $1$ with probability $1-\delta$, indicating whether player $i$ arrived in the market and let $z=(z_1,\ldots,z_n)$. Then
\begin{equation}
u_i^{n,\delta}(b;v_i) = \E[z_i\cdot u_i^{n}(b\cdot z;v_i)].
\end{equation}

\paragraph{Approximate Utility.} We denote with $U_i^n(b;v_i)$ an approximate utility associated with the non-noisy sequence of mechanisms, defined by the following allocation and payment rules (due to heavy notation we avoid giving an algebraic description of $U_i^n$ and only describe it in words). We remind the reader the approximate utility is not the utility associated with any mechanism, and in fact would not be feasible for all bidders simultaneously.  It is simply a construct for the proof of smoothness, and can be interpreted as an intuition for what's guiding bidder behavior. To construct the approximate utility , for each uniform price auction $j\in [m]$, let $\theta_{k_j^n+1}(b_{-i}^j)$ be the $k_j^n+1$-highest marginal bid excluding player $i$'s bids. Every marginal bid $b_i^{j,x}>\theta_{k_j^n+1}(b_{-i}^j)$ wins a unit at auction $j$ and bids with $b_{i}^{j,x}=\theta_{k_j^n+1}(b_{-i}^j)$ win with some probability that follows from the random tie-breaking rule described in the beginning of the section. We will denote with $X_i(b)$ the allocation function that is implied by the above description, which is also a random variable due to the tie-breaking rule. For every unit that a player $i$ wins at auction $j$, she pays $\theta_{k_j^n+1}(b_{-i}^j)$. Thus we can write the approximate utility as:
\begin{equation}
U_i^n(b;v_i) = \E\left[v_i(X_i(b))-\sum_{j\in [m]} X_i^j(b)\cdot \theta_{k_j^n+1}(b_{-i}^j)\right]
\end{equation}

Then denote with $U_i^{n,\delta}(b;v_i)$ an approximate utility for the noisy arrival mechanism, which is simply defined as:
\begin{equation}
U_i^{n,\delta}(b;v_i) = \E\left[z_i\cdot U_i^{n}(b\cdot z;v_i)\right]
\end{equation}

\paragraph{$(1,1)$-Smoothness of Approximate Utility.} We will first show that the approximate utility $U_i^{n,\delta}$ satisfies the $(1,1)$-smoothness property with respect to the sequence of mechanisms $\M^{n,\delta}$. To achieve this we will break it into two parts. First we will show that the approximate utility $U_i^n$, satisfies the $(1,1)$-smoothness property with respect to the non-noisy sequence of mechanism $\M^n$. Then we show generically, that if a sequence of utility functions $U_i^n(s;v_i)$, satisfy the $(\lambda,\mu)$-smoothness property with respect to a sequence of mechanisms $\M^n$, then the sequence of utility functions $U_i^{n,\delta}(s;v_i)=\E\left[z_i\cdot U_i^n(s\cdot z;v_i)\right]$ satisfies the $(\lambda,\mu)$-smoothness property with respect to the sequence of mechanisms $\M^{n,\delta}$, which is the version of $\M^n$ where each player arrives with probability $\delta$. This completes the first part of the proof.

\begin{lemma}\label{lem:simultaneous-smoothness}
$U_i^n$ satisfies the $(1,1)$-smoothness property with respect to the sequence of simultaneous uniform price auctions $\M^n$.
\end{lemma}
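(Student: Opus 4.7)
The plan is to exhibit, for each player $i$, a single deviation $b_i^*$ that targets $i$'s share of the welfare-optimal allocation, then use the threshold-price structure of the approximate utility to lower bound $U_i^n$ by value minus ``phantom'' prices, and finally charge those phantom prices to the actual uniform-price revenue after summing over players. This is the standard smoothness template from \cite{Syrgkanis2013}, carefully re-run against the deluded utility $U_i^n$ rather than the true $u_i^n$.

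Fix $v \in \V^n$, let $x^* = \opt^n(v)$, and write $x_i^{*,j}$ for the number of units of good $j$ that player $i$ receives in $x^*$. Define $b_i^*$ as follows: on each good $j$ with $x_i^{*,j} > 0$, place $v_i(x_i^*)$ on the top $x_i^{*,j}$ marginal bids and $0$ on the remaining marginals; place $0$ on all other goods. This is a valid decreasing marginal bid vector, depends only on $v$, and requires no knowledge of $b_{-i}$.

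Fix any $b_{-i}$ and abbreviate $\pi_{j,i} := \theta_{k_j^n+1}(b_{-i}^j)$. Under the approximate utility, each marginal bid strictly above $\pi_{j,i}$ wins a unit at price $\pi_{j,i}$. I claim
\[U_i^n(b_i^*, b_{-i}; v_i) \geq v_i(x_i^*) - \sum_j x_i^{*,j}\, \pi_{j,i}.\]
Let $J = \{j : v_i(x_i^*) > \pi_{j,i}\}$. On $J$, player $i$ wins $x_i^{*,j}$ units at price $\pi_{j,i}$; off $J$ he wins nothing and pays nothing. Tie-breaking only helps, so writing $\hat x_i$ for the allocation that agrees with $x_i^*$ on $J$ and vanishes elsewhere, it is enough that $v_i(\hat x_i) - \sum_{j \in J} x_i^{*,j} \pi_{j,i} \geq v_i(x_i^*) - \sum_j x_i^{*,j} \pi_{j,i}$. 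If $J$ covers all positive coordinates of $x_i^*$ then $\hat x_i = x_i^*$ and this is trivial; otherwise pick any $j^\circ \notin J$ with $x_i^{*,j^\circ} \geq 1$ and observe $\pi_{j^\circ,i} \geq v_i(x_i^*)$, so a single phantom payment $x_i^{*,j^\circ} \pi_{j^\circ,i}$ already dominates the entire value $v_i(x_i^*)$, and monotonicity ($v_i(\hat x_i)\geq 0$) finishes the per-player bound.

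Summing over $i$ yields $\sum_i U_i^n(b_i^*, b_{-i}; v_i) \geq \opt^n(v) - \sum_j \sum_i x_i^{*,j}\, \pi_{j,i}$. To match the negative term to $\rev^n(b)$, note that removing $i$'s bids from $b^j$ can only weakly lower the $(k_j^n+1)$-th order statistic, so $\pi_{j,i} \leq \theta_{k_j^n+1}(b^j) =: \pi_j^*$, while feasibility of the optimum gives $\sum_i x_i^{*,j} \leq k_j^n$; hence $\sum_i x_i^{*,j} \pi_{j,i} \leq k_j^n \pi_j^*$, which equals the revenue from good $j$ (and the inequality is vacuous when too few nonzero bids force $\pi_j^* = 0$). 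Summing over $j$ gives exactly the $(1,1)$-smoothness inequality. The main obstacle is the per-player case analysis where $v_i(x_i^*)$ is undercut by some phantom price $\pi_{j,i}$; the saving grace is that any such good is so expensive that its phantom payment alone already ``covers'' the whole value $v_i(x_i^*)$, so one may discard the lost items and still clear the inequality. A secondary point is tie handling, absorbed by the randomized tie-breaking rule (or by replacing $v_i(x_i^*)$ with $v_i(x_i^*)+\eta$ and letting $\eta \downarrow 0$); note also that $b_i^*$ depends on the whole profile $v$, which is fine here but is exactly the feature that Lemma~\ref{lem:simultaneous-smoothness-noise} must lift to the Bayesian noisy-arrival game by sampling a proxy arrival vector.
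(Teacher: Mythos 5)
Your proposal is correct and follows essentially the same route as the paper: the same deviation (bid $v_i(x_i^*)$ on the top $x_i^{*,j}$ marginals of each good), the same per-player bound $U_i^n(b_i^*,b_{-i};v_i)\geq v_i(x_i^*)-\sum_j x_i^{*,j}\theta_{k_j^n+1}(b_{-i}^j)$ established by the same case split (either all phantom prices are undercut by $v_i(x_i^*)$, or one expensive good's phantom payment alone covers the value), and the same aggregation using $\theta_{k_j^n+1}(b_{-i}^j)\leq\theta_{k_j^n+1}(b^j)$, $\sum_i x_i^{*,j}\leq k_j^n$, and the identity $\rev^n(b)=\sum_j k_j^n\theta_{k_j^n+1}(b^j)$. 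Your handling of ties and of the zero-price case matches the paper's (implicit) treatment, so there is nothing further to flag.
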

\begin{proof}
Consider a mechanism $\M^n$ in the sequence, valuation profile $v\in \V^n$ and let $\opt^n(v)$ be the optimal allocation. For each player $i$ let $x_i^*$ denote his allocation in the welfare maximizing allocation. Consider the following deviation $b_i^*$ for each player $i$: at each auction $j\in [m]$, bid $v_i(x_i^*)$ as the first $x_i^{j,*}$ marginal bids and $0$ on the remaining marginal bids.

Consider any bid profile $b$. There are two cases: either player $i$ wins at least his optimal allocation in his dellusion in which case he gets approximate utility:
\begin{align*}
U_i^n(b_i^*,b_{-i}) \geq v_i(x_i^*) - \sum_{j\in [m]} x_i^{j,*}\cdot \theta_{k_j^n+1}(b_{-i}^j)
\end{align*}
or otherwise, there is at least one $q\in [m]$ with $x_i^{q,*}>0$, for which $\theta_{k_q^n+1}(b_{-i}^q)\geq v_i(x_i^*)$ and at which player $i$ wins strictly less than $x_i^{q,*}$ units. In that case, player $i$'s approximate utility is at least:
\begin{align*}
U_i^n(b_i^*,b_{-i};v_i) \geq -\sum_{j\in [m]} x_i^{j,*} \cdot \theta_{k_j^n+1}(b_{-i}^j)+ \theta_{k_q^n+1}(b_{-i}^q) \geq v_i(x_i^*) - \sum_{j\in [m]} x_i^{j,*}\cdot \theta_{k_j^n+1}(b_{-i}^j)
\end{align*}
Hence, the latter inequality holds always. Summing up the inequality for each player and observing that $\theta_{k_j^n+1}(b_{-i}^j)\leq \theta_{k_j^n+1}(b^j)$  and $\sum_{i=1}^{n} x_i^{j,*}\leq k_j^n$, we get:
\begin{align*}
\sum_{i=1}^{n} U_i^n(b_i^*,b_{-i};v_i)  \geq \opt^n(v) -\sum_{j\in [m]} k_j^n\cdot \theta_{k_j^n+1}(b^j)
\end{align*}
Now it is easy to see that $\rev^n(b) = \sum_{j\in [m]} k_j^n\cdot \theta_{k_j^n+1}(b^j)$, since at each uniform price auction, either $k_j^n$ units were sold at a price of $\theta_{k_j^n+1}(b^j)$ or $\theta_{k_j^n+1}(b^j)=0$. This completes the proof.
\end{proof}

\begin{lemma}\label{lem:simultaneous-smoothness-noise}
If $U_i^n$ satsifies the $(\lambda,\mu)$-smoothness property with respect to a sequence of mechanisms $\M^n$, then $U_i^{n,\delta}=\E_{z}\left[z_i\cdot U_i^n(s\cdot z;v_i)\right]$ satisfies the $(\lambda,\mu)$-smoothness property with respect to the sequence of mechanisms $\M^{n,\delta}$.
\end{lemma}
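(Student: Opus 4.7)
The plan is to adapt the standard Bayesian smoothness ``relabeling'' argument of \cite{Roughgarden2012,Syrgkanis2013} to the demand-uncertainty setting: the random arrival vector $z$ plays the role of type uncertainty and is handled by introducing an independent copy $z'$ and exchanging the two via a measure-preserving swap. Concretely, for each player $i$ I would define the randomized deviation $\sigma_i^{*,n}$ as follows: sample a fresh independent copy $z'$ of the arrival vector from the Bernoulli $(1-\delta)^n$ distribution, force $z'_i := 1$ to form the ``imagined'' arrival vector $\hat{z}^{(i)} = (1, z'_{-i})$, and play the complete-information smoothness deviation $s_i^{*,n}(v \cdot \hat{z}^{(i)})$ guaranteed by the $(\lambda,\mu)$-smoothness of $U_i^n$ for $\M^n$ at the valuation profile $v \cdot \hat{z}^{(i)}$. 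The deviation is implementable since $z'$ is a fresh random draw, independent of the actual arrivals $z$.

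Next I would expand $U_i^{n,\delta}(\sigma_i^{*,n}, s_{-i}; v_i)$ by definition. The outer indicator $z_i$ makes the expression vanish unless $z_i = 1$, in which case the $i$-th coordinate of $(\sigma_i^{*,n}, s_{-i}) \cdot z$ is simply $s_i^{*,n}(v \cdot \hat{z}^{(i)})$ and $(v \cdot z)_i = v_i$. The key step is the \emph{relabeling}: since $z_{-i}$ and $z'_{-i}$ are i.i.d.\ Bernoulli $(1-\delta)^{n-1}$ random vectors and independent of one another, exchanging their labels inside the expectation preserves its value. After the swap, on the event $\{z_i = 1\}$ the deviation input $v \cdot \hat{z}^{(i)}$ becomes $v \cdot z$, and the others' effective bid profile becomes $s_{-i} \cdot z'_{-i}$. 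Summing over $i$ thus yields
\[
\textstyle
\sum_i U_i^{n,\delta}(\sigma_i^{*,n}, s_{-i}; v_i) \;=\; \E_{z,z'}\!\left[\,\sum_i z_i \cdot U_i^n\!\left(s_i^{*,n}(v \cdot z),\; s_{-i} \cdot z'_{-i};\; (v \cdot z)_i\right)\,\right].
\]

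For each realization of $(z, z')$ I would then apply the complete-information $(\lambda,\mu)$-smoothness of $U_i^n$ for $\M^n$ with valuation profile $\tilde{v} := v \cdot z$ and strategy profile $\tilde{s} := s \cdot z'$ (so $\tilde{s}_{-i} = s_{-i} \cdot z'_{-i}$), obtaining $\sum_i U_i^n(s_i^{*,n}(\tilde{v}), \tilde{s}_{-i}; \tilde{v}_i) \geq \lambda\, \opt^n(\tilde{v}) - \mu\, \rev^n(\tilde{s})$. Restricting the left-hand sum to indices with $z_i = 1$ only strengthens the inequality, since the ``inactive'' terms $U_i^n(\cdot\,; 0)$ are non-positive for the smoothness deviations we care about --- in particular, in Lemma \ref{lem:simultaneous-smoothness} the deviation for a zero-valuation player is the null bid, which yields zero approximate utility. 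Taking expectation over $(z, z')$ and noting that $\E_z[\opt^n(v \cdot z)] = \opt^{n,\delta}(v)$ and $\E_{z'}[\rev^n(s \cdot z')] = \rev^{n,\delta}(s)$ by the definition of the noisy mechanism $\M^{n,\delta}$, one concludes that $\sum_i U_i^{n,\delta}(\sigma_i^{*,n}, s_{-i}; v_i) \geq \lambda\, \opt^{n,\delta}(v) - \mu\, \rev^{n,\delta}(s)$, which is the $(\lambda,\mu)$-smoothness of $U_i^{n,\delta}$ with respect to $\M^{n,\delta}$. The main technical obstacle is justifying the relabeling step: one must carefully verify that the joint distribution of the variables appearing inside $U_i^n$ is preserved when $z_{-i}$ and $z'_{-i}$ are exchanged, which rests on their independence and exchangeability; the treatment of the $z_i = 0$ terms requires only a mild (and naturally satisfied) individual-rationality property of the constructed deviation.
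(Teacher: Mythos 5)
Your proposal is correct and follows essentially the same argument as the paper: the same random-proxy deviation (sample a fresh arrival vector and bid as if it were the true one), an exchangeability swap between the actual and sampled arrivals, pointwise application of the complete-information smoothness of $U_i^n$, and taking expectations to recover $\E_z[\opt^n(v\cdot z)]$ and $\E_z[\rev^n(s\cdot z)]$. The only cosmetic difference is that the paper swaps the $i$-th coordinates $z_i\leftrightarrow\tilde z_i$ and absorbs the non-arriving players via the inequality $z_i\cdot U_i^n(\cdot;v_i)\geq U_i^n(\cdot;v_i\cdot z_i)$, whereas you swap the $-i$ blocks and drop the $z_i=0$ terms using nonpositivity of $U_i^n(\cdot;0)$ --- the same underlying property.
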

\begin{proof}
By smoothness of $U_i^n$ with respect to $\M^n$ we know that for any valuation vector $v$, there exists for each player $i$ a deviation $s_i^*(v)$ such that for any strategy profile $s$:
\begin{equation*}
\sum_{i=1}^{n} U_i^n(s_i^*(v),s_{-i};v_i) \geq \lambda \opt^n(v) - \mu \rev^n(s)
\end{equation*}
Observe that for any strategy profile $s$:
\begin{equation*}
 U_i^n(s_i^*(v\cdot z),s_{-i};v_i)\cdot z_i \geq U_i^n(s_i^*(v\cdot z),s_{-i};v_i\cdot z_i)
\end{equation*}
Observe that:
\begin{align*}
\E_{z,\tilde{z}}\left[U_i^n( s_i^*(v\cdot (z_i,\tilde{z}_{-i})),s_{-i}\cdot z_{-i};v_i)\cdot z_i\right] =~& \E_{z,\tilde{z}}\left[U_i^n(s_i^*(v\cdot \tilde{z}),s_{-i}\cdot z_{-i};v_i)\cdot \tilde{z}_i\right]\\
\geq~&  \E_{z,\tilde{z}}\left[U_i^n(s_i^*(v\cdot \tilde{z}),s_{-i}\cdot z_{-i};v_i\cdot \tilde{z}_i)\right]
\end{align*}
Thus, for any strategy profile $s$, valuation profile $v$ and arrival vector $z$:
\begin{align*}
\sum_{i\in [n]} \E_{z,\tilde{z}}\left[U_i^n( s_i^*(v\cdot (z_i,\tilde{z}_{-i})),s_{-i}\cdot z_{-i};v_i)\cdot z_i\right]\geq ~&
\E_{z,\tilde{z}}\left[\sum_i U_i^n(s_i^*(v\cdot \tilde{z}),s_{-i}\cdot z_{-i};v_i\cdot \tilde{z}_i)\right]\\
\geq~& \E_{z,\tilde{z}}\left[\lambda \opt^n(v\cdot \tilde{z})-\mu \rev^n(s\cdot z)\right]
\end{align*}

Observe that: $\E_{\tilde{z}}\left[U_i^n(s_i^*(v\cdot (z_i,\tilde{z}_{-i})),s_{-i}\cdot z_{-i}; v_i)\cdot z_i\right]$ corresponds to the utility of a player under the following deviation: random sample an arrival vector $\tilde{z}_{-i}$, deviate assuming the arrival vector $(z_i,\tilde{z}_{-i})$. This is a valid deviation for the noisy arrival mechanism $\M^{n,\delta}$ and hence the above inequality shows that $U_i^{n,\delta}$ satisfies the $(\lambda,\mu)$-smoothness property with respect to $\M^{n,\delta}$.

\nicomment{move above paragraph before inequalities.}

\end{proof}

\paragraph{Approximation.} Now we move on to showing that $U_i^{n,\delta}$ approximates $u_i^{n,\delta}$ as $n\rightarrow \infty$.

\begin{lemma}\label{lem:simultaneous-approximation}
 For any valuation $v_i$ and for any bid profile sequence $b^n$:
\begin{equation}
\lim_{n\rightarrow \infty} \left\|u_i^{n,\delta}(b^n; v_i) -U_i^{n,\delta}(b^n; v_i)\right\|=0
\end{equation}
\end{lemma}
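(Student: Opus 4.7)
My plan is to reduce the noisy-demand comparison to a pointwise comparison under a fixed arrival vector $z$, and then exploit Binomial concentration on the order statistics of $b_{-i}^j\cdot z_{-i}$ to show that nearby order statistics are close in distribution. Since $u_i^{n,\delta}(b^n;v_i)=\E_z[z_i\cdot u_i^n(b^n\cdot z;v_i)]$ and $U_i^{n,\delta}(b^n;v_i)=\E_z[z_i\cdot U_i^n(b^n\cdot z;v_i)]$, it suffices to bound $\E_z\bigl[\,|u_i^n(b^n\cdot z;v_i)-U_i^n(b^n\cdot z;v_i)|\,\bigr]$. I would decompose the inner difference into a \emph{valuation} term, capturing the disagreement between the true allocation $x_i$ and the approximate allocation $X_i$, and a \emph{price} term, capturing the gap between the true uniform price $\theta_{k_j^n+1}(b^j\cdot z)$ and the imagined price $\theta_{k_j^n+1}(b_{-i}^j\cdot z_{-i})$. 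Since valuations are bounded by $H$, bids by $B$, and per-good demand by $r$, each term will reduce to a ``boundary-event'' probability or an expected gap between two nearby order statistics of $b_{-i}^j\cdot z_{-i}$.

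The next step would be a short counting argument at the auction level. For each good $j$, bid $b_i^{j,x}$ wins in the true auction if and only if $b_i^{j,x}>\theta_{k_j^n-x+1}(b_{-i}^j\cdot z_{-i})$, whereas it wins in the approximate auction if and only if $b_i^{j,x}>T_{-i}:=\theta_{k_j^n+1}(b_{-i}^j\cdot z_{-i})$ (modulo tie-breaking, which I treat separately below). Hence the two allocations disagree only when some $b_i^{j,x}$, $x\le r$, lies in the rank window $\bigl(T_{-i},\,\theta_{k_j^n-r+1}(b_{-i}^j\cdot z_{-i})\bigr]$. Moreover, whenever $i$ wins $x_i^j\le r$ units truthfully, the per-unit payment lies in $[T_{-i},\theta_{k_j^n-r+1}(b_{-i}^j\cdot z_{-i})]$, so the total price gap on good $j$ is at most $r\cdot\bigl(\theta_{k_j^n-r+1}(b_{-i}^j\cdot z_{-i})-T_{-i}\bigr)$. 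Both terms are thus controlled by rank-window events and rank-window gaps in the ordered multiset $b_{-i}^j\cdot z_{-i}$.

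The main obstacle, and the crux of the proof, is to show that these rank-window quantities vanish as $n\to\infty$ regardless of the underlying bid profile. For any threshold $c$, the count $N_c=\bigl|\{b\in b_{-i}^j:\,b>c\text{ and }b\text{ survives arrival}\}\bigr|$ is $\operatorname{Binomial}(M_c,1-\delta)$ with $M_c=|\{b\in b_{-i}^j:\,b>c\}|$. In the relevant regime $M_c=\Theta(k_j^n)=\Theta(n)$, a local CLT or a Stirling-based peak bound yields $\max_a\Pr[N_c=a]=O(1/\sqrt{n})$, so $N_c$ lands in any interval of constant length with probability $O(r/\sqrt{n})$; for $c$ outside this regime the Binomial is exponentially concentrated and contributes only $o(1/\sqrt{n})$. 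Integrating this uniform-in-$c$ bound over $c\in[0,B]$ yields $\E\bigl[\theta_{k_j^n-r+1}(b_{-i}^j\cdot z_{-i})-T_{-i}\bigr]=O(Br/\sqrt{n})$. Summing these per-good bounds over the constant number of goods $m$ then shows $\bigl|u_i^{n,\delta}(b^n;v_i)-U_i^{n,\delta}(b^n;v_i)\bigr|=O(1/\sqrt{n})\to 0$, which also recovers the $1-O(1/\sqrt{n})$ convergence rate advertised in Theorem~\ref{thm:simultaneous-uniform}. A minor subtlety, handled by the same concentration bound, is the random tie-breaking: the probability that some $b_i^{j,x}$ exactly coincides with an order statistic of $b_{-i}^j\cdot z_{-i}$ is itself $O(1/\sqrt{n})$ and so contributes negligibly to the final bound.
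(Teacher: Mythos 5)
Your overall route is the same as the paper's: decompose the utility gap into (i) the probability that the true allocation $x_i^j(b^j\cdot z)$ and the delusional allocation $X_i^j(b^j\cdot z)$ disagree, which happens only when some marginal bid of player $i$ falls in a constant-width rank window around $k_j^n$, and (ii) the expected gap between two order statistics of the surviving bids whose ranks differ by at most $r$, each multiplied by the constants $H,B,r,m$. The paper's main proof of the lemma closes these two estimates qualitatively via Swinkels' lemmas, and your quantitative Berry--Esseen/local-CLT substitute is exactly what the paper itself does in its rates-of-convergence subsections, so in spirit the argument is the same (and your $O(1/\sqrt n)$ rate matches theirs).

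There is, however, a genuine gap in your key probabilistic step for $r>1$. You assert that $N_c$, the number of surviving bids in $b_{-i}^j$ above a threshold $c$, is $\operatorname{Binomial}(M_c,1-\delta)$. That is only true when each player places at most one bid above $c$ (e.g.\ $r=1$): for $r>1$ a player's several marginal bids above $c$ survive or disappear \emph{together}, so $N_c=\sum_{\ell} h_\ell Z_\ell$ with $h_\ell\in\{0,\dots,r\}$ and i.i.d.\ Bernoulli $Z_\ell$, which is not a Binomial, and your anti-concentration bound does not follow as stated. The conclusion is still true, but it needs the extra grouping argument the paper uses (its Lemma~\ref{lem:convergence-rate-general}): partition the other players by the number $h$ of their bids above $c$, note some class with $h=h^*$ contains at least a $1/r^2$ fraction of the relevant mass, condition on the arrivals outside that class, and apply the Binomial point-mass bound to that class alone, losing a $\operatorname{poly}(r)$ factor. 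A second, smaller flaw is your treatment of ties: whether $b_i^{j,x}$ coincides with an order statistic of $b_{-i}^j\cdot z_{-i}$ is essentially a property of the fixed bid profile (an adversarial profile can make ties occur with probability close to $1$), so it is not an $O(1/\sqrt n)$-probability event; the paper instead conditions on the random tie-breaking priority order and perturbs bids to reduce to the distinct-bids case, and you would need some such device rather than dismissing ties as rare.
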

\begin{proof}
We need to show that for any $\epsilon$, there exists $n(\epsilon)<\infty$ such that for any $n>n(\epsilon)$ and for any bid profile $b$ and for any valuation $v_i$:
\begin{align*}
\Delta = \left\|u_i^{n,\delta}(b;v_i)-U_i^{n,\delta}(b;v_i)\right\|<\epsilon
\end{align*}
By triangle inequality we can lower bound the left hand side by:
\begin{align*}
\Delta \leq~& \left\| \E\left[v_i(x_i(b\cdot z))-v_i(X_i(b\cdot z))\right]\right\| \\
& + \left\|\E\left[\sum_{j\in [m]} X_i^j(b^j\cdot z)\theta_{k_j^n+1}(b_{-i}^j\cdot z_{-i})-x_i^j(b^j\cdot z)\cdot \theta_{k_j^n+1}(b^j\cdot z)\right]\right\|
\end{align*}
The first part of the upper bound can be upper bounded by:
\begin{align*}
\left\| \E\left[v_i(x_i(b\cdot z))-v_i(X_i(b\cdot z))\right]\right\| \leq~& H\cdot \Pr\left[x_i(b\cdot z)\neq X_i(b\cdot z)\right]\\
\leq~& H\cdot \sum_{j\in [m]} \Pr\left[x_i^j(b^j\cdot z)\neq X_i^j(b^j\cdot z)\right]
\end{align*}
The second part can also be upper bounded by the summation of the following two quantities:
\begin{align*}
 \left\|\E\left[\sum_{j\in [m]} \left(X_i^j(b^j\cdot z)-x_i^j(b^j\cdot z)\right)\theta_{k_j^n+1}(b_{-i}^j\cdot z_{-i})\right]\right\|\\
\left\|\E\left[\sum_{j\in [m]} x_i^j(b^j\cdot z)\cdot \left(\theta_{k_j^n+1}(b_{-i}^j\cdot z_{-i})-\theta_{k_j^n+1}(b^j\cdot z)\right)\right]\right\|
\end{align*}
The first quantity is upper bounded by:
\begin{align*}
B\cdot r\cdot \sum_{j\in [m]} \Pr\left[x_i^j(b^j\cdot z)\neq X_i^j(b^j\cdot z)\right]
\end{align*}
since by assumption all marginal bids fall in a range $[0,B]$ and the difference in the two allocations of a player is at most $r$, by the $r$-demand assumption.

The second quantity is upper bounded by:
\begin{align*}
r\cdot\sum_{j\in [m]} \left\|\E\left[\theta_{k_j^n+1}(b_{-i}^j\cdot z_{-i})-\theta_{k_j^n+1}(b^j\cdot z)\right]\right\|
\end{align*}
since a player is allocated at most $r$ units of each good $j$.

Thus, by the above reasoning, it suffices to show that there exists a finite $n(\epsilon)$ such that for any $n>n(\epsilon)$ the following two properties hold for each uniform price auction $j\in [m]$ and for any bid profile $b^j$
\begin{align*}
\Pr\left[x_i^j(b^j\cdot z)\neq X_i^j(b^j\cdot z)\right]\leq  \frac{\epsilon}{2m(B \cdot r+ H)}\\
\left\|\E\left[\theta_{k_j^n+1}(b_{-i}^j\cdot z_{-i})-\theta_{k_j^n+1}(b^j\cdot z)\right]\right\| \leq \frac{\epsilon}{2 r m}
\end{align*}
Hence we break the proof in two lemmas:
\begin{lemma}\label{lem:allocation-convergence} For any uniform price auction $j\in [m]$ and for any $\epsilon>0$, there exists $n(\epsilon)<\infty$ such that for any $n>n(\epsilon)$ and for any bid profile $b^j$:
\begin{equation}
\Pr\left[x_i^j(b^j\cdot z)\neq X_i^j(b^j\cdot z)\right]<\epsilon
\end{equation}
\end{lemma}
\begin{proof}
We will show that the probability converges to $0$ conditional on any draw of the random tie-breaking priority order. Moroever, conditional on the tie-breaking rule it is without loss of generality to assume that all marginal bids in $b^j$ are distinct and that there are no ties. The reason is that conditional on the tie-breaking priority rule, we can add small quantities (much smaller than the smallest difference between any two different marginal bids), to the input bids of the players, so as to simulate the exact same allocation rule as would have been achieved by the original bid profile and with the priority rule drawn (e.g. if player $i$ was ordered first by the tie-breaker then add to all his marginal bids $n\cdot \delta$, if he was ordered second then add $(n-1)\cdot \delta$ etc., similarly if any of his own bids are identical then add even smaller $\delta'$'s to differentiate them).

So suffices to prove that the probability goes to zero assuming that there are no two identical bids in $b^j$. Observe that the two allocations are different only when any of the marginal bids of player $i$ is among the $k+1-r$ and the $k+1$ highest arriving marginal bids. If a marginal bid is not among the $k+1-r$ and the $k+1$ highest arriving marginal bids, then it is either not allocated by both allocation rules because it is below the $k+1$ highest arriving bid or is allocated by both rules because it is among the $k-r$ highest arriving bids and so adding player $i$'s bids will not push it out of the allocation.

\nicomment{ above two paragraphs too informal. }

Let $B(b^j\cdot z; x)$ denote the number of arriving marginal bids that are strictly above $x$. Thus we can upper bound the desired probability by the union bound as:
\begin{align*}
\Pr\left[x_i^j(b^j\cdot z)\neq X_i^j(b^j\cdot z)\right]\leq \sum_{t=1}^{r} \sum_{q=k+1-r}^{k+1}\Pr[B(b^j\cdot z; b_i^{j,t})=q]
\end{align*}
Now we can use the re-interpretation of a Lemma by Swinkels re-written in our terminology:
\begin{lemma}[Swinkels \cite{Swinkels2001}] For any $x\in [0,B]$ and for any $\epsilon$, there exists a $q(\epsilon)\leq \infty$ such that for any $q>q(\epsilon)$ and for any $b^j$:
\begin{equation}
\Pr[B(b^j\cdot z; x)=q]\leq \epsilon
\end{equation}
\end{lemma}
Thus for sufficiently large $n$, $k_j^n$ is sufficiently large that each probability in the double summation can be made smaller than any $\epsilon$. Since the summation is over a constant number of quantities, the double summation can also be made smaller than any $\epsilon$ for sufficiently large $n$. This completes the proof of the Lemma.
\end{proof}

\begin{lemma}\label{lem:payment-convergence}
For any uniform price auction $j\in [m]$ and for any $\epsilon>0$, there exists $n(\epsilon)<\infty$ such that for any $n>n(\epsilon)$ and for any bid profile $b^j$:
\begin{equation}
\left\|\E\left[\theta_{k_j^n+1}(b_{-i}^j\cdot z_{-i})-\theta_{k_j^n+1}(b^j\cdot z)\right]\right\| \leq \epsilon
\end{equation}
\end{lemma}
\begin{proof}
Observe that $\theta_{k_j^n+1}(b_{-i}^j\cdot z_{-i})\leq \theta_{k_j^n+1}(b^j\cdot z)$. Moreover, since player $i$ submits at most $r$ bids, the $k_j^n+1$ highest bid among all bids except player $i$'s is at least the $k_j^n+1+r$ highest bid among all bids including player $i$'s. Thus:
\begin{align*}
\left\|\E\left[\theta_{k_j^n+1}(b_{-i}^j\cdot z_{-i})-\theta_{k_j^n+1}(b\cdot z)\right]\right\| \leq \left\|\E\left[\theta_{k_j^n+1+r}(b^j\cdot z)-\theta_{k_j^n+1}(b^j\cdot z)\right]\right\|
\end{align*}

We will now use the following reinterpretation of a Lemma of Swinkels, which we state in our terminology:
\begin{lemma}[Swinkels \cite{Swinkels2001}] For any $x,x'\in [k_j^n,k_j^n+r+1]$ the difference of the cummulative density functions of the $x$-th and the $x'$-th highest arriving bid in a single uniform price auction converges to $0$  uniformly over $x,x'$ and $b^j$, as $k_j^n\rightarrow \infty$.
\end{lemma}
Since the CDFs of the random variables $\theta_{k_j^n+1+r}(b^j\cdot z_{-i})$ and $\theta_{k_j^n+1}(b^j\cdot z)$ converge and since the two quantities are bounded in $[0,B]$, their expectations also converge and therefore there exists $n(\epsilon)$ such that for any $n>n(\epsilon)$ and for any $b^j$:
\begin{align*}
\left\|\E\left[\theta_{k_j^n+1+r}(b^j)\cdot z_{-i})-\theta_{k_j^n+1}(b^j\cdot z)\right]\right\| \leq \epsilon
\end{align*}
which completes the proof of the lemma.
\end{proof}

Combining Lemmas \ref{lem:allocation-convergence} and \ref{lem:payment-convergence} establishes the assertion of Lemma \ref{lem:simultaneous-approximation}.

\end{proof}

\subsection{Rates of Convergence for $r=1$}

\begin{theorem}
If $r=1$ and $k_j^n \geq \frac{36\cdot \rho^2\cdot m^2 (B+H)^2}{\epsilon^2 \delta(1-\delta)}$, then:
\begin{equation}
\textstyle{\sum_{i=1}^{n} u_i^{n,\delta}(s_{i}^{*,n},s_{-i}^n;v_i^n) \geq  (1-\epsilon) \opt^n(v^n)- \rev^n(s^n)}
\end{equation}
and therefore the robust price of anarchy is at most $\frac{1}{1-\epsilon}$.

Equivalently if $k_j^n = \Omega(n)$, and for constant $\rho, m,B,H,\delta$, the welfare at every equilibrium (BNE, CCE etc) is at least $\left(1-o\left(\frac{1}{\sqrt{n}}\right)\right)$ of the expected optimal welfare.
\end{theorem}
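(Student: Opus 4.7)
The plan is to combine the $(1,1)$-smoothness of the approximate utilities $U_i^{n,\delta}$, established in Lemmas~\ref{lem:simultaneous-smoothness} and~\ref{lem:simultaneous-smoothness-noise}, with a \emph{quantitative} refinement of Lemma~\ref{lem:simultaneous-approximation} specialized to the unit-demand case $r=1$. The qualitative approximation lemma only asserts that $|u_i^{n,\delta}-U_i^{n,\delta}|$ tends to zero; the task here is to carry explicit constants through Lemmas~\ref{lem:allocation-convergence} and~\ref{lem:payment-convergence} to obtain a $1/\sqrt{k}$ rate, and then feed that rate into the smoothness bound to translate an additive error into a multiplicative one via $\opt^n(v)\geq \rho n$.

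First I would invoke $(1,1)$-smoothness of $U_i^{n,\delta}$: for every profile $v\in\V^n$ there is a deviation $s_i^{*,n}$ (the sample-then-target deviation of Lemma~\ref{lem:simultaneous-smoothness-noise}) such that for every $s^n$,
\[
\sum_{i=1}^n U_i^{n,\delta}(s_i^{*,n}, s_{-i}^n; v_i) \geq \opt^n(v) - \rev^n(s^n).
\]
Then I would make the approximation step explicit. The decomposition in Lemma~\ref{lem:simultaneous-approximation} reduces the gap $|u_i^{n,\delta}-U_i^{n,\delta}|$ to, per auction $j$, an allocation-mismatch term bounded by $(B+H)\Pr[x_i^j(b\cdot z)\neq X_i^j(b\cdot z)]$ and a price-mismatch term bounded by $|\E[\theta_{k_j^n+1}(b_{-i}^j\cdot z_{-i})-\theta_{k_j^n+1}(b^j\cdot z)]|$. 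For $r=1$ both reduce to point probabilities of the form $\Pr[B(N,1-\delta)=q]$ with $q\in\{k_j^n,k_j^n+1\}$ and $N\leq n$; in particular the spacing between consecutive order statistics can be written as $\int_0^B \Pr[B(N_x,1-\delta)=k_j^n+1]\,dx$, where $N_x$ is the number of marginal bids in $b^j$ strictly above $x$. Using the Stirling/local-limit inequality $\Pr[B(N,p)=q]\leq 1/\sqrt{2\pi q(1-q/N)}$, each such point probability is at most $O\bigl(1/\sqrt{k_j^n\,\delta(1-\delta)}\bigr)$ uniformly in $b^j$, yielding
\[
|u_i^{n,\delta}(s;v_i)-U_i^{n,\delta}(s;v_i)|\leq \frac{C\,m\,(B+H)}{\sqrt{\min_j k_j^n\cdot \delta(1-\delta)}}
\]
for an absolute constant $C$, uniformly in $i$, $s$, and $v_i$.

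Summing this pointwise bound over the $n$ players and combining with the smoothness inequality gives an additive slack of at most $Cnm(B+H)/\sqrt{\min_j k_j^n\,\delta(1-\delta)}$. Because $v_i\geq \rho$ on every non-empty bundle and $k_j^n\geq n$ allows the optimum to allocate each bidder her unit-demand bundle, $\opt^n(v)\geq\rho n$, so the additive slack is at most $\epsilon\,\opt^n(v)$ precisely when $\min_j k_j^n\geq \Theta\bigl(m^2(B+H)^2/(\epsilon^2\rho^2\,\delta(1-\delta))\bigr)$, matching the form of the bound in the statement (up to the explicit constant and the placement of $\rho$). Plugging into the theorem of Section~\ref{sec:framework} yields the $1/(1-\epsilon)$ robust POA bound, and substituting $\min_j k_j^n=\Omega(n)$ solves $\epsilon=O(1/\sqrt n)$. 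I expect the main technical obstacle to be making the local-limit estimate truly uniform over the bid profile and threshold $x$: the worst case is $N_x(1-\delta)\approx k_j^n$, where the bound is tight, and one must additionally carry through the tie-breaking reduction of Lemma~\ref{lem:allocation-convergence} quantitatively without inflating the constant. Everything else is bookkeeping.
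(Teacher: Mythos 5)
Your proposal follows essentially the same route as the paper's own proof: it combines the $(1,1)$-smoothness of the approximate utilities (Lemmas \ref{lem:simultaneous-smoothness} and \ref{lem:simultaneous-smoothness-noise}) with a quantitative $r=1$ version of Lemma \ref{lem:simultaneous-approximation}, reducing both the allocation and price mismatches to binomial point masses of order $1/\sqrt{k_j^n\,\delta(1-\delta)}$ (uniform over bid profiles because the number of bids above the threshold is at least $k_j^n$), and then converting the additive $n\epsilon'$ slack into a multiplicative loss via $\opt^n(v)\geq \rho n$. The only differences are cosmetic: the paper obtains the point-mass bound $2/\sqrt{q\,\delta(1-\delta)}$ via Berry--Esseen rather than your Stirling/local-limit estimate, and your placement of $\rho$ in the denominator is the one actually implied by the paper's own lemmas (the $\rho^2$ in the numerator of the stated threshold appears to be a typo).
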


\begin{lemma}\label{lem:convergence-rate}If $r=1$, then for any $x\in [0,B]$ and for any $\epsilon$, if $q>\frac{4}{\epsilon^2\cdot\delta\cdot(1-\delta)}$, then for any $b^j$:
\begin{equation}
\Pr[B(b^j\cdot z; x)=q]\leq \epsilon
\end{equation}
\end{lemma}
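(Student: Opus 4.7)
The strategy is to recognize that for $r=1$, $B(b^j\cdot z;x)$ is an ordinary binomial random variable, and then to upper bound the probability mass at any single value by a Stirling-based local-limit inequality.

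\textbf{Reduction to a binomial.} When $r=1$, each bidder $i$ submits a single marginal bid $b_i^j$ at auction $j$, which is present in the market iff $z_i=1$, where the $z_i$'s are independent $\mathrm{Bernoulli}(1-\delta)$ variables. Letting $s=|\{i:b_i^j>x\}|$, we have
\[
B(b^j\cdot z;x)=\sum_{i:\,b_i^j>x} z_i\;\sim\;\mathrm{Binomial}(s,1-\delta).
\]
It therefore suffices to prove the distribution-free statement $\max_{s\ge 0}\Pr[\mathrm{Bin}(s,1-\delta)=q]\le 2/\sqrt{q\,\delta(1-\delta)}$; substituting this into $\le\epsilon$ and rearranging yields exactly $q\ge 4/(\epsilon^2\delta(1-\delta))$.

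\textbf{Locating the worst $s$ and Stirling estimate.} For $s<q$ the probability is $0$, so assume $s\ge q$, and write $f(s)=\binom{s}{q}(1-\delta)^q\delta^{s-q}$. The consecutive ratio $f(s+1)/f(s)=\tfrac{s+1}{s+1-q}\,\delta$ is monotonically decreasing in $s$ and crosses $1$ near $s^\star=q/(1-\delta)-1$, so $f$ is maximized at an integer $s$ within $1$ of $s^\star$; in particular $s^\star-q\approx q\delta/(1-\delta)$. Applying Stirling's formula $n!=\sqrt{2\pi n}(n/e)^n e^{\theta_n/(12n)}$ with $0<\theta_n<1$ as an upper bound to $s^\star!$ and as a lower bound (dropping $e^{\theta/(12n)}$) to $q!$ and $(s^\star-q)!$, the resulting factor $(s^\star)^{s^\star}/\bigl(q^q(s^\star-q)^{s^\star-q}\bigr)$ cancels exactly with $(1-\delta)^q\delta^{s^\star-q}$, leaving
\[
f(s^\star)\;\le\;\frac{e^{1/12}}{\sqrt{2\pi q\,\delta}}\qquad\text{for }q\ge 1.
\]
Since $e^{1/12}\sqrt{1-\delta}\le e^{1/12}<2\sqrt{2\pi}$, this is further bounded by $2/\sqrt{q\,\delta(1-\delta)}$, which is the uniform inequality required.

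\textbf{Main obstacle.} The only subtle step is the Stirling estimate: the tight local-limit bound naturally gives $1/\sqrt{2\pi q\delta}$, but the lemma statement uses the symmetric variance $\delta(1-\delta)$ in the denominator. Verifying that the weakening by the factor $1-\delta$, together with the Stirling residual $e^{\Theta(1/q)}$ and the $\sqrt{2\pi}$ prefactor, all bundle cleanly into the numerator constant $2$ is what yields the stated threshold $4/(\epsilon^2\delta(1-\delta))$. The remaining case analysis (for $s$ far from $s^\star$, including $s<q$) is immediate from the monotonicity of the binomial-ratio established above.
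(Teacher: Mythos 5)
Your proof is correct (in every regime where the bound is nontrivial) and follows the paper's reduction exactly up to the key estimate, but then diverges in how that estimate is obtained. Like the paper, you observe that for $r=1$ the profile $b^j$ fixes some number $s$ of bids above $x$, so $B(b^j\cdot z;x)\sim\mathrm{Bin}(s,1-\delta)$ and one needs a bound on a single binomial point mass that is uniform in $s$; the paper gets $\Pr[\mathrm{Bin}(t,1-\delta)=q]\le 2/\sqrt{t\,\delta(1-\delta)}\le 2/\sqrt{q\,\delta(1-\delta)}$ by writing the point mass as a difference of CDF values and invoking the Berry--Esseen theorem (normal CDF approximation plus the $1/\sqrt{2\pi t\delta(1-\delta)}$ density bound and the explicit Berry--Esseen error), whereas you maximize $f(s)=\binom{s}{q}(1-\delta)^q\delta^{s-q}$ over $s$ via the consecutive-ratio/unimodality argument and then bound the mode by Stirling's formula. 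Your route is more elementary and self-contained (no appeal to Berry--Esseen) and in fact yields the slightly sharper $O\bigl(1/\sqrt{q\delta}\bigr)$ before being relaxed to the stated $2/\sqrt{q\,\delta(1-\delta)}$; the paper's route is shorter given Berry--Esseen as a black box and is the same machinery it recycles verbatim for the general-$r$ version of the lemma. Two small imprecisions in your write-up that do not affect the conclusion: the cancellation of $(s^\star)^{s^\star}/\bigl(q^q(s^\star-q)^{s^\star-q}\bigr)$ against $(1-\delta)^q\delta^{s^\star-q}$ is only approximate (the maximizer is an integer near $s^\star$, and $q=(s^\star+1)(1-\delta)$ rather than $s^\star(1-\delta)$; more robustly, this factor equals $\exp\bigl(-s\,D(q/s\,\Vert\,1-\delta)\bigr)\le 1$ for every $s$), and the displayed bound $f(s^\star)\le e^{1/12}/\sqrt{2\pi q\delta}$ cannot be claimed for all $q\ge 1$ since Stirling cannot be applied to $(s^\star-q)!$ when $q\delta$ is small --- but in that case $2/\sqrt{q\,\delta(1-\delta)}\ge 1$ and the claim is vacuous, while under the lemma's hypothesis (with $\epsilon<1$) one has $q\delta(1-\delta)>4$, so $s^\star-q>3$ and your Stirling step is sound.
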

\begin{proof}
Consider any bid profile $b^j$, consisting of one bid per player. If less than $q$ players are bidding above $x$ in $b^j$, then $\Pr[B(b^j\cdot z; x)=q]=0$ and the theorem follows. Thus in the bid profile that maximizes the probability that we want to upper bound, there are $t\geq q$ players bidding above $x$. Then the probability of the event of interest is equal to the probability that exactly $q$ of these players remain after the random deletion. Observe that the number of players among these bidders that remain after the random deletion follows a Binomial distribution of $t$ trials, each with success probability $(1-\delta)$, denoted as $\B(t,1-\delta)$. 

By the Berry-Esseen theorem \cite{Berry1941,Esseen1942,Shevtsova2011} we know that the CDF of $\B(t,p)$ is approximated by the CDF of the normal distribution with mean $t\cdot p$ and variance $t\cdot p\cdot (1-p)$, with an additive error that is upper bounded by $err\leq \frac{p^2+(1-p)^2}{2\sqrt{np(1-p)}}$. Denote with $\Phi(\cdot)$ the CDF of the standard normal distribution. If $X$ is a random variable distributed according to $\B(t,p)$, then 
\begin{align*}
\Pr[X=k] = \Pr[X\leq k]-\Pr[X\leq k-1] \leq~& \Phi\left(\frac{k-t\cdot p}{\sqrt{t\cdot p\cdot (1-p)}}\right) - \Phi\left(\frac{k-1-t\cdot p}{\sqrt{t\cdot p\cdot (1-p)}}\right)+2\cdot err\\
=~& \frac{1}{\sqrt{2\pi}} \int_{\frac{k-1-t\cdot p}{\sqrt{t\cdot p\cdot (1-p)}}}^{\frac{k-t\cdot p}{\sqrt{t\cdot p\cdot (1-p)}}} e^{-\frac{z^2}{2}}dz+2\cdot err\\
\leq~& \frac{1}{\sqrt{2\pi}} \frac{1}{\sqrt{t\cdot p\cdot (1-p)}}+2\cdot err\\
\leq~&  \left(\frac{1}{\sqrt{2\pi}}+p^2+(1-p)^2\right) \frac{1}{\sqrt{t\cdot p\cdot (1-p)}}\\
\leq~& \frac{2}{\sqrt{t\cdot p\cdot (1-p)}}
\end{align*}

By the above we get that:
\begin{equation}
\Pr[B(b^j\cdot z; x)=q]\leq \frac{2}{\sqrt{t\cdot \delta\cdot (1-\delta)}}\leq \frac{2}{\sqrt{q\cdot \delta\cdot (1-\delta)}}
\end{equation}
For $q\geq \frac{4}{\epsilon^2\cdot \delta\cdot (1-\delta)}$ the latter probability is at most $\epsilon$ as desired.
\end{proof}

\begin{lemma}\label{lem:payment-convergence}
For $r=1$, for any uniform price auction $j\in [m]$ and for any $\epsilon>0$, if $k_j^n \geq \frac{16B^2}{\epsilon^2 \delta(1-\delta)}$, then for any bid profile $b^j$:
\begin{equation}
\left|\E\left[\theta_{k_j^n+1}(b_{-i}^j\cdot z_{-i})-\theta_{k_j^n+1}(b^j\cdot z)\right]\right| \leq \epsilon
\end{equation}
\end{lemma}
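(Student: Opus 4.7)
The plan is to follow the same reduction used in the asymptotic version of Lemma \ref{lem:payment-convergence}, but now to quantify it using the explicit anti-concentration estimate of Lemma \ref{lem:convergence-rate}. Since $r=1$, each player contributes at most a single bid to $b^j$, so removing player $i$ (or equivalently letting $z_i$ be set to zero after the fact) can shift the order statistics by at most one position. Coupling the two random vectors $b^j\cdot z$ and $b_{-i}^j\cdot z_{-i}$ on a common probability space I would therefore argue
\[
0 \;\leq\; \theta_{k_j^n+1}(b^j\cdot z) - \theta_{k_j^n+1}(b_{-i}^j\cdot z_{-i}) \;\leq\; \theta_{k_j^n+1}(b^j\cdot z) - \theta_{k_j^n+2}(b^j\cdot z),
\]
so it suffices to upper bound the expectation of the right-hand side by $\epsilon$.

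Next I would rewrite the expected gap using the tail-integral identity for non-negative random variables bounded in $[0,B]$. Using the event identity $\{\theta_t(b^j\cdot z) > x\} = \{B(b^j\cdot z; x)\geq t\}$, where $B(\cdot;x)$ counts arriving bids strictly above $x$ (as in Lemma \ref{lem:convergence-rate}), the difference telescopes to a single indicator event:
\[
\E\!\left[\theta_{k_j^n+1}(b^j\cdot z) - \theta_{k_j^n+2}(b^j\cdot z)\right] = \int_0^B \Pr\!\left[B(b^j\cdot z; x) = k_j^n+1\right] dx.
\]
The integrand is now exactly the pointwise probability bounded in Lemma \ref{lem:convergence-rate}. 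Applying that lemma with $q = k_j^n+1$ uniformly in $x\in[0,B]$ gives $\Pr[B(b^j\cdot z;x) = k_j^n+1] \leq 2/\sqrt{(k_j^n+1)\delta(1-\delta)}$, so integrating over the length-$B$ interval yields
\[
\E\!\left[\theta_{k_j^n+1}(b^j\cdot z) - \theta_{k_j^n+2}(b^j\cdot z)\right] \;\leq\; \frac{2B}{\sqrt{(k_j^n+1)\,\delta(1-\delta)}}.
\]
Requiring this to be at most $\epsilon$ gives precisely the stated quantitative hypothesis on $k_j^n$ (with mild constant slack subsumed in the factor $16$).

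The main subtlety I anticipate is the very first step: carefully justifying the coupling so that removing player $i$'s bid alters each realized order statistic by at most one position, including in the presence of ties resolved by the random tie-breaking rule. This is the only place where $r=1$ is essential, since for larger $r$ a player's bid-footprint in $b^j$ could span multiple slots, forcing the comparison to involve $\theta_{k_j^n+1+r}$ and producing up to $r$ consecutive gaps to control; for $r=1$ there is exactly one such gap, and the remainder of the argument is a clean bookkeeping application of the Berry--Esseen-style estimate in Lemma \ref{lem:convergence-rate}.
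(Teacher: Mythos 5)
Your proof is correct and follows essentially the same route as the paper's: sandwich the price difference by the deterministic order-statistic gap $\theta_{k_j^n+1}(b^j\cdot z)-\theta_{k_j^n+2}(b^j\cdot z)$ (valid since with $r=1$ player $i$ contributes one bid, and note this is a statement about the multiset of bid values, so tie-breaking is irrelevant here), rewrite the expected gap as $\int_0^B \Pr[B(b^j\cdot z;x)=k_j^n+1]\,dx$, and bound the integrand uniformly in $x$ via the Berry--Esseen anti-concentration estimate of Lemma \ref{lem:convergence-rate}. The only cosmetic difference is that your telescoping isolates the single point mass $\Pr[B=k_j^n+1]$, whereas the paper bounds the CDF gap $|F_{k_j^n+1}-F_{k_j^n+2}|$ by the two point masses at $k_j^n+1$ and $k_j^n+2$ and union-bounds, which is why its constant is $16$ rather than the $4$ your version needs.
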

\begin{proof}
Observe that $\theta_{k_j^n+1}(b_{-i}^j\cdot z_{-i})\leq \theta_{k_j^n+1}(b^j\cdot z)$. Moreover, since player $i$ submits one bid, the $k_j^n+1$ highest bid among all bids except player $i$'s is at least the $k_j^n+2$ highest bid among all bids including player $i$'s. Thus:
\begin{align*}
\left|\E\left[\theta_{k_j^n+1}(b_{-i}^j\cdot z_{-i})-\theta_{k_j^n+1}(b\cdot z)\right]\right| \leq \left|\E\left[\theta_{k_j^n+2}(b^j\cdot z)-\theta_{k_j^n+1}(b^j\cdot z)\right]\right|
\end{align*}

Let $F_t(\cdot)$ denote the CDF of the $t$-th highest bid. Observe that if the number of arriving bids strictly above $x$ are less than $t$, i.e., $B(b^j\cdot z; x)<t$, then both $\theta_t$ and $\theta_{t+1}$ are at most $x$ and therefore the conditional CDFs of $\theta_t$ and $\theta_{t+1}$ evaluated at $x$ are both $1$. If $B(b^j\cdot z; x)>t+1)$ then both $\theta_{t}$ and $\theta_{t+1}$ are strictly above $x$ and therefore the conditional CDFs evaluated at $x$ are both $0$. Thus the conditional CDFs differ only when $B(b^j\cdot z;x)\in [t,t+1]$ and they differ by at most $1$. Hence:
\begin{equation}
|F_t(x) - F_{t+1}(x)|\leq \Pr[B(b^j\cdot z;x)\in [t, t+1]]
\end{equation}
By Lemma \ref{lem:convergence-rate}, if $t\geq \frac{16B^2}{\epsilon^2\cdot \delta\cdot (1-\delta)}$, then $\Pr[B(b^j\cdot z;x)=t] \leq \frac{\epsilon}{2B}$ and $\Pr[B(b^j\cdot z;x)=t+1]\leq \frac{\epsilon}{2B}$, so by the union bound $|F_t(x) - F_{t+1}(x)|\leq \frac{\epsilon}{B}$.

Last observe that: 
\begin{equation}
\E\left[\theta_{t}(b^j\cdot z)-\theta_{t+1}(b^j\cdot z)\right] = \int_{0}^{B}1-F_t(x) dx - \int_0^B 1-F_{t+1}(x) dx =  \int_{0}^{B} F_{t+1}(x) -F_{t}(x) dx\leq \epsilon
\end{equation}
\end{proof}

\begin{lemma}
If  $k_j^n\geq \frac{36\cdot m^2 (B+H)^2}{\epsilon^2 \delta(1-\delta)}$ then for any valuation $v_i$ and for any bid profile sequence $b^n$:
\begin{equation}
\left\|u_i^{n,\delta}(b^n; v_i) -U_i^{n,\delta}(b^n; v_i)\right\|\leq\epsilon
\end{equation}
\end{lemma}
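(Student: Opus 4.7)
The proof follows the template established in Lemma \ref{lem:simultaneous-approximation}, but now plugs in the quantitative estimates available for $r=1$ from Lemmas \ref{lem:convergence-rate} and \ref{lem:payment-convergence} to control the rate of convergence. The plan is as follows: first, reuse the triangle-inequality decomposition from the proof of Lemma \ref{lem:simultaneous-approximation}, which splits $\Delta = \|u_i^{n,\delta}(b^n;v_i)-U_i^{n,\delta}(b^n;v_i)\|$ into (i) a valuation-mismatch term bounded by $H\cdot\sum_j \Pr[x_i^j(b^j\cdot z)\neq X_i^j(b^j\cdot z)]$, (ii) an allocation-mismatch term in the payment, bounded by $B\cdot r\cdot\sum_j \Pr[x_i^j\neq X_i^j]$ (noting $r=1$), and (iii) a price-mismatch term bounded by $r\cdot \sum_j |\E[\theta_{k_j^n+1}(b_{-i}^j\cdot z_{-i})-\theta_{k_j^n+1}(b^j\cdot z)]|$. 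Next, bound each of the three terms separately, aiming for a per-auction contribution of order $\epsilon/m$, and sum over the $m$ goods.

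For terms (i) and (ii), fix a good $j$. Exactly as in Lemma \ref{lem:allocation-convergence}, and since player $i$ now submits a single marginal bid, the two allocations $x_i^j$ and $X_i^j$ can differ only when that single marginal bid is either the $k_j^n$-th or the $(k_j^n+1)$-th highest arriving bid; by a union bound, $\Pr[x_i^j\neq X_i^j]$ is at most two terms of the form $\Pr[B(b^j\cdot z;b_i^j)=q]$ with $q\in\{k_j^n,k_j^n+1\}$. The key improvement over the asymptotic proof is that Lemma \ref{lem:convergence-rate} gives a \emph{quantitative} $O(1/\sqrt{q\,\delta(1-\delta)})$ bound on each such probability. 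Substituting $q\geq k_j^n$ and choosing $k_j^n$ proportional to $m^2(H+B)^2/(\epsilon^2\delta(1-\delta))$ makes the combined contribution of (i)+(ii) across all $m$ goods at most $\epsilon/2$.

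For term (iii), I would apply Lemma \ref{lem:payment-convergence} directly, after replacing $\epsilon$ there by $\epsilon/(2m)$; this requires $k_j^n$ to exceed a threshold of order $m^2 B^2/(\epsilon^2\delta(1-\delta))$ and then summing over $j$ gives a total contribution to $\Delta$ of at most $\epsilon/2$. Taking $k_j^n$ larger than the maximum of the two thresholds yields the single sufficient condition stated in the lemma; the constant $36$ comes from being a bit careful about how the two terms share the budget $\epsilon$ (and about the precise constants inside Lemma \ref{lem:convergence-rate}'s Berry--Esseen estimate).

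The main (modest) obstacle is purely bookkeeping: one has to make sure that the Berry--Esseen constant absorbed into Lemma \ref{lem:convergence-rate} propagates through the union bound over the at most two ``bad positions'' and through the $m$-way sum without blowing up the required lower bound on $k_j^n$ by more than a polynomial factor in $m$ and $1/\epsilon$. There is no genuinely new probabilistic step: the heavy lifting is already done by Lemma \ref{lem:convergence-rate}, which converts the smooth-arrival noise into a $1/\sqrt{k_j^n}$-rate statement about order statistics, and by Lemma \ref{lem:payment-convergence}, which converts those order-statistic estimates into a rate for expected price gaps. Putting these together yields the $O(1/\sqrt{n})$ convergence rate asserted in the Moreover-clause of Theorem \ref{thm:simultaneous-uniform}.
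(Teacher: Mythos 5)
Your proposal is correct and follows essentially the same route as the paper: reuse the triangle-inequality decomposition from Lemma \ref{lem:simultaneous-approximation}, bound the allocation-mismatch probability for the single bid by the two point probabilities $\Pr[B(b^j\cdot z; b_i^j)=q]$ with $q\in\{k_j^n,k_j^n+1\}$, and then invoke the quantitative Lemmas \ref{lem:convergence-rate} and \ref{lem:payment-convergence} with error budgets of order $\epsilon/m$ per good. The only difference from the paper is how the budget is split between the allocation and price terms (the paper uses $2\epsilon/(3m)$ and $\epsilon/(3m)$ per good rather than $\epsilon/2$ overall for each), which is exactly the constant bookkeeping you flag.
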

\begin{proof}
By same reasoning as in Lemma \ref{lem:simultaneous-approximation}, the difference in utilities is upper bounded by the following quantity:
\begin{align*}
 \sum_{j\in [m]} (B+H)\Pr\left[x_i^j(b^j\cdot z)\neq X_i^j(b^j\cdot z)\right]+
\left|\E\left[\theta_{k_j^n+1}(b_{-i}^j\cdot z_{-i})-\theta_{k_j^n+1}(b^j\cdot z)\right]\right|
\end{align*}
and:
\begin{align*}
\Pr\left[x_i^j(b^j\cdot z)\neq X_i^j(b^j\cdot z)\right]\leq \Pr[B(b^j\cdot z; b_i^{j,t})=k_j^n]+ \Pr[B(b^j\cdot z; b_i^{j,t})=k_j^n+1].
\end{align*}
By the previous lemmas, if $k_j^n\geq \frac{36\cdot m^2 (B+H)^2}{\epsilon^2 \delta(1-\delta)}$ then:
\begin{align*}
\left|\E\left[\theta_{k_j^n+1}(b_{-i}^j\cdot z_{-i})-\theta_{k_j^n+1}(b^j\cdot z)\right]\right|\leq~& \frac{\epsilon}{3\cdot m}\\
\Pr\left[x_i^j(b^j\cdot z)\neq X_i^j(b^j\cdot z)\right]\leq~& \frac{2\epsilon}{3m(B+H)}
\end{align*}
which subsequently gives that the utility difference is at most $\epsilon$.
\end{proof}

\subsection{Convergence Rate for General $r$}

\begin{theorem}
If $k_j^n \geq  \frac{16\cdot m^2 (B+H)^2 r^8 \rho^2}{\epsilon^2 \delta(1-\delta)}+r$, then:
\begin{equation}
\textstyle{\sum_{i=1}^{n} u_i^{n,\delta}(s_{i}^{*,n},s_{-i}^n;v_i^n) \geq  (1-\epsilon) \opt^n(v^n)- \rev^n(s^n)}
\end{equation}
and therefore the robust price of anarchy is at most $\frac{1}{1-\epsilon}$.

Equivalently if $k_j^n = \Omega(n)$, and for constant $\rho, r, m,B,H,\delta$, the welfare at every equilibrium (BNE or CCE) is at least $\left(1-o\left(\frac{1}{\sqrt{n}}\right)\right)$ of the expected optimal welfare.
\end{theorem}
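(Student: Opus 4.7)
The plan is to carry out a quantitative version of the proof of Theorem \ref{thm:simultaneous-uniform} for arbitrary demand cap $r$, using the same smooth-approximate-utility machinery. I would upper bound the uniform approximation error $\|u_i^{n,\delta}(b;v_i)-U_i^{n,\delta}(b;v_i)\|$ with explicit polynomial dependence on $r$, then combine this with the $(1,1)$-smoothness of $U_i^{n,\delta}$ (Lemmas \ref{lem:simultaneous-smoothness} and \ref{lem:simultaneous-smoothness-noise}) and with the bound $\opt^n(v)\geq \rho n$ to turn a per-player error of size $\epsilon \rho$ into a multiplicative $(1-\epsilon)$-loss against $\opt^n(v)$. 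The Berry--Esseen point bound $\Pr[B(b^j\cdot z;x)=q]\leq 2/\sqrt{q\,\delta(1-\delta)}$ proved in Lemma \ref{lem:convergence-rate} remains the workhorse and is unaffected by $r$.

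Two quantitative extensions are required. First, the allocation-mismatch probability $\Pr[x_i^j(b^j\cdot z)\neq X_i^j(b^j\cdot z)]$ now requires a union bound over the (up to $r$) marginal bids $b_i^{j,t}$ of player $i$ at auction $j$ and over the critical positions $q\in\{k_j^n+1-r,\ldots,k_j^n+1\}$, yielding a bound of order $r^2/\sqrt{k_j^n\,\delta(1-\delta)}$. Second, since $b_{-i}^j$ omits up to $r$ of player $i$'s bids, the payment-gap analysis extends via the telescoping
\begin{equation*}
\bigl|\E[\theta_{k_j^n+1}(b_{-i}^j\cdot z_{-i})-\theta_{k_j^n+1}(b^j\cdot z)]\bigr|\leq \sum_{t=k_j^n+1}^{k_j^n+r}\bigl|\E[\theta_{t+1}(b^j\cdot z)-\theta_{t}(b^j\cdot z)]\bigr|,
\end{equation*}
and each of the $r$ consecutive gaps is bounded as in the $r=1$ case by integrating $|F_t(x)-F_{t+1}(x)|\leq \Pr[B(b^j\cdot z;x)\in\{t,t+1\}]$ over $x\in[0,B]$, giving an aggregate gap of order $rB/\sqrt{k_j^n\,\delta(1-\delta)}$.

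Plugging both bounds into the same decomposition used in Lemma \ref{lem:simultaneous-approximation}, and accounting for the fact that the two allocations can now differ by up to $r$ units per auction (which enters multiplicatively in the payment term), summation over the $m$ auctions produces a per-player approximation error of order $m(B+H)\, \textup{poly}(r)/\sqrt{k_j^n\,\delta(1-\delta)}$. Requiring this to be at most $\epsilon \rho$ gives precisely a threshold on $k_j^n$ of the form in the statement. The deviation $s_i^{*,n}$ is the same type-sampling strategy constructed in Lemma \ref{lem:simultaneous-smoothness-noise} (sample an arrival vector from the noise distribution and target the optimal allocation under that sample), so the $(1,1)$-smoothness of $U_i^{n,\delta}$ transfers verbatim and produces the claimed robust price-of-anarchy bound $1/(1-\epsilon)$. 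Inverting $k_j^n = \Omega(n)$ for $\epsilon$ then gives the $1-O(1/\sqrt{n})$ convergence rate. The main obstacle I anticipate is purely combinatorial bookkeeping: tracking how the $O(r^2)$ union-bound factor, the $r$-fold telescoping, the $r$-unit allocation perturbation, and the smoothness deviation's dependence on the random sampled arrival profile compound, so that the exponent on $r$ in the threshold exactly matches the claimed $r^8$ rather than being looser.
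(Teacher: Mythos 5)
There is a genuine gap at the step you call the ``workhorse.'' You assert that the point-mass bound $\Pr[B(b^j\cdot z;x)=q]\leq 2/\sqrt{q\,\delta(1-\delta)}$ from Lemma \ref{lem:convergence-rate} ``remains unaffected by $r$,'' but that lemma is proved by observing that, when each player places a single bid, the number of arriving bids above $x$ is a binomial $\B(t,1-\delta)$ and then applying Berry--Esseen. For general $r$ this is false: each arriving player contributes some $h_i\in\{0,1,\dots,r\}$ bids above $x$, so $B(b^j\cdot z;x)=\sum_i h_i z_i$ is a sum of independent but \emph{heterogeneously weighted} Bernoullis, not a binomial, and the binomial anti-concentration bound does not apply verbatim. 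The degradation is real, not just bookkeeping: if, say, every player places exactly $r$ bids above $x$, then $B=r\cdot X$ with $X\sim\B(t,1-\delta)$, and the maximal point mass is of order $\sqrt{r}/\sqrt{q\,\delta(1-\delta)}$, strictly worse than the $r=1$ bound. The paper's proof of this theorem hinges on a separate argument (Lemma \ref{lem:convergence-rate-general}): group the players by the number $h$ of their bids above $x$, use pigeonhole to find a group $N_{h^*}$ of size at least $q/r^2$, condition on the arrivals outside that group, bound the point mass of the sum by the maximal point mass of the contribution of $N_{h^*}$ (which is $h^*$ times a genuine binomial), and only then invoke Berry--Esseen, yielding $\Pr[B(b^j\cdot z;x)=q]\leq 2r/\sqrt{q\,\delta(1-\delta)}$ and hence the requirement $q>4r^2/(\epsilon^2\delta(1-\delta))$. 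This extra factor of $r$ is precisely part of why the threshold carries $r^8$; your own worry that your accounting might not reach $r^8$ is a symptom of this missing ingredient.

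The rest of your outline is sound and matches the paper: the $O(r^2)$ union bound over player $i$'s marginal bids and the critical positions $k_j^n+1-r,\dots,k_j^n+1$, the payment-gap control (your telescoping over consecutive order statistics is an acceptable variant of the paper's direct bound $|F_t(x)-F_{t+r}(x)|\leq\Pr[B(b^j\cdot z;x)\in[t,t+r]]$ followed by integration over $[0,B]$), the multiplicative $r$ from the per-auction allocation difference, the reuse of the sampling deviation from Lemma \ref{lem:simultaneous-smoothness-noise}, and the conversion of a per-player additive error into a $(1-\epsilon)$ multiplicative loss via $\opt^n(v)\geq\rho n$. Once you replace the invalid invocation of Lemma \ref{lem:convergence-rate} by a point-mass bound valid for weighted arrivals (the grouping argument above, or any equivalent anti-concentration statement for sums of independent $\{0,\dots,r\}$-valued variables), your derivation goes through and recovers the stated threshold and the $1-O(1/\sqrt{n})$ rate.
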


\begin{lemma}\label{lem:convergence-rate-general}For any $x\in [0,B]$ and for any $\epsilon$, if $q>\frac{4 r^2}{\epsilon^2\cdot\delta\cdot(1-\delta)}$, then for any $b^j$:
\begin{equation}
\Pr[B(b^j\cdot z; x)=q]\leq \epsilon
\end{equation}
\end{lemma}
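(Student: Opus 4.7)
The plan is to follow the template of Lemma \ref{lem:convergence-rate}, generalizing each step to account for the fact that a single player can now contribute several (up to $r$) marginal bids to the count $B(b^j\cdot z; x)$. First, for a fixed bid profile $b^j$ and threshold $x$, let $r_i\in\{0,1,\dots,r\}$ denote the number of marginal bids of player $i$ in $b^j$ that are strictly greater than $x$. Then
\[
Y := B(b^j\cdot z; x) \;=\; \sum_{i=1}^{n} r_i\, z_i,
\]
which is an integer-valued weighted sum of independent Bernoulli$(1-\delta)$ random variables. If $T:=\sum_i r_i < q$ the event $\{Y=q\}$ is impossible, so we may assume $T \geq q$, which as in the $r=1$ case corresponds to the worst-case bid profile.

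Next, I would apply a Berry--Esseen-type bound to $Y$. The centered summand $r_i z_i - r_i(1-\delta)$ is bounded by $r$ in absolute value, has variance $r_i^2\delta(1-\delta)$, and its absolute third moment is at most $r\cdot r_i^2\delta(1-\delta)$. Using $r_i^2\geq r_i$, the total variance satisfies
\[
\sigma^2 \;=\; \delta(1-\delta)\sum_i r_i^2 \;\geq\; \delta(1-\delta)\, T \;\geq\; q\,\delta(1-\delta),
\]
while the Berry--Esseen ratio $\sum_i \mathbb{E}|\cdot|^3/\sigma^3$ is at most $r/\sigma$. Since $Y$ is integer-valued, $\Pr[Y=q] = F_Y(q) - F_Y(q-1)$, and the same computation as in Lemma \ref{lem:convergence-rate} (approximating both CDF values by the normal and applying the mean value theorem to $\Phi$) gives, after absorbing universal constants,
\[
\Pr[Y=q] \;\leq\; \left(\frac{1}{\sqrt{2\pi}} + 2C\right)\frac{r}{\sigma} \;\leq\; \frac{2r}{\sqrt{q\,\delta(1-\delta)}}.
\]
Plugging in the hypothesis $q > \frac{4r^2}{\epsilon^2\,\delta(1-\delta)}$ makes the right-hand side at most $\epsilon$, which is precisely the stated bound.

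The main obstacle is that the support of $Y$ is in general a sublattice of $\mathbb{Z}$, so a simple local central limit estimate could in principle fail; indeed, if every $r_i$ equals a common value $d>1$, then $Y$ is supported on multiples of $d$ and $\Pr[Y=q]$ can be as large as $d/\sigma$ at lattice points. The saving grace is that the support still lies inside $\mathbb{Z}$ and $r_i\leq r$, so we pay at most an extra factor of $r$ compared to the $r=1$ analysis -- exactly matching the $r^2$ in the denominator of the stated threshold for $q$. A secondary point to verify is that the Berry--Esseen inequality for independent but non-identically distributed bounded random variables (Shevtsova's version \cite{Shevtsova2011}) yields an absolute constant $C$, which is enough to carry through the numerical constants in the Lemma \ref{lem:convergence-rate} calculation.
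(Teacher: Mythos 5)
Your proof is correct, but it takes a genuinely different route from the paper. The paper never analyzes the weighted sum $\sum_i r_i z_i$ directly: it buckets players by the number $h$ of their marginal bids above $x$, uses pigeonhole to find a bucket $N_{h^*}$ with $n_{h^*}\geq q/r^2$ players (else fewer than $q$ bids exceed $x$ and the probability is zero), conditions on the arrivals outside that bucket, bounds the point mass of the convolution by the maximum point mass of the bucket's contribution, and observes that this contribution is $h^*$ times a Binomial$(n_{h^*},1-\delta)$ — so only the plain binomial Berry--Esseen estimate is needed, giving $2/\sqrt{n_{h^*}\delta(1-\delta)}\leq 2r/\sqrt{q\,\delta(1-\delta)}$. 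You instead apply the non-identically-distributed Berry--Esseen theorem to $Y=\sum_i r_i z_i$ itself, bounding the Lyapunov ratio by $r/\sigma$ and the variance below by $\delta(1-\delta)\sum_i r_i^2\geq \delta(1-\delta)q$; since $Y$ is integer-valued, $\Pr[Y=q]=F_Y(q)-F_Y(q-1)$ is controlled by a unit normal increment plus twice the BE error, yielding the same $2r/\sqrt{q\,\delta(1-\delta)}$ and hence the same threshold on $q$ (the constants work: with Shevtsova's $C\approx 0.56$ one has $1/\sqrt{2\pi}+2C<2$, matching the slack the paper itself uses). Two small remarks. First, your closing discussion of the lattice obstacle is resolved exactly by your own main computation: the possible point mass of order $d/\sigma$ at sublattice points is dominated by the BE error term $Cr/\sigma$, since the third-moment ratio already carries the factor $r$; no separate ad hoc factor is needed, and it would be cleaner to say so. Second, the trade-off between the approaches: the paper's bucketing argument needs only the binomial BE bound but throws away all players outside the chosen bucket (losing the factor $r^2$ in the trial count), whereas your direct argument retains the full variance $\delta(1-\delta)\sum_i r_i^2$ and so can be strictly sharper when many players place many bids above $x$, at the mild cost of invoking the non-iid form of Berry--Esseen, which the cited reference covers.
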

\begin{proof}
Consider any bid profile $b^j$, consisting of at most $r$ bids per player. For $h\in[1,r]$ let $N_h$ denote the subset of players that under $b^j$, they submit $h$ bids above $x$ and denote with $n_h=|N_h|$. Observe that there must exist at least one $h^*\in [1,r]$ such that $n_{h^*}\geq \frac{q}{r^2}$. Otherwise, we get: $\sum_{h=1}^r n_h\cdot r <q$ and therefore, there are in total less than $q$ bids above $x$. Hence, the probability we want to upper bound is $0$.

Let $Z_{-N_{h^*}}$ denote the number of arriving bids from players outside of $N_h$ and $Z_{N_{h^*}}$ the number of arriving bids from players in $N_h$. By the independent arrival assumption, conditional on the bid profile $b^j$, these two random variables are independent. Thus:
\begin{equation}
\Pr[B(b^j\cdot z; x)=q]=\sum_{q'=1}^{q} \Pr[Z_{-N_{h^*}}=q']\cdot \Pr[Z_{N_{h^*}}=q-q']\leq \max_{z\in [1,q]} \Pr[Z_{N_{h^*}}=z]
\end{equation}
Let $X$ denote the number of players from $N_{h^*}$ that end up arriving. Observe that: $\Pr[Z_{N_{h^*}}=z]=\Pr[X=\frac{z}{h^*}]$, if $z$ is a multiple of $h^*$ and $0$ otherwise. Thus:
\begin{equation*}
\Pr[B(b^j\cdot z; x)=q]\leq \max_{x\in [1,\floor{q/h^*}]} \Pr[X=x]
\end{equation*}
If we denote with $\B(t,p)$ the binomial distribution of $n$ trials each with success probability $p$, then observe that $X\sim \B(n_{h^*},1-\delta)$.

By the Berry-Esseen theorem \cite{Berry1941,Esseen1942,Shevtsova2011} we know that the CDF of $\B(t,p)$ is approximated by the CDF of the normal distribution with mean $t\cdot p$ and variance $t\cdot p\cdot (1-p)$, with an additive error that is upper bounded by $err\leq \frac{p^2+(1-p)^2}{2\sqrt{np(1-p)}}$. Denote with $\Phi(\cdot)$ the CDF of the standard normal distribution. If $X$ is a random variable distributed according to $\B(t,p)$, then 
\begin{align*}
\Pr[X=k] = \Pr[X\leq k]-\Pr[X\leq k-1] \leq~& \Phi\left(\frac{k-t\cdot p}{\sqrt{t\cdot p\cdot (1-p)}}\right) - \Phi\left(\frac{k-1-t\cdot p}{\sqrt{t\cdot p\cdot (1-p)}}\right)+2\cdot err\\
=~& \frac{1}{\sqrt{2\pi}} \int_{\frac{k-1-t\cdot p}{\sqrt{t\cdot p\cdot (1-p)}}}^{\frac{k-t\cdot p}{\sqrt{t\cdot p\cdot (1-p)}}} e^{-\frac{z^2}{2}}dz+2\cdot err\\
\leq~& \frac{1}{\sqrt{2\pi}} \frac{1}{\sqrt{t\cdot p\cdot (1-p)}}+2\cdot err\\
\leq~&  \left(\frac{1}{\sqrt{2\pi}}+p^2+(1-p)^2\right) \frac{1}{\sqrt{t\cdot p\cdot (1-p)}}\\
\leq~& \frac{2}{\sqrt{t\cdot p\cdot (1-p)}}
\end{align*}

By the above we get that:
\begin{equation}
\Pr[B(b^j\cdot z; x)=q]\leq \frac{2}{\sqrt{n_{h^*}\cdot \delta\cdot (1-\delta)}}\leq \frac{2 r}{\sqrt{q\cdot \delta\cdot (1-\delta)}}
\end{equation}
For $q\geq \frac{4 r^2}{\epsilon^2\cdot \delta\cdot (1-\delta)}$ the latter probability is at most $\epsilon$ as desired.
\end{proof}

\begin{lemma}\label{lem:payment-convergence}
For any uniform price auction $j\in [m]$ and for any $\epsilon>0$, if $k_j^n \geq \frac{4B^2 r^4}{\epsilon^2 \delta(1-\delta)}$, then for any bid profile $b^j$:
\begin{equation}
\left|\E\left[\theta_{k_j^n+1}(b_{-i}^j\cdot z_{-i})-\theta_{k_j^n+1}(b^j\cdot z)\right]\right| \leq \epsilon
\end{equation}
\end{lemma}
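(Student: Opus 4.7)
The plan is to follow closely the structure of the $r=1$ version of the same lemma proved just above, but to account for the fact that player $i$ now contributes up to $r$ (rather than $1$) bids, and to replace the per-point probability bound with the quantitatively stronger Lemma \ref{lem:convergence-rate-general}.

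First I would establish a sandwich on the order statistics. Monotonicity gives $\theta_{k_j^n+1}(b_{-i}^j\cdot z_{-i})\le \theta_{k_j^n+1}(b^j\cdot z)$ deterministically, while the fact that player $i$ submits at most $r$ marginal bids yields $\theta_{k_j^n+1}(b_{-i}^j\cdot z_{-i})\ge \theta_{k_j^n+1+r}(b^j\cdot z)$, exactly as in the $r=1$ case (where the shift was only by $1$). Consequently, it suffices to upper bound
\[
\bigl|\E[\theta_{k_j^n+1}(b^j\cdot z)-\theta_{k_j^n+1+r}(b^j\cdot z)]\bigr|.
\]

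Next I would pass to CDFs. Let $F_t(\cdot)$ denote the CDF of $\theta_t(b^j\cdot z)$. Since both order statistics lie in $[0,B]$,
\[
\E[\theta_{k_j^n+1}(b^j\cdot z)-\theta_{k_j^n+1+r}(b^j\cdot z)] = \int_0^B \bigl(F_{k_j^n+1+r}(x)-F_{k_j^n+1}(x)\bigr)\,dx.
\]
As in the $r=1$ proof, the event $\{\theta_t\le x\}$ coincides with $\{B(b^j\cdot z;x)\le t-1\}$, so the integrand telescopes into
\[
F_{k_j^n+1+r}(x)-F_{k_j^n+1}(x)=\Pr[B(b^j\cdot z;x)\in [k_j^n+1,\,k_j^n+r]] = \sum_{q=k_j^n+1}^{k_j^n+r}\Pr[B(b^j\cdot z;x)=q].
\]
This writes the CDF gap as a sum of at most $r$ point masses of the arriving-bid count above $x$.

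The main technical step, and the main obstacle, is to bound each term in this sum \emph{uniformly} in $x$ and $b^j$ with a quantitatively sharp rate. For this I would invoke Lemma \ref{lem:convergence-rate-general}: for any threshold $x\in[0,B]$ and any $q\ge \frac{4r^2}{(\epsilon')^2\delta(1-\delta)}$ we have $\Pr[B(b^j\cdot z;x)=q]\le \epsilon'$. Choosing $\epsilon'=\frac{\epsilon}{rB}$ and observing that $k_j^n\ge \frac{4B^2 r^4}{\epsilon^2\delta(1-\delta)}$ is exactly the condition $k_j^n+1\ge \frac{4r^2}{(\epsilon')^2\delta(1-\delta)}$, we obtain $\Pr[B(b^j\cdot z;x)=q]\le \epsilon/(rB)$ for every $q\in[k_j^n+1, k_j^n+r]$, and hence
\[
F_{k_j^n+1+r}(x)-F_{k_j^n+1}(x)\le r\cdot \tfrac{\epsilon}{rB}=\tfrac{\epsilon}{B}
\]
uniformly in $x$ and $b^j$. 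Integrating this bound over $x\in[0,B]$ gives the desired $\epsilon$, completing the proof. The only place where care is needed is in propagating the $r$-dependence through the choice of $\epsilon'$: the extra factor $r$ from the union bound over the $r$ terms combines with the $r^2$ in Lemma \ref{lem:convergence-rate-general} to produce the $r^4$ that appears in the threshold on $k_j^n$.
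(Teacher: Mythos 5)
Your proposal is correct and follows essentially the same route as the paper's proof: the same sandwich $\theta_{k_j^n+1+r}(b^j\cdot z)\le\theta_{k_j^n+1}(b_{-i}^j\cdot z_{-i})\le\theta_{k_j^n+1}(b^j\cdot z)$, the same reduction of the CDF gap to point masses of $B(b^j\cdot z;x)$, the same application of Lemma \ref{lem:convergence-rate-general} with $\epsilon'=\epsilon/(rB)$, and the same integration over $[0,B]$. Your counting of exactly $r$ point masses (indices $k_j^n+1,\dots,k_j^n+r$) is in fact marginally tighter than the paper's union bound over $r+1$ indices, but this is a cosmetic difference.
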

\begin{proof}
Moreover, since player $i$ submits at most $r$ bids, the $k_j^n+1$ highest bid among all bids except player $i$'s is at least the $k_j^n+1+r$ highest bid among all bids including player $i$'s. Thus:
\begin{align*}
\left|\E\left[\theta_{k_j^n+1}(b_{-i}^j\cdot z_{-i})-\theta_{k_j^n+1}(b\cdot z)\right]\right| \leq \left|\E\left[\theta_{k_j^n+1+r}(b^j\cdot z)-\theta_{k_j^n+1}(b^j\cdot z)\right]\right|
\end{align*}

Let $F_t(\cdot)$ denote the CDF of the $t$-th highest bid. Observe that if the number of arriving bids strictly above $x$ are less than $t$, i.e., $B(b^j\cdot z; x)<t$, then the conditional CDFs of $t$ and $t+r$ highest bid evaluated at $x$ are both $1$. If $B(b^j\cdot z; x)>t+r$ then the conditional CDFs evaluated at $x$ are both $0$. Thus the conditional CDFs differ only when $B(b^j\cdot z;x)\in [t,t+r]$ and they differ by at most $1$. Hence:
\begin{equation}
|F_t(x) - F_{t+r}(x)|\leq \Pr[B(b^j\cdot z;x)\in [t, t+r]]
\end{equation}
By Lemma \ref{lem:convergence-rate}, if $t\geq \frac{4B^2 r^4}{\epsilon^2\cdot \delta\cdot (1-\delta)}$, then for all $x\in [t,t+r]$, $\Pr[B(b^j\cdot z;x)=x]\leq \frac{\epsilon}{Br}$, and by the union bound: $|F_t(x) - F_{t+r}(x)|\leq \frac{\epsilon}{B}$.

Last observe that: 
\begin{equation}
\E\left[\theta_{t}(b^j\cdot z)-\theta_{t+r}(b^j\cdot z)\right] = \int_{0}^{B}1-F_t(x) dx - \int_0^B 1-F_{t+r}(x) dx =  \int_{0}^{B} F_{t+r}(x) -F_{t}(x) dx\leq \epsilon
\end{equation}
\end{proof}

\begin{lemma}
If  $k_j^n\geq \frac{16\cdot m^2 (B+H)^2 r^8}{\epsilon^2 \delta(1-\delta)}+r$ then for any valuation $v_i$ and for any bid profile sequence $b^n$:
\begin{equation}
\left\|u_i^{n,\delta}(b^n; v_i) -U_i^{n,\delta}(b^n; v_i)\right\|=\epsilon
\end{equation}
\end{lemma}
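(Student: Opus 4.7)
The plan is to follow the exact same template used in the $r=1$ version of this approximation lemma, substituting in the general-$r$ counterparts of the two building blocks (Lemma \ref{lem:convergence-rate-general} and the preceding payment-convergence lemma for general $r$). By the triangle-inequality decomposition already worked out in the proof of Lemma \ref{lem:simultaneous-approximation}, one can bound
\[
\left|u_i^{n,\delta}(b;v_i) - U_i^{n,\delta}(b;v_i)\right| \leq \sum_{j\in[m]} (B+H)\cdot \Pr\!\left[x_i^j(b^j\cdot z)\neq X_i^j(b^j\cdot z)\right] + r\cdot \left|\E\!\left[\theta_{k_j^n+1}(b_{-i}^j\cdot z_{-i}) - \theta_{k_j^n+1}(b^j\cdot z)\right]\right|,
\]
where the $H$ accounts for the value-mismatch term (using $v_i\leq H$ and the $r$-demand assumption), the $B$ for the payment-on-mismatched-allocation term, and the factor $r$ multiplying the price gap for the fact that player $i$ wins at most $r$ units.

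Next I would handle the two error terms separately. For the allocation-mismatch probability, I would observe (exactly as in the $r=1$ case, conditioning on the tie-breaking priority so we may pretend all bids are distinct) that $x_i^j \neq X_i^j$ forces some marginal bid $b_i^{j,t}$ of player $i$ to satisfy $B(b^j\cdot z; b_i^{j,t}) \in [k_j^n+1-r, k_j^n+1]$. Union-bounding over the $r$ marginal bids and the $r+1$ values of $q$ in that window, Lemma \ref{lem:convergence-rate-general} gives a bound of $O(r^2/\sqrt{k_j^n \delta(1-\delta)})$ on this probability. For the price-difference term, the general-$r$ payment-convergence lemma stated immediately above already supplies the needed bound, with the required threshold on $k_j^n$ being $\Theta(B^2 r^4 /(\epsilon^2\delta(1-\delta)))$.

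Combining, the total error is at most $m\cdot(B+H)\cdot O(r^4/\sqrt{k_j^n \delta(1-\delta)}) + m\cdot r \cdot O(B r^2/\sqrt{k_j^n \delta(1-\delta)})$, which is $O\!\left(m(B+H)r^4 / \sqrt{k_j^n \delta(1-\delta)}\right)$. Solving this to be at most $\epsilon$ yields exactly the stated bound $k_j^n \geq \frac{16 m^2 (B+H)^2 r^8}{\epsilon^2 \delta(1-\delta)} + r$, where the additive $+r$ simply ensures that the $k_j^n+1+r$ bid index used in the payment step is well defined.

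The main obstacle is accurate bookkeeping of the powers of $r$ that enter from four distinct places: (i) the $r$-demand cap on per-player allocation, (ii) the $r$ marginal bids submitted per player, (iii) the window of size $r$ of critical threshold positions where allocations can differ, and (iv) the factor-$r$ loss in the Berry--Esseen-based binomial anticoncentration bound of Lemma \ref{lem:convergence-rate-general}, which arises because among at least $q$ bids above a threshold there must be a submultiset of size $\geq q/r^2$ from players submitting equal numbers of above-threshold bids. Tracking these through carefully gives the $r^8$ in the threshold, while all remaining steps are mechanical adaptations of the $r=1$ proof already written out in the excerpt.
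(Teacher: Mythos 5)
Your proposal follows the paper's own proof essentially verbatim: the same triangle-inequality decomposition carried over from Lemma~\ref{lem:simultaneous-approximation}, the same union bound over the $r$ marginal bids and the window $[k_j^n+1-r,\,k_j^n+1]$ via the Berry--Esseen anticoncentration bound of Lemma~\ref{lem:convergence-rate-general}, the same general-$r$ payment-convergence lemma for the price gap, and the same final accounting yielding an error of order $m(B+H)r^4/\sqrt{k_j^n\,\delta(1-\delta)}$ and hence the stated threshold with the additive $r$. The only blemishes are intermediate bookkeeping slips --- the payment-on-mismatched-allocation coefficient should be $B\cdot r$ rather than $B$, and the allocation-mismatch probability is $O\bigl(r^3/\sqrt{k_j^n\,\delta(1-\delta)}\bigr)$ rather than $O\bigl(r^2/\sqrt{k_j^n\,\delta(1-\delta)}\bigr)$ --- which offset each other and leave your final combination and the $r^8$ threshold correct, in agreement with the paper.
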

\begin{proof}
By same reasoning as in Lemma \ref{lem:simultaneous-approximation}, the difference in utilities is upper bounded by the following quantity:
\begin{align*}
r\cdot \left(\sum_{j\in [m]} (B+H)\Pr\left[x_i^j(b^j\cdot z)\neq X_i^j(b^j\cdot z)\right]+
\left|\E\left[\theta_{k_j^n+1}(b_{-i}^j\cdot z_{-i})-\theta_{k_j^n+1}(b^j\cdot z)\right]\right|\right)
\end{align*}
and:
\begin{align*}
\Pr\left[x_i^j(b^j\cdot z)\neq X_i^j(b^j\cdot z)\right]\leq~& \sum_{t=1}^{r} \sum_{q=k_j^n+1-r}^{k_j^n+1}\Pr[B(b^j\cdot z; b_i^{j,t})=q]\\
 \leq~& r^2 \max_{x\in [0,B],q\in [k_j^n+1-r,k_j^n+1]} \Pr[B(b^j\cdot z; x)=q]
\end{align*}
By the previous lemmas, if $k_j^n\geq \frac{16\cdot m^2 (B+H)^2 r^8 }{\epsilon^2 \delta(1-\delta)}+r$ then:
\begin{align*}
\Pr\left[x_i^j(b^j\cdot z)\neq X_i^j(b^j\cdot z)\right]\leq~& r^2 \frac{\epsilon}{2mr^3(B+H)}\\
\left|\E\left[\theta_{k_j^n+1}(b_{-i}^j\cdot z_{-i})-\theta_{k_j^n+1}(b^j\cdot z)\right]\right|\leq~& \frac{\epsilon}{2\cdot r\cdot m}\\
\end{align*}
which subsequently gives that the utility difference is at most $\epsilon$.
\end{proof}

\subsection{Constant Inefficiency in the Limit under Supply Uncertainty}\label{sec:inefficiency-supply}
\begin{example}
\label{ex:inefficiency-supply}
Consider a simultaneous uniform price auction game with two types of goods $A$ and $B$. 
For each market size $n$, there are $t=n/3$ unit-demand players that have only a value of $1/2$ for each unit of good $A$ and no value for a unit of good $B$, i.e. $v_i(x_i) = \frac{1}{2} {\bf 1}\{x_i^1\geq 1\}$. We refer to these players as type $a$ players. 
There are also $t$ unit-demand players that have value $1/2$ only for units of good $B$ and not of good $A$ and we refer to them as type $b$ players.
Finally, there are $t$ unit-demand players each having a value of $1$ for each unit of each good and desiring only one unit from some of the two goods, i.e. $v_i(x_i) = {\bf 1}\{x_i^1+ x_i^2\geq 1\}$. We refer to them as type $c$ players.
The supply of each good is distributed uniformly in $[0,t]$. Obviously, as $t\rightarrow \infty$ this supply distribution satisfies the property that the supply being equal to any fixed number goes to $0$.

\paragraph{Equilibrium.} We argue that the following is an equilibrium: Each type $c$ player picks uniformly at random one good $A$ or $B$ and submits a bid of $1$ at the uniform price auction for that good. Each type $a$ player submits a bid of $1/2$ at the uniform price auction for good $A$ and each type $b$ player submits a bid of $1/2$ for good $B$.

\paragraph{Equilibrium verification.} This is obviously an equilibrium for the type $a$ and $b$ players, since they are essentially unit-demand players in a single uniform price auction, hence the mechanism is dominant strategy truthful from their perspective. Thus it remains to argue that type $c$ players don't want to bid on both items. At each item, for them to win a unit they have to bid at least $1/2$, or otherwise they will lose to the type $a$ or type $b$ players. Moreover, the uniform price that they will have to pay on each good is always at least $1/2$, since there are always $t$ players bidding $1/2$.

Moreover, observe that in the limit of many players we can essentially assume that exactly half of the type $c$ players go for item $A$ and exactly half go for item $B$ (in fact we could have also analyzed the less natural equilibrium where this happens deterministically for every $n$, assuming $n/3$ is even). A type $c$ player's utility at the current equilibrium strategy is $1/2$ when the supply of the good that he chose is less than or equal to $t/2$, since he pays $1/2$, and it is $0$  when the supply is more than $t/2$, since he pays $1$. Thus his expected utility is $1/4$.

When he bids on both items, then observe that whenever he wins a unit at both auctions he has to pay a price of $2\cdot 1/2$, thus getting $0$ utility. Thus he gets any utility only when he wins a unit at exactly one of the two auctions and only when he wins it at a price of $1/2$. There are only two possible reasonable bids for the player $1$ or $1/2$. When he bids $1/2$ he is tying with the type $a$ or $b$ players and ties are broken at random. Let $p(b)$ be the probability that a player wins at auction $A$ or $B$ with a bid of $b$. Observe that $p(1/2)< p(1)=1/2$. Thus if $b_A,b_B\in \{1/2,1\}$ are the bids of the player in each auction, then his utility is:
$p(b_A)(1-p(b_B))\frac{1}{2} + p(b_B)(1-p(b_A))\frac{1}{2}= \frac{1}{2}\left(p(b_A)+p(b_B)- p(b_A) p(b_B)\right)$. For $0\leq p(b)\leq 1/2$, the latter is maximized at $p(b_A)=p(b_B) = 1/2$, leading to a utility of $1/4$. In essence, the only reasonable bid of a type $c$ player is to submit $1$ on one of the two goods.

\paragraph{Sub-optimality.} Now we argue about the suboptimality of this equilibrium as $n\rightarrow \infty$. The optimal allocation is to give as many units of either good $A$ or $B$ as possible to type $c$ players and all remaining units to type $a$ or $b$ players. There are always enough remaining type $a$ or $b$ players for all units to be allocated. Thus the expected optimal welfare is:
\begin{align*}
\E[\opt(k_A,k_B)] =~& \E\left[\min\{k_A+k_B,t\}+\frac{1}{2} \left(k_A+k_B-t\right)^+\right]\\
=~&\E\left[k_A+k_B-\frac{1}{2} \left(k_A+k_B-t\right)^+\right] = t- \frac{1}{2}\E\left[\left(k_A+k_B-t\right)^+\right] \\
=~& t \left( 1- \frac{1}{2} \E\left[ \left(\frac{k_A}{t}+\frac{k_B}{t}-1\right)^+\right]\right)
\end{align*}
As $t\rightarrow \infty$, then $\frac{k_A}{t}$ and $\frac{k_B}{t}$ are distributed uniformly in $[0,1]$. Thus by simple integrations for two $U[0,1]$ random variables $x,y$: $\E[(x+y-1)^+]=1/6$. Hence, $\E[\opt(k_A,k_B)]\approx \frac{11\cdot t}{12}$.

On the other hand the expected welfare at equilibrium is simply:
\begin{align*}
\E[SW(b)] =~& 2\cdot \E\left[\min\{k_A,t/2\} + \frac{1}{2} \left(k_A-t/2\right)^+\right]\\
=~& 2\cdot \E\left[ k_A- \frac{1}{2} \left(k_A-t/2\right)^+\right] = t- \E\left[\left(k_A-t/2\right)^+\right]
\end{align*}
For large enough $t$, $\E\left[\left(k_A-t/2\right)^+\right]\approx \frac{t}{8}$. Therefore $\E[SW(b)]\approx \frac{7\cdot t}{8}$. Therefore, the ratio of the expected optimal welfare over the expected equilibrium welfare converge to $\frac{22}{21}>1$.  Hence the limit price of anarchy is strictly greater than $1$.

It is worth noting that this example can be taken to the extreme, when there are $m$ goods, $t$ type $c$ players are interested in all of the goods and each good has a set of $t$ price setters interested only in that good, with value $1/2$. The supply of each good is distributed uniformly in $\left[0,\frac{2t}{m}\right]$. One equilibrium is for the type $c$ players to pick one item uniformly at random and bid $1$, while the price setting people bid truthfully on their good. As $m$ grows large, then the total supply $\sum_{j} k_j$ is with high probability concentrated around it's expected value, which is $t$. Thus the expected optimal welfare converges to $t$. On the other hand at equilibrium each good has approximately $t/m$ type $c$ players and the supply of that good is distributed $U\left[0,\frac{2t}{m}\right]$. Thereby the expected welfare from each good from calculation similar to the two good case, is $\frac{7 t}{8 m}$. Hence, the price of anarchy converges to $8/7$. The essence is that bidders cannot take advantage of the concentration of total supply, which the optimal welfare can.
\end{example}

\section{Greedy Combinatorial Auctions}\label{sec:greedy}
\label{sec:greedy-app}
As in Section \ref{sec:simultaneous-uniform}, we consider a setting with $n$ bidders and and a fixed number of $m$ different (types of) goods.
For this section, we will focus on a restricted class of multi-unit single-minded valuation functions, which take the following form: each agent $i$ has a desired set of items $S_i \subseteq [m]$ and a non-convex function $v_i : \mathbb{N} \to [0,H]$, where $v_i(\ell)$ denotes agent $i$'s value for receiving $\ell$ copies of set $S_i$, up to a maximum of $r$.  
%
Write $d$ for the maximum size of any set $S_i$. 


The goods will be sold via a greedy auction.
Bidders submit bids, in the form of a desired set $T_i$ and a list of marginal values $b_i^1\geq \ldots\geq b_i^r$. The bids are then considered in decreasing order.\footnote{In the same way as in Simultaneous Uniform Price Auctions, we can handle ties in such a way that it is without loss to assume all bids are distinct.}  When a bid $b_i^\ell$ is considered, then one unit of each item in $T_i$ will be allocated to player $i$ if there are remaining units of \emph{all} items in $T_i$, 
otherwise the bid is rejected.  
We will write $x_i^{k}(b)$ for the number of copies allocated to player $i$ by this auction when the supply vector is $k = (k_1, \dotsc, k_m)$.

For payments, 
we will charge
each player $i$ an amount, per unit of set $T_i$ received, equal to the largest bid that a shadow player could have placed on set $T_i$ and been rejected.
We formalize this as follows:
%
%
choose an item $j$, fix the quantity $k_{-j}$ of all other items, and imagine that there are infinitely many copies of item $j$.  Denote with $\theta^t_j(b)$ the $t$-th highest bid for a set containing item $j$ that would be allocated on input $b$.  Write $\theta^{k}(T_i,b) = \max_{j \in T_i}\{\theta_j^{k_j+1}(b)\}$.  Then player $i$'s payment will be $x_i^{k}(b) \cdot \theta^{k}(T_i,b)$.
The utility of a player in the greedy auction, given supply profile $k$, is then
$u_i^{n,k}(b; v_i) = v_i\left(x_i^{k}(b)\right) - x_i^{k}(b) \cdot \theta^{k}(T_i, b).$
%

%
%


We now define a version of the auction in which there is an endogenously noisy supply of items.

\begin{defn}
We say that the sequence of markets satisfies \emph{supply uncertainty} if the quantity $k^n_j$ of item $j$ is a random variable, and moreover for any $\epsilon > 0$ there exists some $n(\epsilon)$ such that, for all $n > n(\epsilon)$, $\Pr[k_j(n) = t] < \epsilon$ for all $j$ and all values $t$.
\end{defn}

We will show that under this notion of supply uncertainty, the greedy combinatorial auction is approximately efficient in the limit.

\begin{theorem}[Approximate efficiency in the Limit]\label{thm:greedy}
The greedy combinatorial auction under supply uncertainty admits a $(1,d)$-smooth approximation in the large.  In particular, if $k_j(n) = \Omega(n)$ for each item $j$, then the implied sequence of mechanisms is $(1,d)$-smooth in the large, and hence achieves a $1/d$ fraction of the optimal welfare.
\end{theorem}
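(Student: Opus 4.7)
My plan is to instantiate Theorem~\ref{thm:approx-to-large} by exhibiting a sequence of $(1,d)$-smooth approximate utility functions for the greedy auction with supply uncertainty, paralleling the proof of Theorem~\ref{thm:simultaneous-uniform}. I first endogenize the supply uncertainty into the mechanism so that $u_i^n$ is the expected utility over supply realizations $k^n$. I then define the approximate utility $U_i^n$ via the ``price-taker'' delusion: each of player $i$'s marginal bids $b_i^\ell$ on set $T_i$ wins a copy, at per-unit price $\max_{j\in T_i}\theta_j^{k_j^n+1}(b_{-i})$, whenever $b_i^\ell$ strictly exceeds this threshold (with ties resolved by the auction's tie-breaking rule), and pays nothing otherwise. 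This is the natural generalization of the approximate utility used in Section~\ref{sec:simultaneous-uniform}, replacing the per-good threshold by the maximum critical threshold over items in the requested bundle.

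\textbf{Smoothness of $U_i^n$.} Fix a valuation profile $v$; for each supply realization $k$ write $\ell_i^*(v,k)$ for the number of copies of $S_i$ allocated to $i$ in $\opt^n(v,k)$. Adapting the Bayesian deviation of Lemma~\ref{lem:simultaneous-smoothness-noise}, each player independently draws a hallucinated supply $\tilde k$ from the same distribution and submits $\ell_i^*(v,\tilde k)$ identical marginal bids of value $v_i(\ell_i^*(v,\tilde k))/\ell_i^*(v,\tilde k)$ on $T_i = S_i$, with $0$ on the remaining marginal bids. Conditioning on $(k,\tilde k)$, either (i) all $\ell_i^*$ deviating bids exceed the threshold, giving $U_i^n \geq v_i(\ell_i^*) - \ell_i^* \cdot \max_{j\in S_i}\theta_j^{k_j+1}(b_{-i})$, or (ii) they do not, in which case $U_i^n \geq 0 \geq v_i(\ell_i^*) - \ell_i^*\cdot \max_{j\in S_i}\theta_j^{k_j+1}(b_{-i})$ since the threshold exceeds $v_i(\ell_i^*)/\ell_i^*$. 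Summing over $i$, replacing $\theta_j^{k_j+1}(b_{-i})$ by the larger $\theta_j^{k_j+1}(b)$, bounding max by sum, swapping summation order, and applying the feasibility inequality $\sum_i \ell_i^*\mathbf{1}[j\in S_i]\leq k_j$ yields $\sum_i U_i^n \geq \opt^n(v,k) - \sum_j k_j\,\theta_j^{k_j+1}(b)$. A charging argument --- the per-unit payment $\theta^k(T_i,b)$ dominates $\frac{1}{d}\sum_{j\in T_i}\theta_j^{k_j+1}(b)$ since $|T_i|\leq d$, and $\theta_j^{k_j+1}(b)>0$ forces item $j$ to be fully allocated --- then gives $\sum_j k_j\,\theta_j^{k_j+1}(b) \leq d\cdot \rev^n(b)$, completing the $(1,d)$-smoothness step.

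\textbf{Approximation (the main obstacle).} The technically delicate step is showing that $U_i^n$ converges uniformly to $u_i^n$ as the market grows. Following the template of Lemma~\ref{lem:simultaneous-approximation}, I decompose $\|u_i^n - U_i^n\|$ into an allocation discrepancy and a per-unit-price discrepancy, both of which are nonzero only in boundary events where at least one of player $i$'s (at most $r$) marginal bids on a bundle containing some item $j\in T_i$ lands within $r$ positions of the $(k_j^n+1)$-th highest bid among $b_{-i}$ on bundles containing $j$. Under supply uncertainty, $\Pr[k_j^n = t] \to 0$ uniformly in $t$, so the probability that the supply-induced rank cut-off coincides with any fixed bid value vanishes as $n\to\infty$; this is the direct analog of Swinkels' counting lemma, with the randomness migrating from random arrivals to random supply. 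A union bound over the $O(rd)$ boundary events controls the allocation discrepancy, and integrating the difference $F_t(x)-F_{t+r}(x)$ of CDFs of adjacent (random-rank) order statistics over $x\in[0,H]$ controls the expected price discrepancy.

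\textbf{Conclusion.} Combining smoothness and approximation establishes that $\{U_i^n\}$ is a $(1,d)$-smooth approximate utility sequence. Since $k_j^n = \Omega(n)$ for every $j$ forces $\opt^n = \Omega(n)$, Theorem~\ref{thm:approx-to-large} delivers $(1,d)$-smoothness in the large, giving the claimed $1/d$ fraction of the optimal welfare in the limit.
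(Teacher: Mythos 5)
Your overall architecture (endogenize the supply noise, define the price-taking approximate utility with threshold $\theta^{k}(T_i,b_{-i})=\max_{j\in T_i}\theta_j^{k_j+1}(b_{-i})$, prove approximation via boundary events and order-statistic CDF differences, and invoke Theorem~\ref{thm:approx-to-large}) matches the paper, and your approximation sketch is essentially Lemmas~\ref{lem.greedy.1} and~\ref{lem.greedy.2}. However, your smoothness step contains a genuine gap. You import the hallucinated-sample deviation of Lemma~\ref{lem:simultaneous-smoothness-noise}, having each player draw a fictitious supply $\tilde k$ and bid for $\ell_i^*(v,\tilde k)$ copies. Conditional on $(k,\tilde k)$, the quantities $\ell_i^*$ are optimal for $\tilde k$, so the inequality you actually obtain is $\sum_i U_i^{n,k}\geq \opt^n(v,\tilde k)-\sum_j \tilde k_j\,\theta_j^{k_j+1}(b)$; your stated feasibility bound $\sum_i \ell_i^*\mathbf{1}[j\in S_i]\leq k_j$ and the appearance of $\opt^n(v,k)$ silently replace $\tilde k$ by the realized $k$, which is not justified. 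Taking expectations over the independent draws does not repair this: the price term decouples into $\sum_j \E[\tilde k_j]\,\E_k[\theta_j^{k_j+1}(b)]$, whereas the revenue only controls $\E_k\bigl[\sum_j k_j\,\theta_j^{k_j+1}(b)\bigr]\leq d\cdot\E_k[\rev^{n,k}(b)]$. Since $\theta_j^{k_j+1}(b)$ is non-increasing in $k_j$, the covariance of $k_j$ and $\theta_j^{k_j+1}(b)$ is non-positive, so $\E[k_j]\,\E[\theta_j^{k_j+1}(b)]\geq\E[k_j\,\theta_j^{k_j+1}(b)]$ — the inequality runs the wrong way, and under supply uncertainty (which forces a spread-out supply distribution) the gap can be a constant fraction. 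So the final charging step fails, and this is exactly the failure mode the paper warns about: supply uncertainty is only compatible with smoothness deviations that do not depend on the realized (or guessed) supply; guessing the supply is what drives the persistent inefficiency in Example~\ref{ex:inefficiency-supply}.

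The paper avoids this entirely by using the truthful deviation: each player declares $T_i=S_i$ and bids his true marginal values. Because this deviation is independent of $k$, the chain $U_i^{n}(v_i,b_{-i};v_i)=\E_k\bigl[\sum_{\ell=1}^r(v_i(\ell)-v_i(\ell-1)-\theta^{k}(T_i,b_{-i}))^+\bigr]\geq\E_k\bigl[v_i(x_i^{*,k})-x_i^{*,k}\theta^{k}(T_i,b)\bigr]$ holds realization by realization, so the optimal allocation $x^{*,k}$, the feasibility constraint $\sum_{i:j\in T_i}x_i^{*,k}\leq k_j$, and the revenue charging $\sum_j k_j\theta_j^{k}(b)\leq d\cdot\rev^{n,k}(b)$ are all evaluated at the same $k$, and only then is the expectation taken (Lemma~\ref{lem:greedy-smooth}). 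Your argument would be repaired by replacing the hallucinated-supply deviation with this supply-independent one (uniform bids at the average value do not suffice here either, since choosing how many copies to bid for already requires knowing $k$; bidding all $r$ marginal values sidesteps that choice).
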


One might hope to prove an analogous result to Theorem~\ref{thm:greedy} under demand uncertainty as well as under supply uncertainty.  However, it turns out that under demand uncertainty, a bidder's proposed approximate utility and actual utility may fail to converge; see Appendix \ref{greedy.appendix.demand} for an example.  Thus, to apply our framework to prove smoothness in the large under demand uncertainty, one would need to find an alternate approximate utility sequence.

\textbf{Sketch of proof of Theorem \ref{thm:greedy}.}
We first define a notion of approximate utility, then establish that this approximation satisfies the properties of a $(1,d)$-smooth approximation in the large.
%
%
To define the approximate utility,
consider $\theta^{k}(T_i, b_{-i})$, which is the critical value for set $T_i$ if agent $i$ were not present.  Write $X_i^{k}(b) = \max\{ \ell : b_{i,\ell} > \theta^{k}(T_i, b_{-i})\}$.  That is, $X_i^{k}(b)$ is the number of bids made by agent $i$ that are strictly greater than $\theta^{k}(T_i, b_{-i})$.  Then the approximate utility is:
\begin{equation}\label{eqn:util-decomp-demand-greedy}
\textstyle{U_i^{n,k}(b; v_i) = v_i\left(X_i^{k}(b)\right) - X_i^{k}(b) \cdot \theta^{k}(T_i, b_{-i}).}
\end{equation}
This is the utility of the original game, not taking into account the effect of player $i$'s bid upon the critical value of $T_i$.
We denote by $u_i^n$ and $U_i^n$ the expected utility and approximate utility, respectively, in expectation over the distribution of $k$.

We must show that $U_i^{n}$ satisfies the conditions of being a $(1,d)$-smooth approximation to the critical greedy auction, in the large.
The fact that $U_i^n$ approximates $u_i^n$ follows from the supply uncertainty: the variation in critical price calculation is smoothed over by uncertainty in the number of units of each item.  The smoothness condition follows in a manner similar to the Simultaneous Uniform Price auctions: under $U_i^n$, each agent effectively views herself as a price-taker; the factor of $d$ is effectively due to the approximation factor of the greedy allocation algorithm.

\subsection{Proof of Theorem \ref{thm:greedy}}
\paragraph{$(1,d)$-Smoothness of Approximate Utility.} We will first show that the approximate utility $U_i^{n}$ satisfies the conditions of being a $(1,d)$-smooth approximation to the critical greedy auction, in the large.  We do this in two steps.  We first show that $U_i^{n}$ satisfies the smoothness condition with respect to the critical greedy auction, then show that it approximates the utility of the original game.


\begin{lemma}
\label{lem:greedy-smooth}
For each $n$, $U_i^{n}$ satisfies the $(1,d)$-smoothness property with respect to the greedy critical price auction.
\end{lemma}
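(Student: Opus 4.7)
The plan is to exhibit an explicit smoothness deviation for each player and then run a greedy-style charging argument to extract the factor of $d$.

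\textbf{The deviation.} For any valuation profile $v$, let $x_i^*$ denote the number of copies allocated to player $i$ in the welfare-optimal solution $\opt^n(v)$. Define the deviation $b_i^*$ as follows: declare the set $T_i = S_i$, with marginal bids $b^*_{i,\ell} = v_i(x_i^*)$ for $\ell \leq x_i^*$ and $b^*_{i,\ell} = 0$ for $\ell > x_i^*$ (this respects the decreasing-marginal requirement). This mirrors the deviation used in Lemma~\ref{lem:simultaneous-smoothness} for simultaneous uniform-price auctions.

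\textbf{Step 1 (per-player bound).} I will show that for any $b$,
\[
U_i^{n,k}(b_i^*, b_{-i}; v_i) \;\geq\; v_i(x_i^*) \;-\; x_i^*\cdot \theta^{k}(S_i, b_{-i}).
\]
This splits into two cases. If $\theta^{k}(S_i, b_{-i}) < v_i(x_i^*)$, then exactly the first $x_i^*$ positive marginals strictly exceed the threshold (the remaining bids are $0$), so $X_i^{k}(b_i^*, b_{-i}) = x_i^*$ and the approximate utility equals the right-hand side exactly. Otherwise $\theta^{k}(S_i, b_{-i}) \geq v_i(x_i^*)$, no bid of $b_i^*$ strictly exceeds the threshold, $X_i^{k} = 0$, and $U_i^{n,k} = 0$; in this case the right-hand side is non-positive, using $x_i^* \geq 1$ whenever the player is allocated, so the bound holds trivially.

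\textbf{Step 2 (sum over players).} The critical threshold is monotone in the bid profile ($\theta^{k}(S_i, b_{-i}) \leq \theta^{k}(S_i, b)$, since adding a bid can only push up the $(k_j+1)$-st highest bid on any item). Writing $\pi_j = \theta_j^{k_j+1}(b)$ and $\pi(S) = \max_{j\in S}\pi_j$, summing the per-player bound gives
\[
\sum_i U_i^{n,k}(b_i^*, b_{-i}; v_i) \;\geq\; \opt^n(v) \;-\; \sum_i x_i^*\cdot \pi(S_i).
\]

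\textbf{Step 3 (residual is at most $d\cdot\rev^n(b)$).} This is the step that produces the factor $d$ and will be the main technical hurdle. Since $|S_i|\leq d$, one has $\pi(S_i)\leq\sum_{j\in S_i}\pi_j$; exchanging summation order,
\[
\sum_i x_i^*\,\pi(S_i) \;\leq\; \sum_j \pi_j \sum_{i:\,j\in S_i}x_i^* \;=\; \sum_j \pi_j\,k_j^*,
\]
where $k_j^*$ is $\opt$'s allocation of item $j$, satisfying $k_j^*\leq k_j$. I will then argue $k_j^*\pi_j \leq k_j^{\mathrm{win}}\pi_j$, where $k_j^{\mathrm{win}}$ is the number of copies of item $j$ allocated by the mechanism on $b$: when $j$ is fully allocated this is immediate ($k_j^*\leq k_j=k_j^{\mathrm{win}}$), and when $j$ is not fully allocated the hypothetical $(k_j+1)$-st accepted bid for a set containing $j$ simply does not exist, making $\pi_j=0$ by convention. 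Finally, exchanging sums once more,
\[
\sum_j \pi_j\,k_j^{\mathrm{win}} \;=\; \sum_i x_i^{k}(b)\sum_{j\in T_i}\pi_j \;\leq\; d\sum_i x_i^{k}(b)\cdot \pi(T_i) \;=\; d\cdot\rev^n(b).
\]
Combining Steps 2 and 3 yields the $(1,d)$-smoothness inequality.

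\textbf{Main obstacle.} The delicate part is the bookkeeping in Step 3: justifying the convention $\pi_j=0$ for under-allocated items requires a careful look at what $\theta_j^{k_j+1}$ means under hypothetical infinite supply of $j$, and the double exchange of summations has to respect the multi-unit structure of bids together with arbitrary overlaps of the $S_i$'s and $T_i$'s. Once the mechanism is fixed with a realized $k$, the extension to the randomized-supply setting (needed so that $U_i^n$, an expectation over $k$, inherits smoothness) follows the same sample-and-deviate template used in Lemma~\ref{lem:simultaneous-smoothness-noise}.
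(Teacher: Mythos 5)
Your fixed-$k$ charging argument (Steps 2--3) is essentially the paper's, but there is a genuine gap in the choice of deviation and in how you propose to handle the randomness of the supply. The lemma concerns $U_i^{n}$, which is an \emph{expectation over the random supply vector} $k$, so the smoothness deviation must be a single bid chosen before $k$ is realized; your deviation bids $v_i(x_i^*)$ on the first $x_i^*$ marginals, where $x_i^*=x_i^{*,k}$ is the optimal quantity and hence itself a random variable depending on $k$. Your proposed patch --- prove smoothness per realization of $k$ and lift it via the sample-and-deviate template of Lemma~\ref{lem:simultaneous-smoothness-noise} --- does not go through. That template works when the noise enters the valuations/arrivals, because it can be absorbed as Bayesian type uncertainty and the fixed-mechanism smoothness inequality can be applied with mixed realizations (it holds against arbitrary opponent strategy profiles of the \emph{same} mechanism). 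Here the noise sits inside the mechanism's own allocation rule: the fixed-$k$ inequality ties $\opt^{n,k}$, the thresholds $\theta^{k}$, and $\rev^{n,k}$ all to the same $k$. After sampling $\tilde k$ independently of $k$, your Step 3 would charge $\sum_{i:\,j\in S_i}x_i^{*,\tilde k}\le \tilde k_j$ units at price $\theta_j^{k}(b)$, while only $k_j$ units are actually sold; since $\tilde k_j$ is independent of $\theta_j^{k}(b)$ whereas $k_j$ is negatively correlated with it, the expected charge need not be bounded by $d$ times expected revenue. This is exactly the obstacle the paper flags (supply uncertainty is only compatible with deviations that do not depend on the supply realization; cf.\ the discussion in the techniques section and Example~\ref{ex:inefficiency-supply}, where supply-dependent targeting sustains constant inefficiency).

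The paper's proof sidesteps the issue entirely: the deviation is \emph{truthful} bidding of the marginal values $v_i(\ell)-v_i(\ell-1)$, $\ell=1,\dots,r$, on the fixed interest set, which is independent of $k$. The per-unit decomposition
$U_i^{n,k}(v_i,b_{-i};v_i)=\sum_{\ell}\bigl(v_i(\ell)-v_i(\ell-1)-\theta^{k}(T_i,b_{-i})\bigr)^+\ \ge\ v_i\bigl(x_i^{*,k}\bigr)-x_i^{*,k}\,\theta^{k}(T_i,b)$
shows that this single bid dominates the target quantity $x_i^{*,k}$ simultaneously for \emph{every} realization of $k$, so one can integrate over $k$ with all quantities evaluated at the same $k$ and then run exactly the charging argument you wrote (threshold monotonicity, $\max\le\mathrm{sum}\le d\cdot\max$, and $\theta_j^{k}(b)=0$ for under-allocated items). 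If you replace your deviation with the truthful one, the rest of your write-up survives essentially unchanged.
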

\begin{proof}
Fix valuation profile $v$, and let $x^{*,k}$ denote the welfare-optimal allocation for supply $k$.  We will consider the utility of agent $i$ when declaring his true valuation $v_i$.  
We have
\begin{align*}
U_i^{n}(v_i,b_{-i}; v_i)
& = \E_{k}\left[ v_i\left(X_i^{k}(v_i, b_{-i})\right) - X_i^{k}(v_i, b_{-i}) \cdot \theta^{k}(T_i, b_{-i}) \right] \\
& = \E_{k}\left[ \sum_{\ell = 1}^{r} \left(v_i(\ell) - v_i(\ell - 1) - \theta^{k}(T_i, b_{-i})\right)^+ \right] \\
& \geq \E_{k}\left[ \sum_{\ell = 1}^{r} \left(v_i(\ell) - v_i(\ell - 1) - \theta^{k}(T_i, b)\right)^+ \right] \\
& \geq \E_{k}\left[ v_i\left(x_i^{*,k}\right) - x_i^{*,k} \cdot \theta^{k}(T_i, b) \right].
\end{align*}
Taking a sum over all $i$ and applying linearity of expectation, we have
\begin{align*}
\sum_i U_i^{n}(v_i,b_{-i}; v_i)
& \geq \opt^n(v) - \E_{k}\left[ \sum_i x_i^{*,k} \cdot \theta^{k}(T_i,b) \right].
\end{align*}
Since $\theta^{k(n)}(T_i,b) = \max_{j \in T_i} \theta_j^k(b) \leq \sum_{j \in T_i} \theta_j^k(b)$, we have
\begin{align*}
\sum_i U_i^{n}(v_i,b_{-i}; v_i)
& \geq \opt^n(v) - \E_{k}\left[ \sum_j \theta^{k}_j(b) \sum_{i : T_i \ni j} x_i^{*,k} \right] \\
& \geq \opt^n(v) - \E_{k}\left[ \sum_j \theta^{k}_j(b) \cdot k_j \right] \\
& \geq \opt^n(v) - d \cdot \E_{k}\left[ \sum_i x_i^{k}(b) \cdot \theta^{k}(T_i,b) \right] \\
& = \opt^n(v) - d \cdot \rev^n(b)
\end{align*}
as required, where in the last inequality we made use of the fact that $\theta^k(T_i,b) \geq \frac{1}{d}\sum_{j \in T_i}\theta_j^k(b)$, plus the fact that $\theta_j^{k(n)}(b) = 0$ if not all copies of item $j$ are allocated in $x^k(b)$.
\end{proof}


\paragraph{Approximation.} Now we show that $U_i^{n}$ approximates $u_i^{n}$ as $n$ grows large.

\begin{lemma}
\label{lem.approx.greedy}
For any valuation $v_i$ and for any bid profile sequence $b^n$:
\begin{equation}
\lim_{n\rightarrow \infty} \left\|u_i^{n}(b^n; v_i) -U_i^{n}(b^n; v_i)\right\|=0
\end{equation}
\end{lemma}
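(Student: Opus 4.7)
The plan is to mirror the proof of Lemma~\ref{lem:simultaneous-approximation} from the simultaneous uniform price auction case, with the role of demand uncertainty (random arrival vector $z$) now played by the supply uncertainty hypothesis on $k$. Applying the triangle inequality and then adding and subtracting $X_i^k(b)\cdot \theta^k(T_i,b)$, I bound
\begin{align*}
|u_i^n(b; v_i) - U_i^n(b; v_i)| &\leq \E_k\bigl[|v_i(x_i^k(b)) - v_i(X_i^k(b))|\bigr] \\
&\quad + B\cdot \E_k\bigl[|x_i^k(b) - X_i^k(b)|\bigr] + r\cdot \E_k\bigl[|\theta^k(T_i,b) - \theta^k(T_i,b_{-i})|\bigr],
\end{align*}
using $\theta^k(T_i,b) \leq B$ and $X_i^k(b) \leq r$. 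Since $v_i\leq H$ and $x_i^k, X_i^k \in \{0,\ldots,r\}$, both the first and second terms are at most $(H+Br)\cdot \Pr_k[x_i^k(b) \neq X_i^k(b)]$. It therefore suffices to show that, uniformly in $b$, (i) $\Pr_k[x_i^k(b) \neq X_i^k(b)] \to 0$ and (ii) $\E_k[|\theta^k(T_i,b) - \theta^k(T_i,b_{-i})|] \to 0$.

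For (i), I will argue that $x_i^k(b)$ and $X_i^k(b)$ can differ only if, for some item $j\in T_i$, the random supply $k_j$ takes one of at most $O(r)$ specific integer values determined by $b$. Intuitively, player $i$'s bids on $T_i$ affect the greedy outcome only by consuming supply of items in $T_i$, and this is pivotal for $i$'s own allocation only when some $k_j$ lies in an $O(r)$-wide window around the cut-off rank of his marginal bids. The supply uncertainty hypothesis says $\Pr[k_j = t] < \epsilon$ for every $t$ once $n$ is large, so a union bound over $j \in T_i$ (at most $d$ items) and the $O(r)$ pivotal values yields the claim. Tie-breaking can be handled, as in the simultaneous uniform price case, by conditioning on a random priority order and assuming without loss that all bids are distinct.

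For (ii), observe that removing player $i$'s at most $r$ marginal bids can shift the rank of any bid on a set containing item $j$ by at most $r$, hence
\begin{equation*}
0 \leq \theta^k(T_i,b) - \theta^k(T_i,b_{-i}) \leq \max_{j\in T_i}\bigl(\theta_j^{k_j+1}(b) - \theta_j^{k_j+r+1}(b)\bigr).
\end{equation*}
Letting $N_j(b,x)$ denote the deterministic number of bids on sets containing $j$ above $x$ that would be allocated if item $j$ had infinite supply, a direct computation gives that the CDFs of $\theta_j^{k_j+1}(b)$ and $\theta_j^{k_j+r+1}(b)$ at any $x\in[0,B]$ differ by exactly $\Pr_k[k_j \in \{N_j(b,x)-r,\dots,N_j(b,x)-1\}]$, which is at most $r\epsilon$ for $n$ large by supply uncertainty. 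Integrating this CDF gap over $[0,B]$ bounds the expected difference above by $Br\epsilon$, establishing (ii).

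The main obstacle I anticipate is the combinatorial characterization in (i): the greedy auction couples bids across different sets whose interest sets overlap with $T_i$, so tracing exactly which values of $k_j$ make $i$ pivotal requires careful case analysis of how the greedy order processes bids. The key observation that makes the argument go through is that since player $i$ only bids on $T_i$, any change in the allocation due to $i$'s presence must arise from one of the items in $T_i$ being exhausted at a different rank, and this is a discrete event parameterized only by the anti-concentrated supplies $k_j$ for $j \in T_i$.
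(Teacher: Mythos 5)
Your proposal is correct and follows essentially the same route as the paper: the same triangle-inequality decomposition into an allocation-mismatch term and a critical-price-shift term, with the mismatch probability controlled by the observation that player $i$ is pivotal only when some $k_j$ falls in an $O(r)$-wide window (the content of Lemma~\ref{lem.greedy.1}) and the price shift controlled by the rank-shift-by-at-most-$r$ bound plus supply anti-concentration (the content of Lemma~\ref{lem.greedy.2}). The only differences are cosmetic: you bound the expected gap $\E_k[\theta_j^{k_j+1}(b)-\theta_j^{k_j+r+1}(b)]$ by integrating the CDF difference rather than via the paper's telescoping sum $\sum_{\ell}(z_\ell - z_{\ell+r})\Pr[k_j=\ell]$, and your pivotality argument in step (i) should condition on $k_{-j}$ (as the paper does by fixing the supplies of the other items), since the pivotal window depends on $b$ and $k_{-j}$, not on $b$ alone.
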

\begin{proof}
This proof closely follows the proof of Lemma \ref{lem:simultaneous-approximation}.  Our goal is to find an upper bound on $\left\|u_i^{n}(b^n; v_i) -U_i^{n}(b^n; v_i)\right\|$.
Applying the triangle inequality to the definition of $u_i^n$ and $U_i^n$, we have
\begin{align*}
\left\|u_i^{n}(b^n; v_i) -U_i^{n}(b^n; v_i)\right\|
& \leq \| \E_{k}[ v_i(x_i^{k}(b)) - v_i(X_i^{k}(b))] \| \\
& \quad + \| \E_{k}[x_i^{k}(b) \cdot \theta^{k}(T_i,b) - X_i^{k}(b) \cdot \theta^{k}(T_i,b_{-i}) ] \|
\end{align*}
We'll bound separately each of the two terms on the right hand side.  The first  can be bounded by
\begin{align*}
\| \E_{k}[ v_i(x_i^{k}(b)) - v_i(X_i^{k}(b))] \|
& \leq H \cdot \Pr[x_i^{k}(b) \neq X_i^{k}(b) ] 
\end{align*}
For the second term, we have
\begin{align*}
& \| \E_{k}[x_i^{k}(b) \cdot \theta^{k}(T_i,b) - X_i^{k}(b) \cdot \theta^{k}(T_i,b_{-i}) ] \| \\
\leq & \| \E_{k}[x_i^{k}(b) (\theta^{k}(T_i,b) - \theta^{k}(T_i,b_{-i}))] \| + \| \E_{k}[(X_i^{k}(b) - x_i^{k}(b)) \theta^{k}(T_i,b)] \| \\
\leq & r \cdot \| \E_{k}[\theta^{k}(T_i,b) - \theta^{k}(T_i,b_{-i})] \| + H \cdot r \cdot \Pr[x_i^{k}(b) \neq X_i^{k}(b) ] 
\end{align*}
Given these bounds, it suffices to show that, for all $\epsilon > 0$, there exists an $n(\epsilon)$ such that, for all $n > n(\epsilon)$, we have
\[ \Pr[x_i^{k}(b) \neq X_i^{k}(b) ] < \epsilon \]
and
\[ \| \E_{k}[\theta^{k}(T_i,b) - \theta^{k}(T_i,b_{-i})] \| < \epsilon. \]
We will complete the proof by establishing these bounds in separate lemmas.

\begin{lemma}
\label{lem.greedy.1}
For all $\epsilon > 0$ there exists $n(\epsilon)$ such that for all $n > n(\epsilon)$, and all $i$ and $b$, $\Pr[x_i^{k}(b) \neq X_i^{k}(b) ] < \epsilon$.
\end{lemma}
\begin{proof}
By the union bound, we have
\[ \Pr[x_i^{k}(b) \neq X_i^{k}(b) ] \leq \sum_j \sum_{\ell = 1}^r \Pr[ \theta_j^{k}(b) > b_{i,\ell} \geq \theta_j^{k}(b_{-i})]. \]
It therefore suffices to bound $\Pr[ \theta_j^{k}(b) > b_{i,\ell} \geq \theta_j^{k}(b_{-i})]$.
Fix the quantities of all items but $j$, and suppose there are infinitely many units of item $j$.  Among the marginal bids in $b$, consider the winning bids for sets containing $j$; let $(z_1 \geq z_2 \geq \dotsc)$ be those bids in decreasing order.  Note then that $\theta_j^{k}(b) = z_{k_j+1}$, and $\theta_j^{k}(b_{-i}) \geq z_{k_j+r+1}$ (as agent $i$ is allocated at most $r$ copies of item $j$).

Let $\ell$ be the unique index such that $z_\ell > b_{i,\ell} \geq z_{\ell+1}$.  We then have
\[ \Pr[ \theta_j^{k}(b) > b_{i,\ell} \geq \theta_j^{k}(b_{-i}) ] \leq \Pr[ \ell+1 \leq k_j \leq \ell+r ]. \]
The union bound combined with the definition of supply uncertainty implies that, for sufficiently large $n$, this probability is at most $\epsilon \cdot r$.  Taking an appropriate choice of $\epsilon$  completes the proof.
\end{proof}




\begin{lemma}
\label{lem.greedy.2}
For all $\epsilon > 0$ there exists $n(\epsilon)$ such that, for all $n > n(\epsilon)$, and for all $i, j$, and $b$,
\begin{equation}
\left| \E_{k}\left[\theta_j^{k}(b)\right] - \E_{k}\left[\theta_j^{k}(b_{-i})\right] \right| < \epsilon.
\end{equation}
\end{lemma}
\begin{proof}
Define values $(z_1 \geq z_2 \geq \dotsc)$ as in Lemma \ref{lem.greedy.1}.  Recalling that $\theta_j^{k}(b) = z_{k_j+1}$ and $\theta_j^{k}(b_{-i}) \geq z_{k_j+r+1}$, we have
\begin{align*}
\left| \E_{k}\left[\theta_j^{k}(b)\right] - \E_{k}\left[\theta_j^{k}(b_{-i})\right] \right|
& \leq \sum_{\ell \geq 1}(z_\ell - z_{\ell+r}) \cdot \Pr[k_j = \ell].
\end{align*}
Supply uncertainty implies that, for sufficiently large $n$, $\Pr[k_j = \ell] < \epsilon$ for all $\ell$ and hence
\begin{align*}
\left| \E_{k}\left[\theta_j^{k}(b)\right] - \E_{k}\left[\theta_j^{k}(b_{-i})\right] \right|
& < \sum_{\ell \geq 1}(z_\ell - z_{\ell+r}) \cdot \epsilon
 \leq \sum_{\ell = 1}^{r}z_\ell \cdot \epsilon
 < r H \epsilon.
\end{align*}
Taking an appropriate choice of $\epsilon$ therefore completes the proof.
\end{proof}


Applying Lemma \ref{lem.greedy.1} and Lemma \ref{lem.greedy.2} then completes the proof of Lemma \ref{lem.approx.greedy}.

\end{proof}


\subsection{Sensitivity of Prices to Individual Deviations under Noisy Arrival}
\label{greedy.appendix.demand}

One might hope to prove an analogous result to Theorem~\ref{thm:greedy} under demand uncertainty as well as supply uncertainty.  That is, if every bidder arrives with probability $(1-\delta)$, then is the greedy combinatorial auction approximately efficient in the large?  In this section we show a partial negative result along these lines.  In particular, we show via example that a bidder's proposed approximate utility and actual utility fail to converge.  Thus to apply our framework to prove smoothness-in-the-large for demand uncertainty, one would need to find an alternate approximate utility sequence.

The example consists of two items with $k$ units each.  There are $2^k$ ``large'' bidders who have a value of $1$ for receiving at least one unit of each item, one bidder who has a value of $b_0=2$ for receiving at least one unit of the second item, and two bidders with values of $b_1=1/2$ and $b_2=1/4$ for receiving at least one unit of the first item.  Consider the event that bidder $0$, bidder $2$, and at least $k$ large bidders show up.  This event has probability arbitrarily close to $(1-\delta)^2$, which is a constant.  Conditional on this event, bidder $1$, using the approximate utility for the greedy auction, imagines a critical price of $0$ for the first item and so computes that his approximate utility is $1/2$.  However, in reality, the price for the first item would be set by $b_2=1/4$, and so the actual utility that would be gained by bidder $2$ is $1/4$.  Removing the conditioning, we see there is a constant gap between the approximate utility and actual utility that persists even as the market grows large, violating the conditions of an approximate utility sequence.

\section{Smoothness in the Large for General Games}\label{app:general-games}
Here we present the smoothness in the large framework for the case of cost-minimization games (utility maximization games are a complete analogue). A cost minimization game $G^n$ consists of a set $N$ of $n$ players, a type space $T^n=T_1\times\ldots\times T_n$ specifying a set of potential types $T_i$ for each player, a strategy space $S^n=S_1\times\ldots\times S_n$ specifying a set of potential strategies $S_i$ for each player, and a cost function $c_i^n$ for each player.  In an instantiation of the game, each player $i$ has a type $t_i\in T_i$.  The type of a player determines his feasible strategy space $S_i(t_i) \subseteq S_i$.  Write $S(t) = S_1(t_1) \times \ldots\times S_n(t_n)$ for the set of admissible strategy profiles given type profile $t$.  The cost of a player depends on his type and the strategies of all players, $c_i^n:S^n\times T_i\rightarrow\Re$, denoted as $c_i^n(s;t_i)$. The objective is to select an outcome that minimizes the social cost.  
Let $\SC^n(s;t)$ denote the social cost at strategy profile $s$, i.e.,
\begin{equation}
\SC^n(s;t) = \sum_{i=1}^n c_i^n(s; t_i).
\end{equation}
The optimal cost for type profile $t$ will be denoted by $\opt^n(t)$; i.e., 
\begin{equation}
\opt^n(t) = \min_{s\in S^n}\SC^n(s;t)
\end{equation}

For a fixed game $G^n$, we can imagine a Bayesian setting in which player types are drawn independently from distributions; that is, for each player $i$ there is a distribution $\F_i$ over $T_i$, and we think of $t_i$ as being drawn independently from $\F_i$.  A strategy function for agent $i$ is a (possibly randomized) mapping from $T_i$ to $S_i$, which we think of as a specification of the action to use given a type.  A $\BCCE$ is a profile of strategy functions such that no single agent can decrease her expected cost (over randomization in types and strategies) by unilaterally modifying her strategy.  Formally, the profile of strategy functions $\mu$ is a $\BCCE$ if for all $i$, all types $t_i$, and all alternative strategies $s_i' \in S_i$, we have
\[ \E_{t_{-i} \sim \F_{-i}}[c_i^n(\mu_i(t_i), \mu_{-i}(t_{-i});t_i)] \leq \E_{t_{-i} \sim \F_{-i}}[c_i^n(s_i', \mu_{-i}(t_{-i});t_i)]. \]
Note that the non-Bayesian notion of Nash Equilibrium is a special case of the above, in which every distribution $\F_i$ is a point mass.

The Price of Anarchy (PoA) of game $G^n$ is the worst-case ratio between the expected optimal cost and the expected social cost at equilibrium, over all type distributions and all $\BCCE$. Formally, the Bayes-Nash Price of Anarchy of a game $G^n$ is 
\[ \BCCE\mbox{-}\poa^n = \max_{\F} \max_{s}\frac{\E_{t \sim \F}[\SC^n(s(t);t)]}{\E_{t \sim \F}[\opt^n(t)]} \]
where the maximum over strategies $s$ is taken over all $\BCCE$ for distribution profile $\F$. 

\paragraph{Sequence of games.}

We will typically work with a sequence of games $\Gn_{n=1}^{\infty}$, which will intuitively correspond to the original game growing large. When clear in the context, we will denote the sequence by $\Gn$ for the sake of brevity.

\paragraph{Smoothness in the large.}
For finite games, Roughgarden \cite{Roughgarden2009} introduced the notion of {\it smoothness} as a method for bounding inefficiency of equilibria.  The smoothness approach proceeds by exploring specific deviations, instead of characterizing the (potentially complex) structure of equilibria. We extend the notion of smoothness to large games.  Intuitively, a sequence of games is said to be \emph{smooth in the large} if there is a sequence of strategy deviations such that for any strategy profile, as the game grows, the total cost under the proposed deviations from the strategy profile minus the cost at that profile itself does not exceed much the optimal cost.

\begin{defn}[Smooth in the large] A sequence of cost-minimization games $\Gn$ is $(\lambda,\mu)$-smooth in the large if for any $\epsilon>0$, there exists $n(\epsilon)<\infty$, such that for any $n>n(\epsilon)$, for any $t^n\in T^n$, for each $i\in [n]$, there exists a strategy $s_i^{*,n}\in S_i(t_i^n)$, such that for any $w^n\in T^n$ and $s^n\in S^n(w^n)$:
\begin{equation}
\sum_{i=1}^{n} c_i^n(s_{i}^{*,n},s_{-i}^n;t_i^n) \leq \lambda (1+\epsilon) \opt^n(t^n)+\mu\cdot SC^n(s^n;w^n)
\end{equation}
\end{defn}

The following theorem shows that if a sequence of cost-minimization games is $(\lambda,\mu)$-smooth in the large, for some $\lambda \geq 1$ and $\mu < 1$, then its price of anarchy converges to $\frac{\lambda}{1-\mu}$. Moreover, it implies that for any sufficiently large but finite market of size $n$ the price of anarchy of all Bayes-Nash equilibria is at most a $1+\epsilon(n)$ multiplicative factor away from the limit price of anarchy, where the rate of convergence of $\epsilon(n)$ to $0$ is application specific and can be derived from the proof of smoothness in the large.

\begin{theorem} If a sequence of games is $(\lambda,\mu)$-smooth in the large then
\begin{equation*}
\lim\sup_{n\rightarrow \infty} \BCCE\text{-}\poa^n\leq \frac{\lambda}{1-\mu}.
\end{equation*}
I.e., for any $\epsilon$ there exists a market size $n(\epsilon)$ such that for any $n\geq n(\epsilon)$, every Bayes-Nash equilibrium of the game $G^n$ with type distributions $\F_1\times \ldots\times \F_n$ has expected social cost at most $(1+\epsilon)\frac{\lambda}{1-\mu}$ times the expected optimal cost.
\end{theorem}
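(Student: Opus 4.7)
The plan is to mirror the proof of the welfare-maximization version (Theorem~3.2) by reducing to the standard Bayesian smoothness framework of \cite{Roughgarden2012,Syrgkanis2013}, adapted to cost minimization. Fix $\epsilon > 0$. By the definition of $(\lambda,\mu)$-smoothness in the large for cost games, there exists $n(\epsilon) < \infty$ such that for all $n > n(\epsilon)$, for every type profile $t^n \in T^n$ and every player $i$ there is a deviation $s_i^{*,n}(t^n) \in S_i(t_i^n)$ with
\[
\sum_{i=1}^n c_i^n\bigl(s_i^{*,n}(t^n), s_{-i}^n; t_i^n\bigr) \leq \lambda(1+\epsilon)\opt^n(t^n) + \mu \cdot SC^n(s^n; w^n)
\]
for every $w^n \in T^n$ and $s^n \in S^n(w^n)$. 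This is precisely the complete-information $(\lambda(1+\epsilon),\mu)$-smoothness condition for cost games.

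The next step is the standard Bayesian extension via the resampling trick. Fix a $\BCCE$ strategy profile $\sigma$ for $G^n$ under type distribution $\F = \F_1 \times \cdots \times \F_n$. For each player $i$ with realized type $t_i$, consider the unilateral deviation: sample $\hat{t}_{-i} \sim \F_{-i}$ independently of $t_{-i}$, and play $s_i^{*,n}(t_i, \hat{t}_{-i})$. By the $\BCCE$ inequality,
\[
\E_{t_{-i}}\bigl[c_i^n(\sigma(t); t_i)\bigr] \leq \E_{t_{-i},\hat{t}_{-i}}\bigl[c_i^n(s_i^{*,n}(t_i,\hat{t}_{-i}), \sigma_{-i}(t_{-i}); t_i)\bigr].
\]
Taking expectations over $t_i \sim \F_i$, summing over $i$, and relabeling $\hat{t} := (t_i, \hat{t}_{-i})$ (which is distributed as $\F$ and independent of $t$), we obtain
\[
\E_t[SC^n(\sigma(t); t)] \leq \E_{\hat{t}, t}\left[\sum_{i=1}^n c_i^n\bigl(s_i^{*,n}(\hat{t}), \sigma_{-i}(t_{-i}); \hat{t}_i\bigr)\right].
\]

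Finally, I would apply the smoothness-in-the-large inequality pointwise, plugging in $t^n = \hat{t}$, $w^n = t$, and $s^n = \sigma(t)$, to bound the right-hand side by $\lambda(1+\epsilon)\E[\opt^n(\hat{t})] + \mu \E[SC^n(\sigma(t); t)]$. Rearranging gives
\[
\E_t[SC^n(\sigma(t); t)] \leq \frac{\lambda(1+\epsilon)}{1-\mu}\,\E[\opt^n(t)],
\]
so $\BCCE\text{-}\poa^n \leq \lambda(1+\epsilon)/(1-\mu)$ whenever $n > n(\epsilon)$. Letting $\epsilon \to 0$ yields the claimed $\limsup$ bound, and the quantitative ``$1 + \epsilon(n)$'' statement follows by making the dependence of $n(\epsilon)$ explicit from the application at hand.

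There is no serious obstacle: the heart of the argument is the type-resampling step, which is standard in Bayesian smoothness proofs. The only subtlety worth highlighting is that the smoothness-in-the-large definition is written for a single pair $(t^n, w^n)$ of type profiles, which is exactly what the resampling trick produces --- so the formulation is already tailored to survive the extension to $\BCCE$ and independent type priors without any additional symmetry assumption.
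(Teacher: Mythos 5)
Your proof is correct and follows essentially the same route as the paper: observe that for $n \geq n(\epsilon)$ the game $G^n$ is $(\lambda(1+\epsilon),\mu)$-smooth in the standard sense, and then obtain the Bayes--Nash bound $\frac{\lambda(1+\epsilon)}{1-\mu}$, the only difference being that the paper invokes the extension theorems of \cite{Roughgarden2012,Syrgkanis2012} as a black box while you unfold the type-resampling argument explicitly (and correctly, since each summand only needs $\hat{t}$ independent of $t_{-i}$).
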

\begin{proof}
By $(\lambda,\mu)$-smoothness in the large, we have that for any $\epsilon$ there exists a market size $n(\epsilon)$ such that for any $n\geq n(\epsilon)$
the game $G^n$ is a $\left(\lambda(1+\epsilon),\mu\right)$-smooth game (in the sense of \cite{Roughgarden2012,Syrgkanis2012}). Therefore, by the results in \cite{Roughgarden2012,Syrgkanis2012}, the $\BCCE$-$\poa^n$ is at most $(1+\epsilon)\frac{\lambda}{1-\mu}$. The assertion of the theorem then follows.
\end{proof}

\subsection{Main Technique: Smooth Approximate Cost Functions}
We present the notion of a $(\lambda,\mu)$-smooth approximate cost function sequence with respect to a sequence of normal form games $\Gn$.

\begin{defn}[Smooth approximate cost] Let $C_i^n: S^n\times T_i\rightarrow \R_+$ be a cost function for player $i \in [n]$, and let $C^n=(C_1^n,\ldots,C_n^n)$ be a vector of cost functions. A sequence $\Seq{C}$ is a \emph{sequence of $(\lambda,\mu)$-smooth approximate cost functions for the sequence of games $\Seq{G}$}  if the following two properties are satisfied:
\begin{enumerate}
\item {\bf(Approximation)}\label{prop:approximation} The approximate cost $C_i^n$ converges to the true cost $c_i^n$ uniformly over $s^n\in S^n$ and $t_i\in T_i$. I.e., for any $\epsilon$, there exists $n(\epsilon)<\infty$, such that for any $n>n(\epsilon)$, for any $i\in [n]$ and $t_i\in T_i$, and for any $s^n\in S^n$ with $s_i^n\in S_i(t_i)$:
\begin{equation}
\left|c_i^n(s^n; t_i) -C_i^n(s^n; t_i)\right|<\epsilon.
\end{equation}
\item {\bf(Smoothness)} For each game $G^n$ in the sequence, the approximate cost satisfies the following $(\lambda,\mu)$-smoothness property with respect to $G^n$:
For any $n$, for any $t\in T^n$, for any $i\in [n]$, there exists a strategy $s_i^{*,n}\in S_i$, such that for any type profile $w^n\in T^n$ and any strategy profile $s^n\in S^n(w^n)$:
\begin{equation}
\sum_{i=1}^{n} C_i^n(s_i^{*},s_{-i}^n;t_i) \leq \lambda \opt^n(t) +\mu\cdot \SC^n(s^n;w^n)
\end{equation}
\end{enumerate}
\end{defn}

We show that if if a sequence of games admits a $(\lambda,\mu)$-smooth approximate cost, and if its optimal social cost increases at the same asymptotic rate as the number of players, then that sequence of games is $(\lambda,\mu)$-smooth in the large.
\begin{theorem}
If a sequence $\Gn$ of cost-minimization games admits $(\lambda,\mu)$-smooth approximate cost functions, and $\opt^n(t)=\Omega(n)$, then the game sequence is $(\lambda,\mu)$-smooth in the large.
\end{theorem}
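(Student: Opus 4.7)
The plan is to mimic the argument given for Theorem~\ref{thm:approx-to-large} (the mechanism version), with inequalities reversed since we are in the cost-minimization setting. The skeleton is: invoke smoothness of the approximate cost to get the desired inequality up to the approximate cost; convert to true cost at an additive loss of $n\epsilon'$ via the uniform approximation bound; then absorb $n\epsilon'$ into a small multiplicative slack on $\opt^n$ using $\opt^n(t)=\Omega(n)$.

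Concretely, fix $\epsilon>0$. Let $\rho>0$ be such that $\opt^n(t)\geq \rho n$ for every $t\in T^n$ and all sufficiently large $n$; this is where the hypothesis $\opt^n(t)=\Omega(n)$ gets used. Choose a slack parameter $\epsilon' = \rho\lambda\epsilon$ and apply the approximation property to obtain $n_1(\epsilon')$ such that $|c_i^n(s^n;t_i)-C_i^n(s^n;t_i)|<\epsilon'$ for all $n>n_1(\epsilon')$, all $i$, all $t_i\in T_i$, and all $s^n\in S^n$ with $s_i^n\in S_i(t_i)$. Let $n(\epsilon)$ be the maximum of $n_1(\epsilon')$ and whatever threshold is needed for $\opt^n(t)\geq \rho n$. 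For $n>n(\epsilon)$ and any $t^n\in T^n$, let $s_i^{*,n}\in S_i(t_i^n)$ be the strategies guaranteed by the smoothness property of $C_i^n$.

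For any $w^n\in T^n$ and any $s^n\in S^n(w^n)$, combining the two properties gives
\begin{align*}
\sum_{i=1}^n c_i^n(s_i^{*,n},s_{-i}^n;t_i^n)
&\leq \sum_{i=1}^n C_i^n(s_i^{*,n},s_{-i}^n;t_i^n) + n\epsilon' \\
&\leq \lambda\,\opt^n(t^n) + \mu\,\SC^n(s^n;w^n) + n\epsilon'.
\end{align*}
Using $n\leq \opt^n(t^n)/\rho$ and the choice $\epsilon' = \rho\lambda\epsilon$, the additive term $n\epsilon'$ is bounded by $\lambda\epsilon\,\opt^n(t^n)$, yielding
\begin{equation*}
\sum_{i=1}^n c_i^n(s_i^{*,n},s_{-i}^n;t_i^n) \;\leq\; \lambda(1+\epsilon)\,\opt^n(t^n) + \mu\,\SC^n(s^n;w^n),
\end{equation*}
which is exactly the definition of $(\lambda,\mu)$-smooth in the large.

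There is no genuine obstacle in this argument: both the smoothness and approximation properties are hypotheses, and the only delicate point is bookkeeping, namely choosing $\epsilon'$ small enough relative to $\rho$ and $\lambda$ so that the $n\epsilon'$ error gets absorbed into a $(1+\epsilon)$-factor on $\opt^n(t^n)$. One should also verify that the chosen deviations $s_i^{*,n}$ lie in $S_i(t_i^n)$, which is automatic since the smoothness property of the approximate cost is stated with this admissibility; the hypothesis type profile $w^n$ only governs admissibility of the base strategies $s_{-i}^n$. No other structural ingredient beyond the definitions is required.
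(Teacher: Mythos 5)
Your proof is correct and follows essentially the same route as the paper's: apply the smoothness property of the approximate cost, pass to the true cost at an additive $n\epsilon'$ loss via uniform approximation, and absorb that loss into a $(1+\epsilon)$ multiplicative slack on $\opt^n$ using $\opt^n(t)=\Omega(n)$. The only difference is cosmetic, in that you fix the slack parameter $\epsilon'=\rho\lambda\epsilon$ up front, whereas the paper chooses it at the end.
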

\begin{proof}
Since the game admits $(\lambda,\mu)$-smooth approximate cost functions $\Seq{C}$, we have that for any $n$ and $t^n\in T^n$ there exists strategies $s_i^{*,n}$ for each $i\in [n]$ such that, for any $w^n\in T^n$ and $s^n\in S^n(w^n)$,
\begin{equation*}
\sum_{i=1}^{n} C_i^n(s_i^{*,n},s_{-i}^n;t_i^n) \leq \lambda \opt^n(t^n) +\mu\cdot \SC^n(s^n;w^n).
\end{equation*}
By the approximation property of $C_i^n$ we have that for any $\epsilon$, there exists $n(\epsilon)<\infty$ such that for any $n>n(\epsilon)$: $c_i^n(s;t_i)\leq C_i^n(s;t_i)+\epsilon$ for any $t_i\in T_i$ and $s^n\in S^n$ with $s_i\in S_i(t_i)$. Thus:
\begin{equation*}
\sum_{i=1}^{n} c_i^n(s_i^{*,n},s_{-i}^n;t_i^n) \leq \lambda \opt^n(t^n) +\mu\cdot \SC^n(s^n;w^n) + n\cdot \epsilon.
\end{equation*}
Since $\opt^n(t)=\Omega(n)$, we can write $\opt^n(t^n)\geq \rho \cdot n$ for some $\rho>0$ and for sufficiently large $n$. Thus we get:
\begin{equation*}
\sum_{i=1}^{n} c_i^n(s_i^{*,n},s_{-i}^n;t_i^n) \leq \left(\lambda+\frac{\epsilon}{\rho}\right) \opt^n(t^n) +\mu\cdot \SC^n(s^n;w^n).
\end{equation*}
Therefore, for any $\delta>0$, we can pick $\epsilon$ appropriately small, such that $\lambda+\frac{\epsilon}{\rho}\leq \lambda(1+\delta)$, which would then yield the theorem.
\end{proof}

\section{Congestion Games Omitted Proofs}\label{sec:congestion-games}\label{app:congestion}



%
%
%
\begin{proofof}{Lemma \ref{lem:approx-cost}}
Assuming that the cost functions are continuous, then we immediately get that $C_i^n$ satisfies the \emph{approximation} 
requirement of a smooth approximation,
since
\[\lim_{n\rightarrow \infty} \|c_i^n(s;t_i) - C_i^n(s;t_i)\| \leq  \lim_{n\rightarrow \infty} \sum_{e\in s_i}\left\|c_e\left(\frac{n_e(s_{-i})}{n}+\frac{1}{n}\right)-c_e\left(\frac{n_e(s_{-i})}{n}\right)\right\| = 0.\]

Moreover, the following calculation shows that if the edge cost-functions satisfy the property that 
\begin{equation}
x\cdot c_e(y)\leq \lambda\cdot x\cdot c_e(x)+\mu\cdot y\cdot c_e(y)
\end{equation} 
for all $x$ and $y$, then $C_i^n$ also satisfies the \emph{smoothness} requirement of a smooth approximation.  To see this, take $s_i^{*,n}$ to be player $i$'s strategy in the cost-minimizing strategy profile for game $G^n$; we then have
\begin{align*}
\sum_i C_i^n(s_i^{*,n},s_{-i};t_i) =~& \sum_{i=1}^{n}\sum_{e\in s_i^{*,n}}  c_e\left(\frac{n_e(s_{-i})}{n}\right)
\leq  \sum_{i=1}^{n}\sum_{e\in s_i^{*,n}}  c_e\left(\frac{n_e(s)}{n}\right)\\
 =~& \sum_{e\in [m]} n_e(s^{*,n})  c_e\left(\frac{n_e(s)}{n}\right) \\
\leq~& \lambda \sum_{e\in [m]} n_e(s^{*,n})  c_e\left(\frac{n_e(s^{*,n})}{n}\right)  + \mu \sum_{e\in [m]}n_e(s) c_e\left(\frac{n_e(s)}{n}\right) \\
=~& \lambda \cdot \opt^n(t) + \mu \cdot \SC^n(s;w).
\end{align*}
We conclude that $C_i^n$ is a $(\lambda,\mu)$-smooth approximate cost function, as claimed.
\end{proofof}

%
%

\end{appendix}
\end{document}